\newtheorem{thm}{Theorem}
\newtheorem{cor}[thm]{Corollary}
\newtheorem{lem}[thm]{Lemma}
\newtheorem{prop}[thm]{Proposition}
\newtheorem{defn}[thm]{Definition}
\newtheorem*{rep@theorem}{\rep@title}
\newcommand{\newreptheorem}[2]{%
\newenvironment{rep#1}[1]{%
 \def\rep@title{#2 \ref{##1}}%
 \begin{rep@theorem}}%
 {\end{rep@theorem}}}
\renewcommand{\thesection}{\arabic{section}}
\renewcommand{\thesubsection}{\thesection.\arabic{subsection}}
\renewcommand{\thesubsubsection}{\thesubsection.\arabic{subsubsection}}
\renewcommand{\p@subsection}{}
\renewcommand{\p@subsubsection}{}
\DeclarePairedDelimiter{\floor}{\lfloor}{\rfloor}
\DeclarePairedDelimiter{\ket}{\lvert}{\rangle}
\DeclarePairedDelimiter{\bra}{\langle}{\rvert}
\newcommand{\lr}[1]{\left( #1\right)}
\newcommand{\mlr}[1]{\left[ #1\right]}
\newcommand{\glr}[1]{\left\{ #1\right\}}
\newcommand{\alr}[1]{\left\langle #1\right\rangle}
\newcommand{\norm}[1]{\left\lVert#1\right\rVert}
\newcommand{\abs}[1]{\left\lvert#1\right\rvert}
\newcommand{\rarrow}{\quad \Rightarrow \quad}
\newcommand{\ee}{\mathrm{e}}
\newcommand{\dd}{\mathrm{d}}
\newcommand{\OO}{\mathcal{O}}
\newcommand{\cA}{\mathcal{A}}
\newcommand{\PP}{\mathbb{P}}
\renewcommand\onecolumngrid{
\do@columngrid{one}{\@ne}%
\def\set@footnotewidth{\onecolumngrid}
\def\footnoterule{\kern-6pt\hrule width 1.5in\kern6pt}%
}
\renewcommand\twocolumngrid{
        \def\footnoterule{
        \dimen@\skip\footins\divide\dimen@\thr@@
        \kern-\dimen@\hrule width.5in\kern\dimen@}
        \do@columngrid{mlt}{\tw@}
}%
\newcommand{\where}{\quad \mathrm{where}\quad}
\newcommand{\diam}{\mathrm{diam}}
\newcommand{\bs}{\mathbf{s}}
\newcommand{\chec}{C}
\newcommand{\tv}{\widetilde{v}}
\newcommand{\td}{\widetilde{\mathbbm{d}}}
\newcommand{\revise}[1]{#1}
\newcommand{\comment}[1]{}
\begin{document}
\title{Low-density parity-check codes as stable phases of quantum matter}

\author{Chao Yin}
\email{chao.yin@colorado.edu}
\affiliation{Department of Physics and Center for Theory of Quantum Matter, University of Colorado, Boulder, CO 80309, USA}

\author{Andrew Lucas}
\email{andrew.j.lucas@colorado.edu}
\affiliation{Department of Physics and Center for Theory of Quantum Matter, University of Colorado, Boulder, CO 80309, USA}

\date{\today}

\begin{abstract}

Phases of matter with robust ground-state degeneracy, such as the quantum toric code, are known to be capable of robust quantum information storage. Here, we address the converse question: given a quantum error correcting code, when does it define a stable gapped quantum phase of matter, whose ground state degeneracy is robust against perturbations in the thermodynamic limit?  We prove that a low-density parity-check (LDPC) code defines such a phase, robust against all few-body perturbations, if its code distance grows at least logarithmically in the number of degrees of freedom, and it exhibits ``check soundness".  Many constant-rate quantum LDPC expander codes have such properties, and define stable phases of matter with a constant zero-temperature entropy density, violating the third law of thermodynamics.  Our results also show that quantum toric code phases are robust to spatially nonlocal few-body perturbations.  Similarly,  phases of matter defined by classical codes are stable against symmetric perturbations.  In the classical setting, we present improved locality bounds on the quasiadiabatic evolution operator between two nearby states in the same code phase.
\end{abstract}

\maketitle

\section{Introduction}
The past two decades have seen a revolution in our understanding of quantum phases of matter \cite{QI_meet_QM,phase_localU10,phase_LRB06,topo_Hastings,phase_adaptive21,phase_adaptive_hierarchy23,phase_nonlocal24}.    

First and foremost, we now have a useful definition of a gapped phase of quantum matter: two states $|\psi_1\rangle$ and $|\psi_2\rangle$ are in the same phase if there is some unitary $U$, obtained by finite-time evolution with a (quasi)local Hamiltonian, such that $|\psi_1\rangle = U|\psi_2\rangle$.  
Notice that this is very different from the textbook thermodynamic definition of a phase of matter, characterized by passing from one system to another without encountering any thermodynamic singularities in, e.g., a free energy density.  

Secondly, we have discovered new kinds of quantum phases that are completely robust to all local perturbations. A simple example is the quantum toric code \cite{toric_code03}, a simple model exhibiting local topological order \cite{topo_Hastings,michalakis2013stability}.  Local topological order implies that all ground states are indistinguishable to any local operators, in stark contrast to the Landau paradigm for phases of matter, wherein \emph{local} order parameters (such as the magnetization, for a ferromagnetic phase) can be used to diagnose a symmetry breaking pattern. 

The quantum toric code both exhibits local topological order, and also highlights why the thermodynamic definition of a phase falls short: it is thermodynamically indistinguishable from a trivial paramagnet \cite{Brown_2016, Weinstein_2019} yet corresponds to a topologically ordered \emph{zero temperature} phase of matter. It also represents the simplest example of both a topological quantum phase, \emph{and} a quantum error-correcting code robust against finitely many errors, in the thermodynamic limit.  Indeed, the ability of the toric code to correct against errors is a simple consequence of local topological order, as no local perturbation can transform one ground state to another, i.e. induce a logical error on quantum information protected in the ground state(s).  Many experiments \cite{TC_exp22,TC_IBM23,TC_google23,TC_google24,Ryd_Lukin24,TC4d_Quantinuum24} have aimed to realize scalable error correction of a logical qubit using the toric (or related surface) code.  

Since the toric code only stores a handful of logical qubits in the thermodynamic limit, however, many researchers are also looking beyond the toric code for a more general family of low-density parity check (LDPC) codes.  LDPC codes, like the toric code, can be defined in terms of quantum Hamiltonians with few-body interactions.  Like the toric code, the ground states of these Hamiltonians represent the entangled logical codewords; unlike the toric code, the interactions are not local in any finite spatial dimension.  Instead, LDPC codes are ``$q$-local": the Hamiltonian is a sum of products of $q$ or fewer Pauli matrices acting on $n$ qubits, but there are no constraints on which $q$ qubits can interact in any given term.

There has been a flurry of recent activity on discovering interesting quantum LDPC codes \cite{HGP,TP_codes,BP_codes,LP_codes,Panteleev_2022,qTanner_codes,dinur2022good}, understanding the physics of these new quantum codes \cite{Hong:2024vlr,Rakovszky:2023fng,Rakovszky:2024iks}, and adapting them in real experiments \cite{LRESC_Yifan,Hong:2024fsg,HGP_Rydberg24,HGP_gate_Ryd24}.  At low temperature, some of these codes represent highly entangled phases of matter where all low-energy states lie in a nontrivial phase of matter \cite{Anshu_2023} (verifying the ``NLTS conjecture" of \cite{freedman2013}).  These low energy states are capable of quantum self-correction \cite{Hong:2024vlr}.  It has further been conjectured that the good LDPC codes, defined on expander graphs, form well-defined phases of matter \cite{stability_maryland,stability_Khemani24}; a proof for non-expander codes is in \cite{stability_maryland}.  These would represent unusual phases, with a completely robust violation of the third law of thermodynamics, with an extensive ground state degeneracy in the thermodynamic limit, as emphasized in \cite{stability_maryland}.  However, because LDPC codes are only $q$-local -- not spatially local -- the mathematical tools which have, until now, been able to prove the stability of topological order, cannot be applied.

Here, we prove that many quantum LDPC expander codes indeed form stable phases of matter to $q$-local perturbations.  In particular, we prove that (\emph{i}) the gap between the ground states, which encode logical information in the code, and excited states, does not close for arbitrary small perturbations; (\emph{ii}) the splitting of the energy levels between these code states is stretched-exponentially small in $n$ (for many codes of interest); (\emph{iii}) a quasilocal unitary can rotate from the unperturbed to the perturbed ground state sector efficiently.  These ideas are summarized in Fig.~\ref{fig:spec}.  Our proof was inspired by the techniques we developed in a recent theory of quantum metastable states and slow false vacuum decay \cite{our_metastable}. We identify a sufficient condition on a quantum code for stability, coined ``check soundness", which is obeyed by finite-dimensional toric codes, hypergraph product expander codes \cite{HGP}, and the ``good" constant-rate quantum LDPC codes \cite{Panteleev_2022,qTanner_codes,dinur2022good}.  The latter two constant-rate quantum codes represent a truly ``infinite-dimensional" phase of matter without a third law of thermodynamics, whose stability to perturbations is established here.  

Even in finite dimensional models, our methods lead to a number of new results.    We establish the stability of quantum toric code phases to $q$-local perturbations, which is consistent with the nontrivial circuit depth lower bounds \cite{phase_LRB06,code_circuit_lowerbound18,phase_nonlocal24} required to prepare such topological states. As in \cite{bravyi2011short}, it is also simple to extend our results to study the stability of phases of matter defined by classical codes, in the presence of symmetric perturbations.  An important application of this result concerns the construction of quasiadiabatic evolution operators that connect two different states in the same phase.  By avoiding spectral filtering \cite{quasiadiabatic05,bachmann2012automorphic}, we are able to find very strong bounds on the locality of these operators, one application of which will be detailed elsewhere \cite{volumeLaw_disorder24}.

The remainder of the main text gives a high-level and non-rigorous introduction to quantum LDPC codes, our main theorem, and some implications of the result.  Appendices contain technical definitions along with our main theorem's statement and proof.

\begin{figure}
    \centering
    \includegraphics[width=0.5\linewidth]{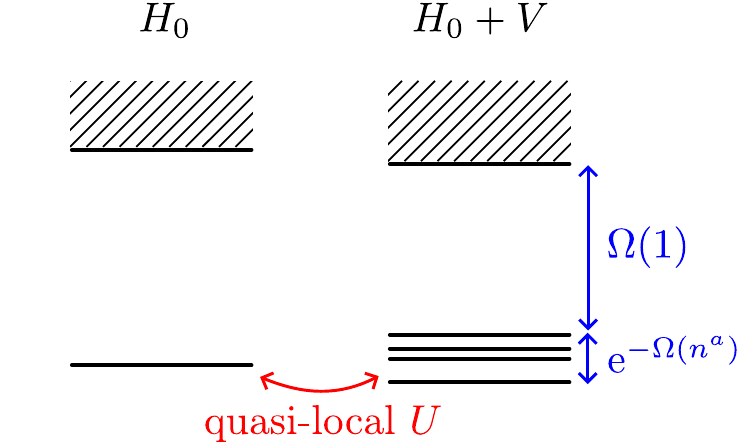}
    \caption{We prove that many quantum LDPC codes are stable against perturbations, in the sense that (\emph{i}) the gap between the encoded ground states and excited states remains $\Omega(1)$, (\emph{ii}) the energy splitting of the ground states is exponentially small, (\emph{iii}) the perturbed ground states are connected to the unperturbed ones by a sufficiently local unitary. }
    \label{fig:spec}
\end{figure}

\section{Error correcting codes}
We first provide a high-level introduction to (quantum) error-correcting codes, from a physicist's point of view.

The simplest kind of error-correcting code -- the repetition code -- can be understood quite readily as the classic Ising model from statistical physics.  On a $d$-dimensional lattice, we have Hamiltonian \begin{equation}\label{eq:IsingH0}
    H_0 = -\sum_{i\sim j} Z_i Z_j
\end{equation}
where $i\sim j$ denotes that $i,j$ are neighbors on the lattice, and $Z_i \in \lbrace \pm 1\rbrace$ are binary variables.  $H_0$ has a doubly-degenerate ground state, corresponding to either all $Z_i = +1$ or all $Z_i = -1$.  These two ground states encode a logical classical bit.  In $d>1$ dimensions, even when coupling $H_0$ to a thermal bath, the ferromagnetic low-temperature phase of the Ising model ensures that the time it takes to move from one ground state to another is exponentially large in the system size \cite{Thomas1989}; this is the basis of magnetic memory.   We say that the Ising model stores $k=1$ logical bit in $n$ physical bits.   To get from one logical codeword to another, we must flip every single bit, so the code distance $d=n$:  if we have $\frac{1}{2}(d-1)$ errors in the system, then an optimal decoder could correctly return us to our original state.  In fact, thinking of $H_0$ as a quantum Hamiltonian acting on a Hilbert space of $n$ qubits, with Pauli matrices \begin{equation}
    Z = \left(\begin{array}{cc} 1 &\ 0 \\ 0 &\ -1 \end{array}\right), \;\;\;\;  X = \left(\begin{array}{cc} 0 &\ 1  \\ 1 &\  0  \end{array}\right),
\end{equation}
we can interpret code distance $d$ as the size of the smallest logical operator $X_1\cdots X_n$ that converts one code state $|00\cdots 0\rangle$ to the other: $|11\cdots 1\rangle$.

The simplest quantum error correcting code whose distance $d$ scales with $n$ is the two-dimensional toric (or surface) code \cite{toric_code03}.   Consider a two-dimensional square lattice, with qubits placed on the edges $e$ of the graph.  The codewords are the ground states of the Hamiltonian \begin{equation}
    H_0 = -\sum_{\text{vertex } v} \prod_{e\sim v} X_e -\sum_{\text{face } f} \prod_{e\sim f} Z_e = -\sum_{\text{vertex } v} A_v -\sum_{\text{face } f} B_f, \label{eq:toriccodeH}
\end{equation}
where $e\sim v$ means that edge $e$ is adjacent to vertex $v$, and $e\sim f$ means that edge $e$ lies on the boundary of face (plaquette) $f$.  We have also introduced the $A_v$ and $B_f$ notation, as is common in the literature.  A simple calculation shows that \begin{equation}
    [A_v, B_f] = 0  \label{eq:toriccommutator}
\end{equation}
for all $v$ and $f$, because these operators always share an even number of edges in common, and thus an even number of Pauli $X$ and $Z$ on the same edges (whose product therefore commutes).   \eqref{eq:toriccommutator} implies that the ground state of $H_0$ is the simultaneous $+1$ eigenspace of the operators $A_v$ and $B_f$, which are called the parity-checks of the code.  They are analogous to $Z_iZ_j$ in the Ising model, but due to the presence of both Pauli $X$ and $Z$, imply the existence of large entanglement in the ground states of $H_0$. The ground state subspace is also the $+1$ eigenspace of any stabilizer, i.e. a Pauli string that is a product of the checks $A_v,B_f$. Logical operators for the toric code become products of Pauli $X$ and $Z$ that commute with all stabiliers, and correspond to string operators across the system.

In a pioneering set of works, \cite{topo_Hastings,bravyi2011short} showed that for any $V$ which is a sum of small terms (but which may be thermodynamically extensive), the toric code Hamiltonian $H_0$ is robust against perturbations: \begin{equation}
    H = H_0 + V
\end{equation}
lies in the same phase as $H_0$, with the same ground state degeneracy (up to exponentially small corrections in system size), and with a quasilocal unitary that can rotate from the ground states of $H_0$ to the ground states of $H$.   To find such an absolutely stable phase of matter is quite non-trivial at first glance: after all, the ground state degeneracy of the Ising $H_0$ is \emph{not} stable in this way: taking \begin{equation}
    V = \frac{1}{n}\sum_i Z_i \label{eq:falsevacuumperturbation}
\end{equation}
already completely breaks the ground state degeneracy.   The absence of stability of the Ising model to these longitudinal fields is quite well-known, and related to the physics of the false vacuum \cite{our_metastable}.  

The language of error correction helps to explain the discrepancy.  The toric code is a \emph{quantum code}:  acting on a codeword of the toric code with $\ll d$ errors, either $X$-type or $Z$-type, can be detected and corrected by identifying the locations of the flipped parity-checks $A_v$ and $B_f$.   Viewing the classical Ising model as a quantum code, we see that it has a large distance $d_X=n$ for $X$ errors, but a tiny distance $d_Z=1$ for $Z$-errors, as \begin{equation}
    Z_1 \frac{|00\cdots 0\rangle + |11\cdots 1\rangle }{\sqrt{2}} = \frac{|00\cdots 0\rangle - |11\cdots 1\rangle }{\sqrt{2}}. \label{eq:isingxonly}
\end{equation}
If, therefore, we study the robustness of the Ising model $H_0$ to peturbations $V$ that are products of $X$ operators, we might expect a similar stability to the toric code, and this has indeed been shown rigorously \cite{bravyi2011short,longrange_stab23}.  The intuitive result is much older, and dates back to the classic solution of the transverse field Ising model, which undergoes a quantum phase transition out of the ferromagnetic phase only for a finite perturbation strength in $V$ \cite{sachdev_book}.

It thus appears that whenever we have a quantum Hamiltonian $H_0$, which can be interpreted as a quantum error correcting code, we might expect a stable phase of matter.  Along these lines, we might ask what happens when we study more exotic codes than those which correspond to geometrically local Hamiltonians $H_0$.  The most famous example of such a code is an LDPC code.  The details of how to construct $H_0$ for an LDPC code will not be essential to understanding our main results, but at a high-level, one writes \begin{equation}
    H_0 = -\sum_{\text{parity-check } C_X} \prod_{i\in C_X} X_i -\sum_{\text{parity-check } C_Z} \prod_{i\in C_Z} Z_i,
\end{equation}
subject to (\emph{i}) the requirement that each $C_X$ and $C_Z$ commute and (\emph{ii}) the requirement that each $C_{X,Z}$ is a product of O(1) Paulis, and that each Pauli $X_i$ or $Z_i$ shows up in an O(1) number of parity-checks.\footnote{ Here the big-O notation $y=\mathrm{O}(x)$ means $y\le c x$ for a constant $c$ independent of system size. We will also use $y=\Omega(x)$ for $y\ge Cx$ with a constant $C$, and $y=\Theta(x)$ for $cx\le y\le Cx$.} Remarkably, it is known how to construct such codes which are constant-rate: in the thermodynamic limit $n\rightarrow \infty$, one finds $k=\Theta(n)$, while $d$ also grows with $n$, ideally linearly.  From an error-correcting perspective, such codes only use a constant number of redundant physical bits to encode information, yet in an encoding which is extremely robust to large errors.  It is known \cite{BPT_bound10} that such constant-rate codes (with growing distance) can only exist in the limit of infinitely-many spatial dimensions.  Mathematically, one often refers to such a code as defined on an expander graph, where two qubits share an edge if they share a common parity-check, and the resulting graph has the property that \begin{equation}
    \text{number of qubits within distance $r$ of qubit $i$} \gtrsim \exp[\alpha r], \;\;\; \text{ if } r\lesssim \log n. \label{eq:numberexpandergraph}
\end{equation}
for some positive $\alpha>0$.

The purpose of this paper is to make this intuition precise, and to prove that a quantum LDPC Hamiltonian $H_0$ indeed forms a stable phase of matter.   A classical LDPC code (with only $Z$-type checks) will similarly form a robust phase of matter to $X$-type perturbations.

\section{Check soundness: a sufficient condition for stability}
To prove the stability of an LDPC code, we must first ask whether there is any way to make a genuine code unstable. While the ``reasonable" error correcting toric code Hamiltonian in \eqref{eq:toriccodeH} indeed represents a stable phase of matter, a ``stupid" toric code can fail to have this property \cite{topo_Hastings}.  Consider modifying \eqref{eq:toriccodeH} to 
\begin{equation}\label{eq:Ising_toric}
    H_0 = -\sum_{v} A_v - B_{f_0} - \sum_{f\sim f^\prime} B_f B_{f'},
\end{equation}
where $A_v,B_f$ are the vertex and plaquette checks introduced before, $f_0$ is one ``special" face anywhere in the lattice, and faces $f\sim f^\prime$ are adjacent if and only if they share a single edge. In the modified Hamiltonian \eqref{eq:Ising_toric}, only the single plaquette term at $p_0$ remains to be a check, and the other $Z$-checks are Ising interactions among all nearest-neighbor plaquettes $f\sim f^\prime$. Although the modified $H_0$ has the same error-correcting ground states as the standard toric code, its gap is unstable to the same kind ``longitudinal field" perturbation as \eqref{eq:falsevacuumperturbation}: $V= n^{-1}\sum_f B_f$.

As proposed in \cite{topo_Hastings} and refined in \cite{michalakis2013stability}, a sufficient condition for stability in finitely many spatial dimensions is local topological order (LTO), which states that for any \emph{ball} region $S$ of qubits with sufficiently small radius, any operator $\OO$ supported on $S$ satisfies 
\begin{equation}\label{eq:local_indisting0}
    P_{\widetilde{S}} \OO P_{\widetilde{S}} = c_\OO P_{\widetilde{S}},
\end{equation}
for some number $c_\OO$, where $\widetilde{S}\supset S$ is a slightly larger region that includes all checks touching $S$, and $P_{\widetilde{S}}$ is a local ground state projector that projects onto the simultaneous $+1$ eigenspace of all checks inside $\widetilde{S}$. \eqref{eq:local_indisting0} is the definition of a detectable error $\mathcal{O}$ in an error-correcting code (Knill-Laflamme condition) condition by replacing $\widetilde{S}$ in the subscripts by the whole system, so \eqref{eq:local_indisting0}  can be viewed as a \emph{local} version of correctability. For stabilizer codes, \eqref{eq:local_indisting0} holds if the volume of $S$ is smaller than the code distance, and any stabilizer in $S$ can be expanded as a product of checks in $\widetilde{S}$ \cite{topo_Hastings}.   

Observe that LTO indeed rules out the Ising toric code \eqref{eq:Ising_toric}: around any ball $S$ faraway from the particular plaquette $f_0$, the local ground states $P_{\widetilde{S}}$ contain two sectors where $B_f=+1\quad (\forall f\subset \widetilde{S})$ for the first sector and $-1$ for the second. \eqref{eq:local_indisting0} is thus violated by $\OO=B_f$ that differentiates the local ground states; this operator is precisely the kind of perturbation that would make the model unstable. \eqref{eq:local_indisting0} naturally generalizes to LDPC codes, where one can still define a ball $S$ as containing all vertices of a bounded graph distance to a given vertex. This is indeed the condition used in the proof for LDPC stability on non-expander graphs  \cite{stability_maryland}. 

We now argue that LTO is not enough to get a \emph{good} bound on the stability of a phase of matter defined by an LDPC code.  (\emph{1}) Since $S$ in the local indistinguishability condition is restricted to balls, it does not control general operators that act on $>L_*+1$ sites, where $L_*$ is the maximal diameter of $S$ that satisfies \eqref{eq:local_indisting0}. Even if the original perturbation $V$ is local, such uncontrolled operators would appear at order $\sim L_*$ in perturbation series. Such operators are only suppressed by factor $\sim \ee^{-L_*}$, so that the energy splitting of the perturbed ground states is only bounded by a polynomial $1/\mathrm{poly}(n)$, as $L_*$ is bounded by the diameter of the system $\sim \log(n)$ for exponential-expanding graphs.\footnote{For subexponential-expanding graphs, \cite{stability_maryland} also only gets a quasi-polynomial bound.} This bound is much weaker than the stretched-exponential bound $\ee^{-\sqrt{n}}$ for the perturbed two-dimensional toric code \cite{topo_Hastings}.  (\emph{2}) In order for the indistinguishability diameter $L_*$ to scale as $\sim \log(n)$, the distance of the code would typically need to grow at least as a power law of $n$. Therefore the result may not hold for codes with $\mathrm{polylog}(n)$ distance, which are potentially interesting because this distance is sufficient to have a finite error-correction threshold \cite{log_distance_threshold15}. (\emph{3}) The local indistinguishability for ball regions does not directly control operators that are supported on disjoint regions. Therefore, it is not obvious how this condition can be used to prove stability under perturbations that are not geometrically local with respect to the original code. This includes important scenarios such as power-law decaying interactions in finite dimensions with sufficiently small power-law decay exponent, where the stability of the two-dimensional toric code phase has not yet been stablished.

To deal with these caveats, we propose a suitable generalization \eqref{eq:local_indisting0} to more general, possibly disconnected regions $S$, as long as its \emph{volume} (instead of diameter) is bounded by, roughly speaking, the code distance $d$. However, the naive generalization of \eqref{eq:local_indisting0} would fail: Even for the toric code, \eqref{eq:local_indisting0} does not hold for $S$ being an annulus, where a $Z$-loop operator $\OO$ in $S$ cannot be expanded to checks in $\widetilde{S}$, which is still an annulus if the ``radius'' of $S$ is larger than a constant. In this case, we need a larger region $\bar{S}$ that ``fills the hole'' of $S$, so that \eqref{eq:local_indisting0} is recovered by replacing the subscripts $\widetilde{S}$ to $\bar{S}$:
\begin{equation}\label{eq:local_indisting}
    P_{\bar{S}} \OO P_{\bar{S}} = c_\OO P_{\bar{S}}.
\end{equation}
This is because stabilizers in $S$ can be expanded into product of checks inside $\bar{S}$, as illustrated in Fig.~\ref{fig:TC}.
Note that the volume of $\bar{S}$ can be much larger than $S$; conceptually we need that the growth of $|\bar{S}|$ with respect to $|S|$ is not ``too fast", so that the condition \eqref{eq:local_indisting} is still sufficiently local.  For the toric code (Figure \ref{fig:TC}), it makes sense to identify $\bar S$ as the ``filled in" (simply connected) version of $S$, but in infinite dimensions it is not obvious how to pinpoint the ``holes'' of a region $S$ that we want $\bar{S}$ to fill. Moreover, requiring \eqref{eq:local_indisting} for any shape of $S$ would naively require one to examine exponentially many sets $S$.  

\begin{figure}
    \centering
    \includegraphics[width=0.5\linewidth]{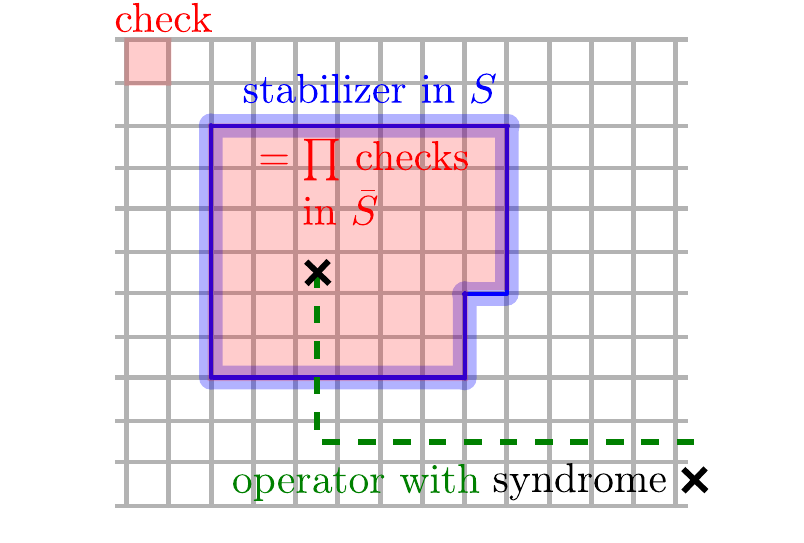}
    \caption{For the toric code, an annulus region $S$ (shaded blue region) contains a $Z$-loop stabilizer that is a product of all $Z$-checks in the shaded red region $\bar{S}$, which is the annulus $S$ with its hole filled. As a result, this $Z$-loop stabilizer may anti-commute with $X$-strings (shown by green) that has syndrome (checks that anti-commute with the operator) in $\bar{S}$ but has no syndrome in $S$. In our proof of stability for general LDPC codes, we require that $|\bar{S}|$ does not grow too quickly with $|S|$, so that in a nutshell, the stabilizer in $S$ does not anti-commutes with too many operators whose syndrome are faraway from $S$. We call this condition ``check soundness''. For example, the toric code has quadratic check soundness because $|\bar{S}|\lesssim |S|^2$ due to the area-versus-diameter relation.  }
    \label{fig:TC}
\end{figure}

Because of these challenges, we have identified a distinct condition that is sufficient for stability, which we will call \emph{check soundness}.  
\revise{
\begin{defn}\label{defn:checksoundness}
    We say a stabilizer code has $(d_{\rm c}, \mathsf{f})$-check-soundness, if any stabilizer acting on $M< d_{\rm c}$ sites is equal to a product of $\le \mathsf{f}(M)$ parity checks with an increasing function $\mathsf{f}$. 
\end{defn}

Here the weight cutoff $d_{\rm c}$ is often of the same order as code distance $d$ that scales with $n$. The more crucial parameter is the function $\mathsf{f}$, which are usually chosen as a power law $\mathsf{f}(M)\sim M^a$.\footnote{Note that here we use a different symbol $\mathsf{f}$ than the Appendix to differentiate with the previous $f$s used for faces.}
}

For the two-dimensional toric code, we have check soundness with $a=2$, because the product of parity-checks in region $R$ acts along the boundary of the region $R$, and $\text{area} \lesssim \text{perimeter}^2$ in two spatial dimensions.  

We find that $a<2$ is sufficient to prove the stability of a phase defined by a general quantum LDPC code, while in finite spatial dimensions LTO guarantees a finite $a$ that is also sufficient. Intriguingly, we show in Appendix \ref{app:constratequantum} that many constant-rate quantum LDPC codes, including ``good" codes where $d=\Theta(n)$,  have the best-possible linear check soundness $a=1$, so our stability proof applies.


\section{Main result}\label{sec:main}

For an operator $V$ that is an extensive sum of local terms, we define its local strength $\epsilon$ as, roughly speaking, \revise{the maximum $\infty$-norm ($\norm{\cdot}$) of operators in $V$ that act nontrivially on a given qubit. More precisely, \begin{equation}
    \epsilon = \max_i \sum_{S\ni i}\sum_{\bs} \norm{V_{S,\bs}}\ee^{\kappa |S|}, \label{eq:mainkappanorm}
\end{equation}
where $V=\sum_{S,\bs}V_{S,\bs}$ with $V_{S,\bs}$ acting on qubit set $S$ and having error syndrome $\mathbf{s}$ (all terms in $V_{S,\mathbf{s}}$ anticommute with the same parity-checks),  and $\kappa=\mathrm{O}(1)$ is a constant used to penalize terms of large support.  Note that having a finite $\kappa$-norm ensures that perturbations decay exponentially with \emph{volume}, not diameter, of the set $S$. See Appendix \ref{sec:prelimi} for the formal definition.} 
Such a local strength does not scale with the system size, and we desire stability against perturbations $V$ with a small but finite local strength.
We are ready to state our main result illustrated in Fig.~\ref{fig:spec}:

\begin{thm}[informal version]\label{thm:qtm}
    Let $H_0$ be the Hamiltonian of a quantum LDPC code with gap $1$ and code parameters $\llbracket n,k,d\rrbracket$. Suppose \begin{equation}
        d=\Omega(\log n),
    \end{equation} 
    and $H_0$ has better-than-quadratic check soundness $a<2$. There exists a positive constant $\epsilon_0$, such that for any perturbation $V$ whose local strength $\epsilon$ is small: $\epsilon<\epsilon_0$,
    the perturbed Hamiltonian $H:=H_0+V$ still has $2^k$ low-energy states, separated by a gap $\ge 1/2$ from the remainder of the spectrum.  These lowest-energy states are almost degenerate with small energy difference \begin{equation}\label{eq:dE=ed}
        \delta E=\epsilon\, \ee^{-\Omega(d)}.
    \end{equation}
    Moreover, the perturbed lowest-energy states are connected to the unperturbed ground states by a quasi-local unitary whose generator has small $\mathrm{O}(\epsilon)$ local strength.

    The above conclusions hold for the known constructions of good quantum LDPC codes \cite{Panteleev_2022,qTanner_codes,dinur2022good} with $d=\Theta(n)$, quantum expander codes from hypergraph product \cite{HGP,qExpander15} with $d=\Theta(\sqrt{n})$, and finite-dimensional codes with local topological order\footnote{For finite dimensions, this requires that (as we show later) we can relax the check soundness condition to be any polynomial.}.
\end{thm}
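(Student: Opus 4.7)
The plan is to construct a quasi-local unitary $U=e^{\ii A}$ with $A$ of $\mathrm{O}(\epsilon)$ local strength, such that $U H U^\dagger$ is block-diagonal with respect to the syndrome sectors of $H_0$ up to exponentially small corrections. Once this is achieved, the three advertised conclusions follow in order: the spectral gap of the rotated Hamiltonian is $\ge 1/2$ by a straightforward perturbative bound, the ground-state subspace of $H$ is $U^\dagger$ applied to the unperturbed code space (giving $2^k$ low-energy states), and $U^\dagger$ itself is the advertised quasi-local rotation that connects the unperturbed and perturbed sectors.

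I would build $A$ iteratively in the spirit of a Schrieffer--Wolff transformation. Decompose $V = \sum_{S,\bs} V_{S,\bs}$ into syndrome-preserving ($\bs = 0$) and syndrome-changing ($\bs \neq 0$) pieces, and choose $A^{(1)} = \sum_{S,\bs \neq 0} \frac{-\ii}{E_{\bs}} V_{S,\bs}$, where $E_{\bs}\ge 2$ is the energy cost of syndrome $\bs$, so that $[A^{(1)}, H_0]$ cancels the off-diagonal part of $V$ at leading order. The residual after this rotation is an $\mathrm{O}(\epsilon^2)$ perturbation built from commutators $[A^{(1)}, V]$; repeating the rotation suppresses the off-diagonal part super-geometrically. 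All bookkeeping is done in the weighted $\kappa$-norm of \eqref{eq:mainkappanorm}: the LDPC property that each qubit lies in $\mathrm{O}(1)$ parity checks bounds the number of commutator-generated terms in terms of their supports, so $\kappa$ may be decreased by an arbitrarily small amount at each scale and remain positive in the limit. This renormalization structure is the key inheritance from the metastable-state framework of \cite{our_metastable}.

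Check soundness enters at two distinct places. First, many of the syndrome-preserving terms produced during the iteration are stabilizers of bounded weight $M < d_{\rm c}$; by Definition~\ref{defn:checksoundness} they can be re-expressed as products of $\le \mathsf{f}(M) \lesssim M^a$ parity checks and thereby absorbed into scalar shifts, and the assumption $a<2$ is what guarantees that this re-expansion does not inflate the $\kappa$-norm faster than the iteration dampens it. Second, any diagonal term that acts as a nontrivial \emph{logical} operator on the code space must have weight at least $d$, since anything of smaller weight is either a scalar or a local stabilizer after using check soundness. Such a logical-acting term can first arise only at order $\sim d$ in the iterative series, yielding the splitting bound $\delta E = \epsilon\, \ee^{-\Omega(d)}$ in \eqref{eq:dE=ed}; the hypothesis $d=\Omega(\log n)$ then makes this smaller than any inverse polynomial in $n$.

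The principal obstacle will be proving convergence of the rotation sequence while performing the check-soundness re-expansion simultaneously: one must show that the proliferation of supports under repeated commutators is contracted by the shrinking prefactor $\sim \epsilon^{2^j}$, \emph{and} that the re-expansion of low-weight stabilizers into $\le \mathsf{f}(M)$ checks does not introduce intermediate-support terms that spoil the contraction. I expect this to require a multi-scale norm tracking the weight and the generation number jointly, analogous to the BHM strategy of \cite{topo_Hastings,bravyi2011short} but with check soundness replacing geometric locality; the condition $a<2$ is the natural analog of the subquadratic boundary-to-volume scaling that makes the finite-dimensional proof close. The concluding claims for good qLDPC \cite{Panteleev_2022,qTanner_codes,dinur2022good}, hypergraph-product expander codes \cite{HGP,qExpander15}, and finite-dimensional LTO codes then reduce to verifying Definition~\ref{defn:checksoundness} for each family, which is handled in a separate appendix and, in the finite-dimensional case, follows for free from the fact that geometric balls provide polynomial check soundness.
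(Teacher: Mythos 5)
Your overall skeleton — iterated Schrieffer--Wolff transformations, syndrome-and-support bookkeeping via a $\kappa$-norm, cluster-expansion control of commutator-generated supports, and check soundness as the substitute for geometric locality — is the right one, and matches the paper's Appendix~\ref{sec:main_proof}. However, there are two places where your sketch would fail to close, and one place where it mischaracterizes how check soundness is actually deployed.

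First, the central technical obstacle is not ``proliferation of supports under repeated commutators'' per se, but the accumulation of the block-diagonal part $D_m = \PP V_1 + \cdots + \PP V_{m-1}$ and the fact that $\|[D_m,A_m]\|_\kappa$ naively scales like $\epsilon v_m \cdot m$ (extensive in the iteration step $m$), which breaks at $m \sim 1/\epsilon$ regardless of how small you make $\epsilon$. This is the same killer that appears in false-vacuum decay. You cannot absorb $D_m$ into scalar shifts: $\PP V_m$ is a genuine operator that acts nontrivially on excited states, and only a single scalar $c_D$ can be subtracted. What the paper actually needs (and proves in Lemma~\ref{lem:DA<}) is that the \emph{off-block-diagonal} part $\PP^\perp[D_m,A_m]$ is much smaller than $[D_m,A_m]$ itself — the extensive contribution cancels because $D_{S',\bs'}$ commutes through $P_{\bar{S}'}$ whenever $S'$ is far from the syndrome of $A_{S,\bs}$, and check soundness (via the inflated set $\bar{S}'$ and the function $\widetilde{f}$ of Lemma~\ref{lem:S<tfr}) controls exactly how far ``far'' must be. The $a<2$ hypothesis enters here, through Lemma~\ref{lem:f_growth}'s bound $\sum_r \gamma(r)\,\ee^{-\delta\kappa\widetilde{f}(r)} \lesssim \ee^{c_f'(\delta\kappa)^{-\alpha}}$, not through a generic claim that re-expansion ``does not inflate the $\kappa$-norm.'' Your proposal identifies neither the obstacle nor this mechanism.

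Second, your claim that the gap $\ge 1/2$ follows from ``a straightforward perturbative bound'' glosses over the other essential role of check soundness. After all SWTs, the dressed Hamiltonian is $H_0 + D_{m_*} + V_{m_*}$; $V_{m_*}$ is exponentially small, but $D_{m_*}$ is an $\Theta(\epsilon)$ block-diagonal operator with $\|D_{m_*}\| = \Theta(n\epsilon) \gg 1$. Showing $H_0 + D_{m_*}$ remains gapped is not automatic — it requires Proposition~\ref{prop:rela_bound}, the relative boundedness $\|(D_{m_*}-c_D I)\ket{\psi}\| \le c_1\epsilon\|H_0\ket{\psi}\|$, which again uses check soundness through the local indistinguishability of $P_{\bar S}$. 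Without this, a pathological $D_{m_*}$ like the Ising-toric-code example in Eq.~\eqref{eq:Ising_toric} could close the gap at any $\epsilon > 0$.

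Third, your convergence rate $\sim \epsilon^{2^j}$ after $j$ steps is a Newton-type super-geometric rate; the paper's iteration achieves only $v_m \sim \epsilon^m$ (Lemma~\ref{lem:iterate}), which is geometric and is what the truncation at $m_* \sim d$ actually needs. A super-geometric scheme would be harder to make compatible with the $\kappa$-norm bookkeeping, and the paper does not attempt it. Finally, a small error: with the gap normalized to $1$, the minimal energy of a flipped syndrome is $E_\bs \ge 1$, not $\ge 2$ (the paper's Eq.~\eqref{eq:A=PVQH} uses the pseudoinverse $H_S^{-1}$ rather than $1/E_\bs$, which is also what yields the crucial extra $1/|\bs|$ suppression in Eq.~\eqref{eq:A<V/s}).
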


See Appendix \ref{sec:main_proof} for the formal statements and the proof, where we employ the Schrieffer-Wolff transformation (SWT) scheme used to prove stability of finite-dimensional codes \cite{topo_Hastings}. This is in contrast to the scheme based on quasi-adiabatic evolutions \cite{bravyi2011short,michalakis2013stability,stability_maryland}. We do not employ the latter scheme because it inevitably invokes Lieb-Robinson bounds \cite{ourreview} for Hamiltonian evolution whose locality is exponential-decaying with the \emph{diameter}, not the \emph{volume}, which seems to only produce a quasi-polynomial bound on energy splitting \cite{stability_maryland} for expander graphs.

As we do not require the perturbation $V$ to be geometrically local, Theorem \ref{thm:qtm} proves stability against general $q$-local perturbations, as long as $   q=\Theta(1)$ and the local perturbation strength $\epsilon$ that each qubit feels is smaller than some constant. In particular, this holds for finite-dimensional codes with local topological order, and significantly improves upon the previous result \cite{michalakis2013stability}, which proves stability with a polynomial energy splitting $\delta E=\mathrm{O}(n^{-\alpha})$ for power-law decaying perturbations with a sufficiently fast decay exponent. Here we achieve exponential energy splitting, e.g. $\delta E=\ee^{-\Omega(\sqrt{n})}$ for the toric code, for all power-law decaying exponents. Note that \cite{michalakis2013stability} does work with the weaker assumption that $H_0$ is frustration-free, rather than requiring it to come from a stabilizer code, as in this paper.

\subsection{\revise{Proof sketch for a simple example}}
Here we sketch the proof of Theorem \ref{thm:qtm} using a simple example, and point out our technical ingredients along the way. Consider a non-geometrically-local perturbation on a magnetic-field Hamiltonian $H_0$: \begin{equation}\label{eq:SWT_intro_Ising}
    H_0=-\frac{1}{2}\sum_{i=1}^n Z_i,\quad V_1= \frac{1}{n} \sum_{1\le i<j\le n}\epsilon_{ij} (X_i X_j+Z_iZ_j),
\end{equation}
where $\abs{\epsilon_{ij}}\le \epsilon$. Note that we have normalized the perturbation $V_1$ so that its local strength acting on any one qubit is bounded by $\epsilon$.

$H_0$ has a single gapped ground state $\ket{\bm{0}}$, the all-zero state, so can be viewed as a trivial $\llbracket n,0,\infty \rrbracket$ quantum code. Since $H_0$ satisfies LTO in a trivial way, and can be viewed as in one spatial dimension, it is stable for small $\epsilon$ according to Theorem \ref{thm:qtm}, which goes beyond previous stability proofs for topological orders due to the non-geometrically-local perturbation. An alternative way of applying Theorem \ref{thm:qtm} to this example is to observe $H_0$ has the best-possible linear check soundness, because the checks $Z_i$ just act on different qubits. 

$V_1$ couples the ground state $\ket{\bm{0}}$ of $H_0$ to excited states. We wish to find a unitary $U$ that block-diagonalizes the Hamiltonian $H=H_0+V_1$ such that $\ket{\bm{0}}$ is an eigenstate of the dressed Hamiltonian $U^\dagger H U$. Here the two blocks correspond to $\ket{\bm{0}}$ and its orthogonal subspace. However, such a unitary $U$ is hard to find exactly, and may be very nonlocal. 

Instead, one can first aim something weaker: to find a unitary $U_1$ that block-diagonalizes $H$ up to second order: \begin{equation}\label{eq:SWT_intro}
    U_1^\dagger H U_1=(H_0+D_2) + V_2,
\end{equation}
where $D_2\ket{\bm{0}}\propto \ket{\bm{0}}$ and $V_2\propto \epsilon^2$. Note that $V_2$ is still an extensive sum of many local terms so $\norm{V_2}\sim \epsilon^2 n\gg 1$. For each local term $V_{1;ij}\propto X_iX_j+Z_iZ_j$ of $V_1$, it is not hard to find a unitary $U_{1;ij}$ acting on sites $i,j$ that rotates away the off-block-diagonal part $X_iX_j$ of $V_{1;ij}$, such that \eqref{eq:SWT_intro} holds for $U_1=U_{1;ij}$ if $V_1$ only consists of this single term. Moreover, $U_{1;ij}=\ee^{A_{1;ij}}$ for an anti-Hermitian operator $A_{1;ij}$ supported on $i$ and $j$ that is as small as $V_{1;ij}$: $A_{1;ij}\sim \epsilon/n$. As a result, even if $V_1$ contains many local terms $V_{1;ij}$, we have $A_{1;ij}$ for each term so that we can define the SWT unitary $U_1=\ee^{A_1}$ with $A_1=\sum_{1\le i<j\le n} A_{1;ij}$. This $U_1$ is quasi-local in the sense that $A_1$ has small local strength $\epsilon$. On the other hand, this $U_1$ achieves \eqref{eq:SWT_intro} because at first order every $V_{1;ij}$ is independently rotated by the corresponding $A_{1;ij}$ in $A_1$. This SWT procedure can be understood as a nondegenerate perturbation theory applied to local operators.  

Since $U_1$ is quasi-local, $D_2,V_2$ are also expected to be sufficiently local: For example, $V_2$ is roughly $[A_1,V_1]$ so a typical local term of it acts on only $\le 3$ sites, as $A_1,V_1$ are both $2$-local. However, $V_2$ strictly speaking contains Heisenberg-evolved operators like $\ee^{-A_1}V_1\ee^{A_1}$, so in principle contains nonlocal operators that act on the whole system. We thus need to control the locality of $V_2$ by examining how a term in it decays with the support (or operator size) of that term. Since the operators like $A_1$ are non-geometrically-local, we cannot invoke Lieb-Robinson bounds. Instead, we use locality bounds obtained from cluster expansion techniques \cite{abanin2017rigorous,our_metastable}. This enables us to show that, a term in $V_2$ decays exponentially with its support volume, so $V_2$ is still a sufficiently local operator dominated by $q$-local terms with small $q\approx 3$ (albeit non-geometrically-local). We point out that this exponential decay with volume (instead of e.g. diameter of the support) is crucial to obtain the final bound \eqref{eq:dE=ed} that is exponential in the code distance. 

\revise{We refer to Lemma \ref{lem:clustexp} in the Appendix for the mathematical cluster expansion bound we use; here we provide a rough intuition on how it works: For a local operator $\OO$ at site $i_0$, one can Taylor-expand $\ee^{-A_1}\OO \ee^{A_1}=\OO - [A_1, \OO] + \cdots$ and then replace each $A_1$ by the sum of its local terms. Then $\ee^{-A_1}\OO \ee^{A_1}$ will be a sum over terms $[\cdots[A_{1;i'j'},[A_{1;ij},\OO]]\cdots]$ organized by their ``clusters'', i.e. the supports $(\cdots,i'j',ij,i_0)$. This expansion clearly has the property we want that each cluster has an operator norm that decays exponentially with the cluster volume, because each $A_{1;ij}$ is small. The technical part is then to show this expansion converges when truncating the sum at some cutoff cluster volume. The idea is that a cluster can be viewed as appending/growing one extra $A_{1;ij}$ on a smaller cluster; the cluster expansion technique establishes a condition for each local growth, under which the whole expansion would indeed converge. This condition translates to a small local strength of $A_1$ in our context, which is satisfied from discussions above. }

Observe that we have achieved \eqref{eq:SWT_intro} that is already an order of improvement over the original problem where $V_1\propto \epsilon$. Moreover, the dressed Hamiltonian in \eqref{eq:SWT_intro} looks almost exactly like the original problem, where now we are perturbing $H_0+D_2$ by $V_2$. We then want to follow the procedure of obtaining $U_1$ to suppress the perturbation $V_2$ to an even higher order so that the Hamiltonian becomes \begin{equation}\label{eq:SWT_intro_3}
    U_2^\dagger U_1^\dagger H U_1 U_2=(H_0+D_3) + V_3,
\end{equation}
where $U_2$ is another quasi-local SWT unitary. However, the dressed Hamiltonian in \eqref{eq:SWT_intro} is \emph{not} exactly the original problem because of the extra $D_2$ that makes the major part $H_0+D_2$ unsolvable: This step thus requires extra conditions of $H_0$ to guarantee that $H_0+D_2$ is, roughly speaking, still a ``code'' such that there remains a $\Omega(1)$ gap above $\ket{\bm{0}}$. 

Here $D_2=n^{-1}\sum_{ij}\epsilon_{ij}Z_iZ_j$ is the block-diagonal part of $V_1$. Although $\norm{D_2}\gg 1$, it cannot close the gap of $H_0+D_2$. To see this, consider a particular state $\ket{\psi}=\ket{\bm{0}}_{1,\cdots,n/2}\otimes \ket{\bm{1}}_{n/2+1,\cdots,n}$ (supposing $n$ is even) and a particular choice of $\epsilon_{ij}=\epsilon\, (i\le n/2,j\ge n/2)$ and $\epsilon_{ij}=0$ otherwise. This choice makes $\ket{\psi}$ \emph{gain} a large extensive energy from $D_2$ comparing to $\ket{\bm{0}}$: \begin{equation}\label{eq:sketch_E_gain}
    \bra{\bm{0}}D_2\ket{\bm{0}} - \bra{\psi}D_2\ket{\psi} = \epsilon n/2 \gg 1,
\end{equation}
much larger than the gap of $H_0$.
However, this does not make $\ket{\psi}$ the new ground state: it has an extensive energy \emph{penalty} $\bra{\psi}H_0\ket{\psi} -\bra{\bm{0}}H_0\ket{\bm{0}} = n/2$ from the original Hamiltonian $H_0$, which is larger than \eqref{eq:sketch_E_gain}!

The gap of $H_0+D_2$ can be attributed to the check soundness property of $H_0$, and this turns out to be a general sufficient condition for stability: Intuitively, $D_2$ contains small stabilizers, and check soundness implies that each stabilizer $\mathcal{S}$ in $D_2$ can be expanded to a product of a small number of checks $C_1,\cdots,C_M$ in $H_0$. As a result, in order for a state to gain energy (relative to $\ket{\bm{0}}$) from stabilizer $\mathcal{S}$, it has to flip one of the few checks in $C_1,\cdots,C_M$. As $M$ is small from check soundness, this means that when considering all stabilizers in $D_2$, a sufficiently large fraction $\sim 1/M$ of the checks in $H_0$ need to be flipped, so that the energy penalty from $H_0$ always wins and the gap is preserved at small but constant $\epsilon$.

Having figured out how to do a single step of SWT using check soundness,
the goal is then to repeat the SWT procedures \eqref{eq:SWT_intro} and \eqref{eq:SWT_intro_3} to very high orders: We do not need to compute $D_m,V_m,U_m$ explicitly, but we need to control their locality as they become more nonlocal at higher orders. With such technicalities addressed by the cluster expansion bounds, we finally arrive at a sufficiently high order $m_*\sim n$ for the example \eqref{eq:SWT_intro_Ising}, so that the dressed Hamiltonian becomes \begin{equation}\label{eq:UHU=goal}
    U^\dagger H_0 U=H_0+D_{m_*}+V_{m_*},
\end{equation}
where $U=U_1U_2\cdots U_{m_*-1}$ is still quasi-local, $H_0+D_{m_*}$ has $\ket{\bm{0}}$ as its gapped ground state from check soundness, and $V_{m_*}\sim \epsilon^n$ is so small that does not close the gap and only causes an exponentially small energy splitting (if the model has multiple ground states unlike \eqref{eq:SWT_intro_Ising}). This establishes the stability of the original problem \eqref{eq:SWT_intro_Ising}.

\subsection{\revise{High-level summary of the proof of Theorem \ref{thm:qtm}}}
\revise{Our goal is to find a quasi-local unitary $U$ that achieves \eqref{eq:UHU=goal}, where $H_0$ is the Hamiltonian of an LDPC code with check soundness, and $D_{m_*},V_{m_*}$ are still quasi-local Hamiltonians.  We desire $D_{m_*}$ to be block-diagonal between the codespace and the rest of the Hilbert space, and that $H_0+D_{m_*}$ retains the codespace as a gapped $2^k$-fold degenerate ground state subspace.  Lastly we require \begin{equation}\label{eq:V<exp_main}
    \norm{V_{m_*}}=\ee^{-\Omega(d)}.
\end{equation} 
The existence of such a $U$ immediately implies Theorem \ref{thm:qtm}: Since $V_{m_*}$ is exponentially small, the spectrum of $H$ agrees with that of $H_0+D_{m_*}$ up to energy differences given by \eqref{eq:dE=ed}, while $H_0+D_{m_*}$ has all desired properties, including a preserved $\Omega(1)$ gap. 
To establish all of these facts, there are three crucial steps: \begin{enumerate}
    \item To find $U$, we formally perform a perturbation theory in $\epsilon$ by iterated SWTs, up to order $m_*\sim d$.    $U=U_1U_2\cdots U_{m_*}$, where at each order $U_l=\exp[A_l]$ where $A_l$ is chosen carefully by using the decomposition \eqref{eq:mainkappanorm}.   Terms $V_{S,\mathbf{s}}$ which violate more parity-checks can be removed by terms in $A_l$ with smaller coefficients.
    \item We then prove the convergence of this series up to the desired order: in particular, that $D_{m_*}$ and $V_{m_*}$ are quasilocal in the sense of \eqref{eq:mainkappanorm} with a finite $\kappa$.  The cluster  expansion proves useful here, along with check soundness, which constrains the number of non-vanishing possible terms in the cluster expansion.
    \item  Lastly, we verify that $H_0+D_l$ remains gapped at each step of the SWT.  Check soundness is again crucial to ensure that this gap cannot close, by forbidding pathological examples such as \eqref{eq:Ising_toric}.
\end{enumerate}
}

\section{Application to classical codes and quasi-adiabatic evolution}


Our result also generalizes to prove (in Appendix \ref{app:classical}) stability of classical LDPC codes with check soundness: 

\begin{cor}[Informal version]\label{cor:cla}
    Theorem \ref{thm:qtm} also holds for $H_0$ being a classical LDPC code, as long as each term in perturbation $V$ is symmetric under conjugation by any logical operator (made out of Pauli $X$s), as discussed around \eqref{eq:isingxonly}.
\end{cor}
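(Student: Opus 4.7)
The plan is to repeat the Schrieffer--Wolff procedure from Theorem \ref{thm:qtm} applied to the classical Hamiltonian $H_0$ and its perturbation $V$, while tracking the action of the symmetry group $G$ generated by all logical $X$ operators. Because $H_0$ is a sum of $Z$-checks commuting with every $L\in G$, and each term $V_{S,\mathbf{s}}$ is $G$-invariant by assumption, the full Hamiltonian $H_0+V$ is $G$-invariant. The first task is to show that this invariance survives every SWT iteration. At order $l$ the generator $A_l$ is obtained by inverting $\mathrm{ad}_{H_0}$ on the off-block-diagonal part of the order-$l$ operator, where here ``off-block-diagonal'' just means failing to commute with some $Z$-check. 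Both this projection and $\mathrm{ad}_{H_0}$ are $G$-equivariant: conjugation by $L\in G$ preserves commutation with each $Z$-check and preserves $H_0$. Thus if the input is $G$-invariant, so are $A_l$ and $U_l=\exp A_l$, and by induction the dressed Hamiltonian $U^\dagger(H_0+V)U = H_0+D_{m_*}+V_{m_*}$ is $G$-invariant. The cluster-expansion and $\kappa$-norm estimates of Theorem \ref{thm:qtm} carry over verbatim because $G$-invariance is a closed linear condition on the operator algebra.

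Next I would exploit $G$-invariance of $D_{m_*}$ to bound the ground-state splitting. Since $D_{m_*}$ commutes with every $Z$-check (block-diagonality) and with every $L\in G$, its restriction to the codespace is a polynomial in the logical $\bar X_i$'s alone --- any logical $\bar Z_i$ content is excluded, because $\bar Z_i$ anticommutes with $\bar X_i$. Every nontrivial logical $\bar X_i$ has weight at least $d$, so the volume-weighted $\kappa$-norm forces its amplitude in $D_{m_*}$ to be at most $\epsilon\,\ee^{-\Omega(d)}$, yielding the splitting bound \eqref{eq:dE=ed}. The gap-preservation argument then mirrors Theorem \ref{thm:qtm} using only the $Z$-type check soundness of the classical code: $D_{m_*}$ consists of $Z$-stabilizers, each expressible as a product of at most $\mathsf{f}(M)$ parity checks, which prevents any non-codeword state from becoming energetically preferred at small $\epsilon$.

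The main obstacle, as in the quantum case, is controlling the cluster expansion through $\sim d$ SWT iterations; the new ingredient specific to the classical setting is verifying that the particular block-diagonal/off-block-diagonal decomposition used in Theorem \ref{thm:qtm} is natively $G$-equivariant, so that no explicit symmetrization (which would spoil locality) is required at any step. A clean alternative would be to restrict the entire SWT to the $G$-invariant operator algebra from the outset, making the absence of logical $\bar Z$ content in $D_{m_*}$ manifest; one would then have to reverify that the $\kappa$-norm estimates descend to this invariant subalgebra, but since $G$-invariance commutes with commutators and finite sums I expect this to cause no difficulty.
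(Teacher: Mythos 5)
Your first paragraph captures the paper's central idea correctly: because $P_S$, $Q_S$, $H_S^{-1}$ and the block-diagonal/off-block-diagonal split all commute with conjugation by $X$-logicals, each $A_l$, $D_m$, $V_m$ produced by the SWT inherits $G$-invariance from the symmetric input $V$. This inductive propagation of symmetry is exactly what the paper does.

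However, you then assert that the $\kappa$-norm and cluster-expansion estimates ``carry over verbatim because $G$-invariance is a closed linear condition.'' This misses the one genuinely non-verbatim step, which is the paper's whole point. Both Lemma~\ref{lem:DA<} (specifically \eqref{eq:PDA<DV}) and Proposition~\ref{prop:rela_bound} rely on the local-indistinguishability identity $P_{\bar S}\,\OO\,P_{\bar S}=c_\OO P_{\bar S}$ from Lemma~\ref{lem:S<tfr}. For a classical code this identity is simply \emph{false} for generic $\OO$ — a single $Z_i$ already violates it — so one cannot invoke Theorem~\ref{thm:qtm} blindly. The resolution is precisely that the identity \emph{does} hold for operators symmetric under the $X$-logicals, since symmetry excludes the $Z$-logical content that would otherwise distinguish local ground states; your first paragraph guarantees that every $\OO$ that is ever fed into Lemma~\ref{lem:S<tfr} along the iteration is of this form. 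That bridge is what makes the quantum machinery applicable, and it needs to be stated.

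Your mechanism for the splitting bound is also misdirected. You argue that $PD_{m_*}P$ is a polynomial in the $\bar X_i$'s and that nontrivial $\bar X$-logicals have weight $\geq d$, so their amplitude in $D_{m_*}$ is $\epsilon\,\ee^{-\Omega(d)}$. But $D_{m_*}$ is built only out of terms with support $<d_{\rm s}\le d$, so it cannot contain any nontrivial $X$-logical at all; once symmetry removes $Z$-logicals, $PD_{m_*}P$ is \emph{exactly} a multiple of $P$ (this is the content of Proposition~\ref{prop:rela_bound}), giving zero splitting from $D_{m_*}$. The splitting $\delta E = \epsilon\,\ee^{-\Omega(d)}$ of \eqref{eq:dE=ed} actually comes from $V_* = V_{m_*}+E_{m_*}$: the garbage $E_{m_*}$ collecting terms of support $\geq d_{\rm s}$ and the order-$m_*$ remainder $V_{m_*}$, each suppressed by the volume-exponential $\kappa$-norm. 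Conflating $D_{m_*}$ with $V_*$ here makes your last argument a non sequitur even though the numerics land on the right exponent.
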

Here we need to restrict to symmetric perturbations, because otherwise the gap is in general unstable to $Z$-field perturbations that locally differentiate the ground states, lifting some of the ground states to false vacuum states \cite{coleman,our_preth,our_metastable} in the spectrum of excited states, as previously discussed. There do exist classical codes with check soundness (see Appendix \ref{app:constratequantum}) to which this result applies.

It would be interesting to compare this ground-state stability result to the provably robust eigenstate localization in classical LDPC codes on expander graphs \cite{MBL_LDPC24}, which can be viewed as a weaker sense of stability, albeit one that holds for all states in the spectrum with a bounded energy density. The stability theorem of \cite{MBL_LDPC24} is weaker in the sense that, the eigenstates form groups where different groups corresponding to different codewords do not mix with each other; perturbation can cause strong mixing within groups.  The guarantee of Corollary \ref{cor:cla} is closer, in spirit, to the conjecture for ``l-bits" -- namely, that there would exist \emph{local unitary} rotations \cite{MBLrev_Rahul} from a product state into a many-body localized eigenstate.   Of course, Corollary \ref{cor:cla} only holds for the ground states, whereas the localization guarantee of \cite{MBL_LDPC24} also holds for excited states.

A limitation of Corollary \ref{cor:cla} is that it does not apply to classical codes without check soundness. An important example of a classical code without check soundness is the Ising model \eqref{eq:IsingH0}.
If the shortest path in the interaction graph from site $1$ to $M$ is $1\rightarrow 2\rightarrow \cdots \rightarrow M$, then stabilizer $Z_1 Z_M$ expands to at least $M-1$ checks $Z_1Z_2,Z_2Z_3,\cdots,Z_{M-1}Z_M$ along the path. This lack of check soundness is physical: The Ising model is known to be unstable against long-range $Z_iZ_j$ perturbations \cite{Ising_unstab80,Ising_unstab81,longrange_stab23}. The idea is that although $Z_iZ_j$ is globally symmetric, it ``fractionalizes'' into two parts $Z_i$ and $Z_j$ that each is charged under the symmetry and locally destroys the error-correction phase. Intriguingly, a similar instability also exists for symmetry-protected topological states \cite{phase_nonlocal24}. 

Nevertheless, we can generalize our result as follows, focusing on the Ising model: 
\begin{cor}[Informal version]\label{cor:Ising}
    The conclusions of Theorem \ref{thm:qtm} hold for $H_0$ being the Ising model \eqref{eq:IsingH0} on any graph, so long as the perturbation $V$ is a sum of sufficiently local terms which act on connected subsets of qubits. 
\end{cor}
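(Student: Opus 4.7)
The plan is to rerun the proof of Theorem \ref{thm:qtm} with the notion of check soundness weakened so that it only needs to hold for stabilizers supported on graph-connected subsets, and to exploit the hypothesis that every term in the perturbation $V$ is supported on such a connected set so that the iterated Schrieffer--Wolff transformation (SWT) never generates any non-connected operator. The symmetry assumption from Corollary \ref{cor:cla} (perturbations symmetric under the logical $\prod_i X_i$) is retained to rule out the false-vacuum instability.

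First I would establish connected check soundness for the Ising model \eqref{eq:IsingH0} on an arbitrary graph $G$. The stabilizer group consists of $Z_S := \prod_{i \in S} Z_i$ for subsets $S$ with even intersection with each component of $G$. For $S$ connected in $G$ with $|S| = M$ even, pick any spanning tree of the induced subgraph on $S$ and select each edge $(v, \mathrm{parent}(v))$ iff the subtree rooted at $v$ has odd size; a short parity calculation shows every vertex has odd degree in the selected subgraph, so the product of the corresponding edge-checks equals $Z_S$. This uses at most $M - 1$ checks, giving linear connected check soundness $\mathsf{f}(M) = \mathrm{O}(M)$, well below the $a < 2$ threshold of Theorem \ref{thm:qtm}.

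Second I would show by induction on the SWT order that every operator appearing in the formal perturbation series has graph-connected support. At order one, each generator term $A_{1;S,\mathbf{s}}$ is supported on the same connected $S$ as the term $V_{1;S,\mathbf{s}}$ it rotates away, since for a stabilizer code the inversion of $\mathrm{ad}_{H_0}$ on a single-syndrome block acts as a scalar. Higher-order contributions arise in the cluster expansion of Lemma \ref{lem:clustexp} as nested commutators of the form $[A_{l;S_1,\mathbf{s}_1}, [\ldots, [A_{l;S_k,\mathbf{s}_k}, V]]]$; such a cluster is non-vanishing only when consecutive supports intersect, so its total support is a union of graph-connected subsets whose intersection graph is connected, which is itself connected in $G$. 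By induction, every stabilizer appearing in $D_l$ at any order $l \leq m_*$ is a connected stabilizer, to which the linear bound from the previous paragraph applies.

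The remaining task is to feed these two facts into the proof of Theorem \ref{thm:qtm} essentially unchanged. The cluster expansion bounds from Lemma \ref{lem:clustexp} are agnostic to connectedness and control locality by support volume alone, so the quasilocality of $U$, $D_{m_*}$ and $V_{m_*}$, together with $\|V_{m_*}\| \le \epsilon \, \ee^{-\Omega(d)}$, follow exactly as in the quantum case. The crucial gap-preservation argument -- that any energy a state can gain from a connected stabilizer in $D_l$ is dominated by the extensive check-flipping cost in $H_0$ via check soundness -- now uses the linear bound above in place of the original global bound, preserving the $\Omega(1)$ gap above the codespace. The main obstacle is the careful bookkeeping required to show that the inductive preservation of connectedness is compatible with the full cluster-expansion estimates and with the symmetric-subalgebra constraint at every order, rather than any new conceptual input; once that is verified, Corollary \ref{cor:Ising} follows directly.
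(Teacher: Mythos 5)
Your proposal is essentially correct and follows the same overall strategy as the paper: retain the symmetry constraint, argue inductively that every operator generated by the iterated Schrieffer--Wolff transformation has connected support, and show that for the Ising model stabilizers in a connected region can be expanded into checks contained in (essentially) that region. The one place you diverge from the paper is in how this last fact is established. The paper observes directly that for connected $S$ the local ground-state projector $P_{\widetilde{S}}$ is two-dimensional (spanned by the all-zero and all-one configurations on $\widetilde{S}$), and that any symmetric operator supported on $S$ acts as a scalar there, giving \eqref{eq:local_indisting0} immediately with $\bar{S}=\widetilde{S}$; this lets the paper set $\widetilde{f}(r)=\infty$ for $r>1$ and trivializes \eqref{eq:sum_r<} without any iterated soundness argument. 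You instead give a constructive spanning-tree expansion of $Z_S$ (which, as you should note, extends to $Z_T$ for any even $T\subseteq S$ by selecting the edge $(v,\mathrm{parent}(v))$ iff $|\text{subtree}(v)\cap T|$ is odd), which establishes the same conclusion --- the minimal expansion stays within $S$ --- so that $\bar{S}=S\subset\widetilde{S}$ and the iterative machinery of Lemma \ref{lem:S<tfr} and \eqref{eq:tf=f} is bypassed rather than ``rerun with weaker hypotheses.'' This is worth keeping in mind because the iterate in Lemma \ref{lem:S<tfr} passes through layer-stabilizers $\mathcal{O}_\rho$ that need not be connected, so one cannot simply plug ``linear connected check soundness'' into the generic bound; the argument is sound precisely because the expansion is already contained within $S$, making the iteration unnecessary. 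Your cluster-expansion and symmetry-preservation observations match the paper's; be aware that what you describe as the cluster expansion being ``agnostic to connectedness'' should more precisely be stated as the expansion being run entirely in terms of connected sets, which \cite{abanin2017rigorous,our_metastable} show is consistent.
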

Here we crucially restrict to \emph{connected} sets, ruling out long-range perturbations that decay slower than exponential with respect to the connected support.  Corollary \ref{cor:Ising} improves previous results \cite{bravyi2011short,longrange_stab23} in that (\emph{1}) we do not require finite dimensions; (\emph{2}) we obtain an energy splitting exponentially small in the system \emph{volume} $\delta E=\ee^{-\Omega(n)}$ instead of diameter; (\emph{3}) we find a quasi-local unitary $U=\mathcal{T}\exp\mlr{\int^1_0 \dd t A(t)}$ connecting the old and new ground states whose generator $A(t)=\sum_S A_S(t)$ decomposed to connected sets decays exponentially with volume $|S|$ instead of diameter: \begin{equation}\label{eq:AS<expS}
    \norm{A_S(t)}=\epsilon\, \ee^{-\Omega(|S|)}.
\end{equation}

\revise{To understand this improved bound, we recall that}
when two gapped Hamiltonians $H_1,H_2$ are in the same phase, i.e. there exists a smooth path of Hamiltonians $\{H(s): s\in [0,1],H(0)=H_1,H(1)=H_2\}$ connecting the two without closing the gap, a quasi-local unitary $U$ is known to exist that connects the two gapped subspaces \cite{quasiadiabatic05,bachmann2012automorphic}, which is called quasi-adiabatic evolution (continuation). However, even if $H(s)$ is strictly local (i.e. interactions have finite-range) in finite dimensions, the generator $A(t)$ of the quasi-adiabatic evolution is only known to decay as \cite{bachmann2012automorphic}: \begin{equation}
    \norm{A_S(t)} = \exp\mlr{-\Omega\lr{\frac{r}{\log^2 r}}},
\end{equation}
where $r=\diam S:=\max_{i,j\in S}\mathsf{d}(i,j)$ is the diameter of $S$. Our results Theorem \ref{thm:qtm} and Corollary \ref{cor:Ising} significantly improves this almost exponential decay with the diameter, to an exponential decay with the volume $|S|$ in \eqref{eq:AS<expS}, with a restriction that the Hamiltonians $H_1,H_2$ are close enough to a solvable fixed point $H_0$ as an LDPC code. This improved tail bound enables us to show much stronger symmetry properties for gapped ground states around fixed points in the phase space, which are discussed elsewhere \cite{volumeLaw_disorder24}.

\revise{Beyond classical codes under symmetric perturbation, \eqref{eq:AS<expS} also holds for quantum code Hamiltonians under general perturbation.  This fact follows from Theorem \ref{thm:qtm}.}

\section{Conclusion}
We have established that a broad family of quantum LDPC codes define stable gapped phases of matter, whose (approximate) ground state degeneracy is robust to perturbation.  Under sufficiently small perturbations, quasilocal unitaries can relate one ground state to another in the same phase.   We have further argued that check soundness, rather than local topological order, is the critical property that renders a quantum code a stable phase of matter.   LDPC codes then represent a setting in which certain mathematical definitions about what constitutes a robust phase of matter can be extended into ``infinite dimensions", beyond the paradigms of Landau theory or statistical/quantum field theory.

These results represent a valuable contribution to a rapidly-growing body of literature \cite{Rakovszky:2023fng,Rakovszky:2024iks,Hong:2024fsg,MBL_LDPC24,stability_maryland,stability_Khemani24} on the \emph{physics} of quantum LDPC codes.  Like classical LDPC codes \cite{Montanari_2006}, it increasingly appears that quantum LDPC codes form a kind of quantum glass with ergodicity-breaking at low temperature, perhaps related to the complex entanglement (NLTS) of the low-energy states \cite{Anshu_2023}. In particular, our work combined with \cite{Hong:2024fsg} suggests that these quantum LDPC codes are \emph{stable} self-correcting quantum memories, where both coupling to a low-temperature thermal bath and perturbing the Hamiltonian can not corrupt the encoded quantum information. These unusual phases of matter break our intuition about thermodynamics in finitely-many spatial dimensions, including the existence of a third law of thermodynamics: namely, that any well-defined quantum \emph{phase} should have a vanishing entropy density at zero temperature.  We expect that this new family of phases of matter will continue to push the limits of our understanding of the foundations of statistical mechanics, and may indeed represent an ideal setting to ``stress test" the lore of quantum statistical mechanics that was developed with our three-dimensional universe in mind.

\emph{Note Added---}  As our manuscript was finalized, we became aware of related work \cite{vedikanew}.  Our preprints appeared appear in the same arXiv posting.



\section*{Acknowledgements}
We thank Yifan Hong for many helpful discussions on error correcting codes, and especially for alerting us to known results on quantum LDPC codes that are equivalent to ``check soundness".  This work was supported by the Department of Energy under Quantum Pathfinder Grant DE-SC0024324. 

\begin{appendix}
\renewcommand{\thesubsection}{\thesection.\arabic{subsection}}
\renewcommand{\thesubsubsection}{\thesubsection.\arabic{subsubsection}}

\section{Preliminaries}\label{sec:prelimi}

\subsection{Graphs}

We define the floor function $\lfloor x \rfloor$ as the largest integer that is not larger than $x$.

Let $\Lambda$ denote a set of vertices on an undirected graph with edge set $\mathcal{E}$.  When discussing subsets $S\subset\Lambda$, the complement of $S$ will be denoted as $S^{\rm c}$.  
 $S\setminus S'$ denotes the set $\{i\in S: i\notin S'\}$.   A set $S$ is connected if for any two $i\in \Lambda$ and $j\in \Lambda$, there exists a subset of $\mathcal{E}$ of the form $(i, k_1), (k_1, k_2), \ldots, (k_l, j)$ such that $\lbrace k_1,\ldots, k_l\rbrace \subseteq \Lambda$; otherwise, $S$ is disconnected.  We assume that $\Lambda$ is connected.

 \begin{defn}[Distances on graphs]
     For each pair of vertices $i,j\in \Lambda$, the distance $\mathsf{d}(i,j) \in \mathbb{Z}_{\ge 0}$ between them is the length of the shortest path in graph $(\Lambda,\mathcal{E})$ between them, with $\mathsf{d}(i,j)=0$ if and only if $i=j$.   We similarly define the distance from a vertex to a set $\mathsf{d}(i,S):=\min_{j\in S} \mathsf{d}(i,j)$, and between two sets $\mathsf{d}(S,S'):=\min_{i\in S, j\in S'} \mathsf{d}(i,j)$. 
 \end{defn}

 \begin{defn}[Ball around a vertex]
     A ball $B_{i,r}$ of center $i$ and radius $r$ is the set of vertices that are of distance $\le r$ to $i$.  Let $\Gamma_i(r):= |B_{i,r}|$ be the upper bound on the volume of a radius-$r$ ball, and $\gamma_i(r):= |B_{i,r}|-|B_{i,r-1}|$ be the bound on the surface volume.  We often drop the subscript $\Gamma_i \rightarrow \Gamma$ when $i$ is clear from context.
 \end{defn}

 \begin{defn}[Degree and dimensionality] \label{def:degreedimensionality}
     The \textbf{degree} of a vertex $i$ is $\gamma_i(1)$.  If a graph is ``constant-degree" with $\gamma_i(1) \le \Delta$ for all $i$, then we always have a bound \begin{equation}\label{eq:Gamma<Delta}
    \gamma(r)\le \Gamma(r) \le \Delta^r.
\end{equation}
If $\Gamma(r) \ge \exp[\alpha r]$ for some finite $\alpha>0$, up to $r = \mathrm{O}(\log n)$, we say that the graph is an \textbf{expander graph}.

If there exist constants $\mathcal{D},c_{\mathcal{D}}$ such that \begin{equation}\label{eq:Gamma<finite_d}
    \Gamma(r)\le c_{\mathcal{D}} r^{\mathcal{D}},\quad \gamma(r)\le c_{\mathcal{D}} r^{\mathcal{D}-1},\quad \forall r\ge 1,
\end{equation}
we say the graph $G$ is of finite spatial dimension $\mathcal{D}$.  Note that for any connected graph we require $\mathcal{D} \ge 1$.
 \end{defn}



 

\subsection{Quantum stabilizer codes}
In this paper, we study quantum many-body systems consisting of $n$ qubits, i.e. those whose Hilbert space \begin{equation}
    \mathcal{H} := \left(\mathbb{C}^2\right)^{\otimes n}.
\end{equation}
We will usually denote $\Lambda := \lbrace 1,\ldots, n\rbrace$ to be the set of qubits, and $i\in\Lambda$ to denote one particular qubit.  We let $X_i$, $Z_i$ and $Y_i = \mathrm{i}Z_iX_i$ denote the three Pauli matrices acting on qubit $i$.  

We now restrict our study further to $H$ which arise from quantum (Pauli) stabilizer codes.  We are deliberately restricting our discussion of quantum codes to only the bare minimal ingredients necessary to understand the proof of our main result.
\begin{defn}[Stabilizer code]\label{def:stabilizer_code}
    Let $\Lambda_{\mathrm{c}}$ denote a set of \textbf{parity checks} (checks for short):   \textbf{Pauli strings} of the form $\prod_{j\in S} A_j$ where each $A_j \in \lbrace X_j, Y_j, Z_j\rbrace$ and $S\subset \Lambda$, which obey \begin{equation}
        [\mathcal{Q}_1,\mathcal{Q}_2] = 0, \;\;\; \text{ if } \mathcal{Q}_{1,2} \in \Lambda_{\mathrm{c}}.
    \end{equation}
    If for some $\mathcal{Q}_1,\ldots, \mathcal{Q}_n\in \Lambda_{\mathrm{c}}$, $\prod_{a=1}^n \mathcal{Q}_a = I$ (the identity), the stabilizer code is \textbf{redundant}.  In what follows, we will usually refer to checks $\mathcal{Q}_C$ by the subset $C\subset\Lambda$ on which they act; $|C|$ refers to the length of the Pauli string $\mathcal{Q}_C$.  A product of parity checks is called a \textbf{stabilizer}.

    A \textbf{logical operator} is a Pauli string which commutes with all stabilizers, yet is not a product of elements of $\Lambda_{\mathrm{c}}$.  A \textbf{quantum code} has $2k$ logical operators whose algebra is isomorphic to that \revise{generated by} $X_1,Z_1,\ldots, X_k, Z_k$.  The length of the smallest non-identity logical operator is called the \textbf{code distance} $d$.  If $|\Lambda|=n$, we call this an $\llbracket n,k,d\rrbracket$ code.

    A \textbf{low-density parity-check (LDPC) code} is a stabilizer code for which: (1) any $\mathcal{Q}_C \in \Lambda_{\mathrm{c}}$ has the property that $|C| \le q$, for some O(1) value $q$, and (2) for any $i\in\Lambda$, $|\lbrace \mathcal{Q}_C \in \Lambda_{\mathrm{c}} : i \in C\rbrace | \le q^\prime$ for O(1) $q^\prime$.  In other words, each qubit is involved in a finite number of checks, and each check involves a finite number of qubits.  

    The \textbf{Tanner graph} of a stabilizer code is a bipartite graph with vertices $\Lambda\cup \Lambda_{\rm c}$, where there is an edge $(i,\mathcal{Q}_C)$ with $i\in \Lambda,\mathcal{Q}_C\in \Lambda_{\rm c}$ if and only if $i\in C$.

    A \textbf{CSS code} is a stabilizer code where any $\mathcal{Q}_C \in \Lambda_{\mathrm{c}}$ is a product of all Pauli $X$s or all Pauli $Z$s.  $k$ logical operators will be products of all $X$s, and $k$ will be products of all $Z$s.
    
    A \textbf{classical code} is a stabilizer code where any $\mathcal{Q}_C$ is a product of Pauli $Z$s.  In a classical code, we restrict study to logical operators that are  a product of Pauli $X$s. 
\end{defn}

For any CSS code, all of its $Z$ checks in $\Lambda_{\rm c}$ form the checks of a classical code. The same holds for $X$ checks with a local basis change $X_i\leftrightarrow Z_i$ for each qubit.

\begin{defn}[Code Hamiltonian]
    Given an LDPC code, we define an unperturbed \textbf{code Hamiltonian} \begin{equation}\label{eq:H0=}
    H_0 = \sum_{\mathcal{Q}\in \Lambda_{\rm c}} \lambda_{\mathcal{Q}} \frac{I-\mathcal{Q}}{2},
\end{equation}
where, for some finite $a<\infty$, \begin{equation}
    1 \le \lambda_{\mathcal{Q}} \le a. \label{eq:minlambdacheck}
\end{equation}
We define a \textbf{codespace (ground state subspace) projector} \begin{equation}
    P:=\prod_{\mathcal{Q}\in \Lambda_{\rm c}}\frac{I+\mathcal{Q}}{2}.
\end{equation}
$|\psi\rangle$ is a ground state of $H_0$, obeying $H_0|\psi\rangle = 0$, if and only if $P|\psi\rangle =|\psi\rangle$.  There are $2^k$ linearly independent ground states.  Notice that $H_0$ has a spectral gap $\ge 1$ between the ground state and excited states.
\end{defn}



It will often be useful to call upon a \emph{local} version of the ground state projector $P$, restricted to set $S\subset \Lambda$:
\begin{equation}
    P_S := \prod_{\mathcal{Q}_C\in \Lambda_{\rm c}: C\subset S}\frac{I+\mathcal{Q}_C}{2} .
\end{equation}
Similarly, we define the local projector onto excited states $Q_S:=I-P_S$, the global excited state projector $Q := I-P$, and the local patch of the Hamiltonian in set $S$: \begin{equation}
    H_S:=\sum_{\mathcal{Q}_C\in \Lambda_{\rm c}: C\subset S} \lambda_{\mathcal{Q}_C} \frac{I-\mathcal{Q}_C}{2}.
\end{equation}

\begin{defn}[Edge set for the code graph]
    Given a quantum code, we define the edge set $\mathcal{E}$ such that \begin{equation}
        \mathcal{E} := \lbrace (i,j) = (j,i) : \exists \mathcal{Q}_C \in \Lambda_{\mathrm{c}}  \text{ with } \lbrace i,j\rbrace \subseteq C \rbrace.
    \end{equation}
    Notice that the code Hamiltonian $H_0$ thus couples $i \ne j$ if and only if $\mathsf{d}(i,j) = 1$.
\end{defn}

Given this convention for the edge set on the interaction graph of the code Hamiltonian, notice that the parameter $\Delta$ from Definition~\ref{def:degreedimensionality} obeys $\Delta \le qq^\prime - 1$.\footnote{In the proof of Theorem \ref{thm:qtm}, we will often substitute $q$ or $q^\prime$ for $\Delta$ out of convenience.} 


\subsection{Operator decomposition into syndromes}
\begin{defn}[Syndromes]
    Let $\OO$ be a Pauli string.  We write $\OO = \OO_{\mathbf{s}}$ where the \textbf{syndrome} $\mathbf{s} \in \mathbb{F}_2^{\Lambda_{\mathrm{c}}}$ is defined such that \begin{equation}
    \OO_{\bs}\mathcal{Q}=\left\{\begin{array}{cc}
       -\mathcal{Q}\OO_{\bs}, & \mathcal{Q}\in \bs    \\
       \mathcal{Q}\OO_{\bs}, & \mathcal{Q}\notin \bs  
    \end{array}\right. .
\end{equation}
  Notice that if the code is redundant, not all syndromes $\mathbf{s}$ are possible. More generally, we write $\OO = \OO_{\mathbf{s}}$ if operator $\OO$ can be decomposed as a sum of Pauli strings all having the same syndrome $\mathbf{s}$.
  \end{defn}

  \begin{defn}[Strongly supported operators]
  For any operator $\OO$, we can decompose it as 
    \begin{equation} \label{eq:O=OSs}
    \OO = \sum_{S,\bs} \OO_{S,\bs},
\end{equation}
      where each $\OO_{S,\bs}$ has syndrome $\mathbf{s}$ and is \textbf{strongly supported} in $S$. More precisely, we demand that $\OO_{S,\bs}$ acts trivially on $S^{\rm c}$, and any check that does not commute with $\OO_{S,\bs}$ is contained in $S$: \begin{equation}\label{eq:HcommV}
    [\mathcal{Q}_\chec, \OO_{S,\bs}] = 0,\quad \mathrm{if}\quad \chec\not\subset S.
\end{equation}
  \end{defn}

By definition, if an operator is strongly supported in $S$, it is supported in $S$, i.e. it does not act on $S^{\rm c}$. On the other hand, if an operator is supported in $S'$, it is strongly supported in a larger set $S$ that includes $S'$ and its neighborhood sites that are connected to $S'$ by a check.

$\mathcal{O}_{S,\mathbf{s}}$ can anticommute with operators which flip syndromes $\mathbf{s}$ whose checks are not contained in $S$.  An example using the toric code was stressed in the main text (Figure \ref{fig:TC}), and accounting for this possibility will be an important issue in our proof. We also stress that $\OO_{S,\mathbf{s}}$ is not in general a simple Pauli string, and in general it can include arbitrary functions of stabilizers contained within $S$. When the syndrome is not important, we will sometimes also use \begin{equation}
    \OO = \sum_S \OO_S,
\end{equation}
to simplify notation, where $\OO_S:=\sum_\bs \OO_{S,\bs}$.



 \begin{defn}[$\kappa$-norm \revise{(local strength in the main text)}]
     \revise{For any $\kappa>0$,} we define a $\kappa$-norm \begin{equation}\label{eq:kappanorm_def}
    \norm{\OO}_\kappa := \max_i \sum_{S\ni i}\sum_{\bs} \norm{\OO_{S,\bs}} \ee^{\kappa |S|}.
\end{equation}
We will frequently use that \begin{equation}
    \norm{\OO}_{\kappa'}\le \norm{\OO}_\kappa,\quad \forall \kappa'\le \kappa.
\end{equation}
 \end{defn}

Since one can expand any operator into Pauli strings where each Pauli has a well-defined syndrome, the decomposition \eqref{eq:O=OSs} always exists; furthermore, locality is well preserved in this procedure: 
\begin{prop}[Adapted from Proposition 18 in \cite{our_metastable}]\label{prop:decomp}
For any operator $\OO=\sum_{S} \OO_S$ where $\OO_S$ is supported in $S$ (not necessarily strongly supported) with \begin{equation}\label{eq:OS<1}
    \max_i \sum_{S\ni i} \norm{\OO_S}\ee^{\kappa|S|}\le 1,
\end{equation}
one can always decompose it to the form \eqref{eq:O=OSs}, which induces local norm $\norm{\OO}_{\kappa'}\le c_\Delta$ with $\kappa'=c_\Delta'\kappa-\log(2)\Delta$, where the constants $c_\Delta,c_\Delta'$ are determined by $\Delta$.
\end{prop}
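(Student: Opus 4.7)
The plan is to reduce the weakly-supported decomposition to a strongly-supported one by expanding each $\OO_S$ in a Pauli basis over $S$, labelling each Pauli string by its syndrome, and then regrouping the strings by their minimal strong-support sets. First I would write
$$\OO_S = \sum_{P \text{ Pauli on } S} c_P^{(S)}\, P,\qquad c_P^{(S)} := 2^{-|S|}\mathrm{tr}\,(\OO_S\, P),$$
so that $|c_P^{(S)}| \le \norm{\OO_S}$ and the sum has $4^{|S|}$ terms. Each such Pauli string has a uniquely determined syndrome $\mathbf{s}(P)$ and a minimal strong-support set $T(P) := S \cup \bigcup_{C:\,\mathbf{s}(P)_C = 1} C$. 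Because only checks intersecting $S$ can belong to $\mathbf{s}(P)$, and every qubit in $S$ is contained in at most $q'$ checks of length at most $q$, $T(P)$ is a subset of a fixed neighborhood $N(S)\supseteq S$ with $|N(S)| \le K|S|$ for some constant $K = K(\Delta)$.

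I would then define the strongly-supported pieces by grouping Pauli strings by the pair $(T(P), \mathbf{s}(P))$,
$$\OO_{T, \mathbf{s}} := \sum_{S:\, S\subseteq T}\;\sum_{\substack{P \text{ on } S\\ T(P) = T,\, \mathbf{s}(P) = \mathbf{s}}} c_P^{(S)}\, P,$$
which is strongly supported in $T$ with syndrome $\mathbf{s}$ by construction, and satisfies $\sum_{T,\mathbf{s}} \OO_{T, \mathbf{s}} = \OO$. To bound the new norm I would use the triangle inequality and then drop the constraints $T(P) = T$ and $\mathbf{s}(P) = \mathbf{s}$ as an upper bound:
$$\norm{\OO}_{\kappa'} \le \max_i \sum_S 4^{|S|} \norm{\OO_S} \sum_{\substack{T\ni i,\, T\supseteq S\\ T\subseteq N(S)}} e^{\kappa'|T|}.$$
The inner sum is at most $2^{(K-1)|S|}\, e^{\kappa' K|S|}$ when $i\in S$; the case $i\in N(S)\setminus S$ is handled by first summing over the at most $\Delta$ neighbors $j$ of $i$ and then over $S\ni j$, which introduces an overall factor of $\Delta$.

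Putting these pieces together yields $\norm{\OO}_{\kappa'} \le (1+\Delta)\max_i \sum_{S\ni i} \norm{\OO_S}\, e^{\kappa|S|}$ provided the exponents satisfy $\kappa' K + c(\Delta)\log 2 \le \kappa$, where $c(\Delta)$ is the counting constant arising from the $4^{|S|}$ Paulis and the $2^{(K-1)|S|}$ admissible $T$s. Rearranging gives the stated form $\kappa' = c_\Delta'\kappa - \log(2)\Delta$ with $c_\Delta' = 1/K$ (the additive $\log(2)\Delta$ absorbs $c(\Delta)/K$), and the hypothesis $\sum_{S\ni i}\norm{\OO_S}\, e^{\kappa|S|}\le 1$ delivers $\norm{\OO}_{\kappa'}\le c_\Delta := 1+\Delta$. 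The main obstacle is the bookkeeping of the double sum over $(S,T)$ pairs, particularly the case split $i\in S$ versus $i\in T\setminus S$, which is routine but requires care to keep every constant independent of system size; the Pauli expansion, syndrome labelling, and minimal strong-support assignment are straightforward and parallel Proposition 18 of \cite{our_metastable}.
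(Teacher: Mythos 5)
Your proof is correct and reconstructs what is surely the intended argument from Proposition 18 of \cite{our_metastable}, which the paper cites rather than proving: expand each $\OO_S$ in the Pauli basis on $S$, assign each Pauli its syndrome and the enlarged set $T(P)=S\cup\bigcup_{C\in\bs(P)}C\subseteq N(S)$, regroup by $(T,\bs)$, and then pay the price via the $4^{|S|}$ Pauli count, the $2^{(K-1)|S|}$ choices of $T$ between $S$ and $N(S)$, and the $|T|\le K|S|$ inflation of the exponential weight, all absorbable for suitably shrunken $\kappa'$ and a factor $c_\Delta\le 1+\Delta$ from the case split $i\in S$ versus $i\in N(S)\setminus S$. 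The key facts you use — that only checks meeting $S$ can be flipped, that $|N(S)|\le K|S|$ with $K$ depending only on the degree, that $|c_P^{(S)}|\le\lVert\OO_S\rVert$, and that overcounting in the sum over $(S,T)$ only costs a $\Delta$-dependent constant — are exactly the right ingredients, and the bookkeeping is sound. One small point worth making explicit if you write this out carefully: the bound $e^{\kappa'|T|}\le e^{\kappa'K|S|}$ uses $\kappa'>0$, which constrains the range of allowed $\kappa$ via $\kappa'=c_\Delta'\kappa-\log(2)\Delta>0$; this is implicit in the paper (which only defines the $\kappa$-norm for positive $\kappa$), but stating it avoids a silent sign issue when $\kappa$ is small.
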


\revise{
In words, it is always easy to decompose an extensive operator $\OO$ into the form $\sum_{S} \OO_S$ such that the corresponding local strength of this decomposition (left hand side of \eqref{eq:OS<1}) is bounded by a constant ($1$ in \eqref{eq:OS<1} by normalizing the operator). Proposition \ref{prop:decomp} ensures that one can furthermore find a decomposition $\OO_{S,\bs}$ that keeps track of the syndrome information, which induces a finite $\kappa$-norm defined in \eqref{eq:kappanorm_def} as long as the original $\OO_S$ decays exponentially with $|S|$ in a sufficiently fast way.
}

\section{Check soundness}\label{sec:check_sound}
A critical property of a quantum stabilizer code will be the existence of check soundness.  The main purpose of this appendix is to collect known properties about check soundness, which are usually expressed in a different form in the existing coding literature. 
\subsection{Formal definition}
\begin{repdefn}{defn:checksoundness}[Check soundness]\label{def:check_exp}
    We say a stabilizer code has $(d_{\rm c}, f)$-check-soundness, if for any (Pauli string) stabilizer $\OO$, obeying $[H_0, \OO]=0$, acting on $M< d_{\rm c}$ sites, there exist $M'\le f(M)$ parity checks $\mathcal{Q}^{(1)},\cdots,\mathcal{Q}^{(M')}\in \Lambda_{\rm c}$ whose product produces $\OO$: $\OO=\mathcal{Q}^{(1)}\cdots \mathcal{Q}^{(M')}$. Here $f: \mathbb{R}_{\ge 0}\rightarrow \mathbb{R}_{\ge 0}$ is a monotonically increasing function.
\end{repdefn}

To understand the above definition, although each stabilizer $\OO$ may be expanded to checks in many ways (if the checks have redundancies), there is one (not necessarily unique) minimal expansion where the number of checks is smallest. Check soundness requires this minimal number to be bounded by a function of the weight of the original stabilizer.

We desire $f(M)$ to grow not too fast with $M$, and to not depend on $n$: this immediately rules out the unstable Ising version of toric code \eqref{eq:Ising_toric}, where $f(4)$ is already the order of the linear system size. Furthermore, this definition is closely related to local indistinguishability: \begin{enumerate}
    \item It is straightforward to show that LTO \eqref{eq:local_indisting} in finite dimensions guarantees check soundness for a power function $f$. We prove this claim in Proposition \ref{prop:finite_d}.
    \item Check soundness implies local indistinguishability in the following sense: A first direct implication is that, for any $S$ with $|S|$ sufficiently small, there exists a region $\bar{S}$ satisfying LTO \eqref{eq:local_indisting} with a bounded volume \begin{equation}
        |\bar{S}|< 4^{|S|} f(|S|). \label{eq:naiveconstraint}
    \end{equation}
    Here the $4^{|S|}$ factor is a very loose bound accounting for all possible Pauli strings in region $S$, while $f(|S|)$ assumes that each separate Pauli string might be ``check sound" on a different set of qubits.  Clearly, this bound is very loose, but it does imply that with check soundness, local operators ($|S|=\mathrm{O}(1)$) cannot differentiate the local ground states of a local part of $H_0$ in $\bar{S}$ near its support. Nevertheless, this simply-derived local indistinguishability condition, even without the $4^{|S|}$ factor, turns out to be insufficient for our proof of stability.  Importantly, check soundness gives us much tighter constraints than \eqref{eq:naiveconstraint}, because loosely speaking it also constrains the ``shape'' of $\bar{S}$; we will formulate this fact more precisely in Lemma \ref{lem:S<tfr}.
\end{enumerate} 

For CSS codes, one can look for check soundness separately for the $X$ part and $Z$ part: 
\begin{prop}\label{prop:CSS_sound}
    For a CSS code, if any $X$ stabilizer acting on $M< d_{\rm c}$ sites can be expanded as a product of at most $f(M)$ number of $X$ checks, and the same holds for $Z$ stabilizers, then the code has $(d_{\rm c},2f)$-check-soundness, where $(2f)(M):=2\cdot f(M)$.
\end{prop}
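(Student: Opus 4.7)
The plan is to reduce the general-stabilizer statement to the separate $X$-only and $Z$-only hypotheses by exploiting the defining feature of a CSS code: every parity check is either pure-$X$ or pure-$Z$, and therefore any stabilizer factors into an $X$-part and a $Z$-part of no larger support.

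First, given a Pauli-string stabilizer $\OO$ acting on $M<d_{\rm c}$ sites, I would write it (up to an overall $\pm 1$ phase absorbed below) as
\begin{equation}
    \OO = \eta\,\OO_X\,\OO_Z, \quad \OO_X = \prod_i X_i^{a_i}, \quad \OO_Z = \prod_i Z_i^{b_i},
\end{equation}
by expanding any $Y_i$ factors as $\mathrm{i} Z_i X_i$ and collecting all $X$s and all $Z$s. The supports satisfy $|\mathrm{supp}(\OO_X)|, |\mathrm{supp}(\OO_Z)|\le M$ because an $X$ or $Z$ appears at site $i$ only where $\OO$ acts nontrivially. Next, I would argue that $\OO_X$ and $\OO_Z$ are each themselves stabilizers of the CSS code: because $\OO$ is a stabilizer it is a product of checks, and since CSS checks split into commuting $X$-type and $Z$-type generators, this product can be regrouped as (product of $X$ checks)$\cdot$(product of $Z$ checks), which must equal $\OO_X$ and $\OO_Z$ respectively (up to sign). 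Equivalently, one may verify directly from the CSS structure that $\OO_Z$ commutes with every $X$ check (it trivially commutes with every $Z$ check), so $\OO_Z$ lies in the stabilizer group; and symmetrically for $\OO_X$.

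Having established that $\OO_X$ is an $X$-only stabilizer on at most $M<d_{\rm c}$ sites, and similarly for $\OO_Z$, I would directly invoke the two hypotheses of the proposition: $\OO_X$ expands as a product of at most $f(M)$ $X$ checks, and $\OO_Z$ as a product of at most $f(M)$ $Z$ checks. Concatenating these two expansions yields $\OO$ as a product of at most $2f(M)$ checks (the overall sign can be absorbed because $X$ and $Z$ checks of a CSS code commute, so any accumulated sign from reordering factors can be flipped by a convention on the order of the check list, or absorbed into one of the checks). This is exactly the $(d_{\rm c}, 2f)$-check-soundness conclusion.

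The only subtlety — and the main thing to verify carefully — is the claim that the $X$-part and $Z$-part of a CSS stabilizer are themselves stabilizers; once this is in hand the result is immediate. I do not expect any genuine obstacle: the argument is purely algebraic and does not interact with the code distance bound beyond using $|\mathrm{supp}(\OO_{X,Z})|\le M<d_{\rm c}$ to apply the hypotheses.
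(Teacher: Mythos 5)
Your proof is correct and follows essentially the same route as the paper's: decompose the Pauli stabilizer into its $X$-part and $Z$-part (each with support $\le M$), note each is itself a stabilizer, apply the hypothesis to each to get at most $f(M)$ checks apiece, and concatenate. You spell out the sub-claim that $\OO_X$ and $\OO_Z$ are individually stabilizers, which the paper takes as implicit, but the argument is otherwise the same.
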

\begin{proof}
    For any stabilizer $\OO=\OO_X\cdot \OO_Z$ of the code acting on $M< d_{\rm c}$ sites, its $X$-part stabilizer $\OO_X$ acts on at most $M$ sites, so can be expanded to at most $f(M)$ $X$-checks. The same holds for the $Z$ part, so the whole stabilizer $\OO$ can be expanded to at most $f(M)+f(M)=2f(M)$ checks.
\end{proof}

\subsection{Finite-dimensional codes with local topological order}
We have introduced local topological order (LTO) in the main text; here we provide the formal definition for convenience: 
\begin{defn}[Local topological order]\label{def:LTO}
    For a stabilizer code that is local on a graph with vertex set $\Lambda$, we say it has LTO of distance $d_{\rm TO}$, if the following holds: For any ball $S_0\subset \Lambda$ with radius smaller than $d_{\rm TO}$, any operator $\OO$ supported on $S_0$ satisfies \begin{equation}\label{eq:local_indisting1}
    P_{\widetilde{S}} \OO P_{\widetilde{S}} = c_\OO P_{\widetilde{S}},
\end{equation}
for some number $c_\OO$, where $\widetilde{S}\supset S_0$ is a larger region that includes all checks touching \revise{$S_0$}.
\end{defn}

\revise{The maximum size $d_{\rm TO}$ of $S_0$ could scale with the system size.}
We clearly need $d_{\rm TO} < d$, as a logical operator $\OO$ would violate \eqref{eq:local_indisting1}. Nevertheless, notice that LTO implies check soundness with a power-law function $f$:

\begin{prop}\label{prop:finite_d}
    For a local stabilizer code in $\mathcal{D}$ spatial dimensions with code distance $d$, if it has LTO of distance $d_{\rm TO}<d$, then it has $(d_{\rm TO}, f)$-check-soundness, where \begin{equation}\label{eq:f_finiteD}
        f(M)=\Delta c_{\mathcal{D}} M^{\mathcal{D}}, 
    \end{equation}
    for some O(1) constant $c_{\mathcal{D}}$.
\end{prop}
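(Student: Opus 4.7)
My plan is to derive $(d_{\rm TO}, f)$-check-soundness from LTO by performing a cluster decomposition of the support of the given stabilizer and then invoking LTO locally on each cluster. Let $\mathcal{O}$ be a Pauli-string stabilizer with $M := |\mathrm{supp}\,\mathcal{O}| < d_{\rm TO}$, and write $S := \mathrm{supp}\,\mathcal{O}$. First I would partition $S$ into equivalence classes $S = S_1 \sqcup \cdots \sqcup S_k$ under the transitive closure of the relation ``$i \sim j$ if some $\mathcal{Q}_C \in \Lambda_{\rm c}$ contains both $i$ and $j$.'' By construction any parity check whose support meets $S$ meets exactly one class, so denoting by $\mathcal{O}_\alpha$ the tensor factor of the Pauli string $\mathcal{O}$ supported on $S_\alpha$, each $\mathcal{O}_\alpha$ individually commutes with every parity check. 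Hence each $\mathcal{O}_\alpha$ is itself a Pauli stabilizer on $M_\alpha := |S_\alpha|$ sites, with $\mathcal{O} = \mathcal{O}_1 \cdots \mathcal{O}_k$ and $\sum_\alpha M_\alpha = M$.

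Next I would enclose each $S_\alpha$ in a ball small enough to apply LTO. Because any two sites of $S_\alpha$ are linked by a chain of check-adjacencies within $S_\alpha$, and any two check-adjacent sites lie at distance $1$ in $(\Lambda, \mathcal{E})$, each $S_\alpha$ sits inside a ball $B_\alpha$ of radius $r_\alpha \le M_\alpha - 1 < d_{\rm TO}$ about any anchor site. Definition~\ref{def:LTO} then yields $P_{\widetilde{B_\alpha}} \mathcal{O}_\alpha P_{\widetilde{B_\alpha}} = c_\alpha P_{\widetilde{B_\alpha}}$. Since $M_\alpha < d_{\rm TO} < d$ and $\mathcal{O}_\alpha$ commutes with every global check, the code-distance assumption forbids $\mathcal{O}_\alpha$ from being a nontrivial logical; it must therefore already be a product of global parity checks, hence fix the global codespace, which lies inside $\mathrm{Im}\,P_{\widetilde{B_\alpha}}$, forcing $c_\alpha = +1$.

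The analytical heart of the argument, and the step I expect to be the main obstacle, is promoting this LTO identity to an explicit factorization of $\mathcal{O}_\alpha$ into parity checks whose supports all lie in $\widetilde{B_\alpha}$. Here I would invoke the standard classification inside a stabilizer code: within the Pauli group on $\widetilde{B_\alpha}$, the centralizer of the local check group modulo the local check group itself is generated by ``local logical'' operators, which by construction act nontrivially on $\mathrm{Im}\,P_{\widetilde{B_\alpha}}$. Since $\mathcal{O}_\alpha$ lies in this centralizer and acts as $+1$ on $\mathrm{Im}\,P_{\widetilde{B_\alpha}}$ by the previous step, it cannot carry any nontrivial local-logical component, so it must equal a product of parity checks all supported in $\widetilde{B_\alpha}$ (with overall sign $+1$).

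A finite-dimensional volume count then closes the proof. Since $\widetilde{B_\alpha}$ is contained in the ball of radius $r_\alpha + 1$ around the anchor, Definition~\ref{def:degreedimensionality} gives $|\widetilde{B_\alpha}| \le c_{\mathcal{D}}(r_\alpha + 1)^{\mathcal{D}} \le c_{\mathcal{D}} M_\alpha^{\mathcal{D}}$, and since each qubit belongs to $\mathrm{O}(1)$ checks, the number of parity checks supported in $\widetilde{B_\alpha}$ is bounded by a constant times $M_\alpha^{\mathcal{D}}$. Summing over $\alpha$ and using $\sum_\alpha M_\alpha^{\mathcal{D}} \le (\sum_\alpha M_\alpha)^{\mathcal{D}} = M^{\mathcal{D}}$, then absorbing all constants into $c_{\mathcal{D}}$, yields the advertised bound $f(M) = \Delta c_{\mathcal{D}} M^{\mathcal{D}}$.
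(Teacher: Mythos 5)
Your proof is correct and follows essentially the same route as the paper's: decompose the stabilizer's support into components not linked by any shared check, enclose each connected component in a ball of radius bounded by its cardinality, invoke LTO on each ball, and bound the number of checks by the volume of the enlarged ball and the LDPC degree. Your treatment of the step from the LTO projector identity $P_{\widetilde{B_\alpha}}\mathcal{O}_\alpha P_{\widetilde{B_\alpha}}=c_\alpha P_{\widetilde{B_\alpha}}$ to an explicit factorization of $\mathcal{O}_\alpha$ into local checks (via the stabilizer/logical coset decomposition and the determination $c_\alpha=+1$) is more explicit than the paper's one-line assertion, but it is the same underlying argument borrowed from \cite{topo_Hastings}.
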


\begin{proof}
For any stabilizer $\OO$ acting on $S$ with $|S|\le d_{\rm TO}$, decompose the support into its connected components: $S=S_1\cup\cdots\cup S_m$, such that each pair $(S_\mu,S_\nu)$ is disconnected on the graph $G$. The Pauli string $\OO=\OO_1\otimes \cdots\otimes \OO_m$ can be also decomposed. Since $|S_\mu|\le|S|< d$, each $\OO_\mu$ is not a logical. Moreover, $\OO_\mu$ commutes with all the checks, because otherwise the whole $\OO$ would fail to be a stabilizer. As a result, each $\OO_\mu$ is also a stabilizer. 

Since $S_\mu$ is connected on the $\mathcal{D}$-dimensional lattice, there exists a ball $B_\mu$ containing $S_\mu$ that has radius $|S_\mu|\le d_{\rm TO}$. Using the LTO condition, $\OO_\mu$ can be expanded as a product of checks inside $B_\mu$, whose number is at most $\Delta c_\mathcal{D}|S_\mu|^{\mathcal{D}}$ using \eqref{eq:Gamma<finite_d} and the LDPC condition.  This holds for all $\mu=1,\cdots,m$, so the stabilizer $\OO$ can be expanded to checks whose number is upper bounded by \begin{equation}
    \Delta c_\mathcal{D}\sum_{\mu=1}^m |S_\mu|^{\mathcal{D}} \le\Delta c_\mathcal{D}\lr{\sum_{\mu=1}^m |S_\mu|}^{\mathcal{D}} =\Delta c_{\mathcal{D}} |S|^{\mathcal{D}}.
\end{equation}
This establishes check soundness.
\end{proof}

Since most finite-dimensional codes, like the toric code, have LTO, they also have check soundness with a power-law function $f$, up to $d_{\rm c}$ that is of the order of the linear system size. As we will see, any power-law $f$ is sufficient for perturbative stability in finite dimensions.  As an explicit example, we illustrate in Figure \ref{fig:TC} that the two-dimensional toric code has quadratic check soundness $f(M)\sim M^2$, and this arises from the product of $M^2$ checks in a connected region reducing to a product of Pauli matrices along the perimeter of that region, which has length $\gtrsim M$.

\subsection{Constant-rate quantum codes}\label{app:constratequantum}
Quantum codes with a constant rate and a distance growing with $n$ have to live in ``infinite dimensions" \cite{BPT_bound10}, i.e. on expander graphs.  The most interesting codes satisfy the  scaling in \eqref{eq:Gamma<Delta}. Unfortunately, we could not prove stability for a power-law soundness function for these expanding codes; without further assumptions, we require it to grow slower than quadratic: \begin{equation}\label{eq:f_2_alpha}
    f(M)\le c_f M^{2-\beta},
\end{equation}
for positive constants $c_f, \beta$.

Intriguingly, \eqref{eq:f_2_alpha} is satisfied for all three known constructions of good quantum LDPC codes \cite{Panteleev_2022,qTanner_codes,dinur2022good}, where \emph{linear} check soundness holds with $\beta=1$, up to $d_{\rm c}=\Theta(n)$. Linear check soundness is the best scaling of soundness one can hope for in an LDPC code. Although the constructions for good codes are involved and we do not attempt to describe them in this paper, properties are proven for these codes that straightforwardly lead to linear check soundness. Since these codes are all CSS codes, it suffices to investigate the $X$ part and $Z$ part of the code separately due to Proposition \ref{prop:CSS_sound}.

The most obvious reference is \cite{dinur2022good}, where linear check soundness is just a special case of its Definition 2.13 for small-set (co)-boundary expansion, which is satisfied for the code by combining its Theorem 3.8 and Corollary 3.10. It is also pointed out there that linear check soundness leads to local testability \cite{dinur2022LTC} when viewing the $3$-term chain complex associated with the CSS code, as a classical code with redundant checks. 

As the constructions for good quantum codes in \cite{Panteleev_2022,qTanner_codes} also generate good local testable classical codes, these quantum codes are expected to also have linear check soundness.
Indeed, linear check soundness can be verified more directly:
For the good quantum code in \cite{Panteleev_2022}, it is shown in its proof of Theorem 2 that $d^{(1)}_{\rm LM}$ scales linearly with $n$, where the locally minimal distance $d^{(1)}_{\rm LM}$ is defined (in its Definition 5) as the size of the smallest stabilizer or logical $Z$-string $\OO$ that is locally minimal, i.e. multiplying any single $Z$-check on $\OO$ cannot decrease its support. The same holds for the locally minimal distance $d^{(1),*}_{\rm LM}$ for $X$-strings. The locally minimal distance lower bounds the code distance to guarantee the goodness of the code. Here we show that it also yields linear check soundness in a straightforward way:

\begin{prop}
    Let $d_{\rm c}=\min\lr{d^{(1)}_{\rm LM}, d^{(1),*}_{\rm LM}}$ be the smaller locally minimal distance of a CSS code. The CSS code has $(d_{\rm c}, f)$-check-soundness for $f(M)=M$.
\end{prop}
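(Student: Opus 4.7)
The approach is a greedy reduction that exploits the defining property of the locally minimal distance, combined with the CSS splitting already handled in Proposition~\ref{prop:CSS_sound}.

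First, factor $\OO = \OO_X \cdot \OO_Z$ into its pure-$X$ and pure-$Z$ components. Because $\OO$ commutes with every parity check and CSS checks come only in pure-$X$ or pure-$Z$ form, $\OO_X$ and $\OO_Z$ each commute with all checks separately, so each is itself a stabilizer. Let $M_X = |\mathrm{supp}(\OO_X)|$ and $M_Z = |\mathrm{supp}(\OO_Z)|$; both are bounded by $M$.

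Next, reduce $\OO_Z$ greedily. Since $M_Z < d_{\rm c} \le d^{(1)}_{\rm LM}$, $\OO_Z$ has weight strictly below the code distance and so is a stabilizer (not a logical), and by the very definition of $d^{(1)}_{\rm LM}$ it cannot be locally minimal unless it is the identity. Hence there exists a $Z$-check $C_1$ such that $C_1 \OO_Z$ has strictly smaller support. Iterate: after each multiplication the running operator is still a $Z$-stabilizer of support $< d_{\rm c}$, so local minimality continues to fail and another support-reducing $Z$-check exists. Each step decreases the support by at least one, so the procedure terminates in at most $M_Z \le M$ steps, writing $\OO_Z$ as a product of at most $M_Z$ $Z$-checks. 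The identical argument with $d^{(1),*}_{\rm LM}$ writes $\OO_X$ as a product of at most $M_X \le M$ $X$-checks.

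Finally, each CSS half therefore satisfies soundness with $f_0(M) = M$, and Proposition~\ref{prop:CSS_sound} then yields $(d_{\rm c}, 2f_0)$-check-soundness of the full code, which is the claimed linear scaling in $M$ (any factor of two is harmless for the downstream stability analysis, which only needs $f$ to grow strictly slower than $M^2$). The only genuinely delicate point is the termination of the greedy reduction: at every step one must verify both that the running operator is a non-trivial stabilizer rather than a logical (so that local minimality applies and forces reducibility) and that its support is strictly below $d_{\rm c}$ (so that we remain in the regime controlled by $d^{(1)}_{\rm LM}$ or $d^{(1),*}_{\rm LM}$). Both facts follow automatically from the monotonic decrease of support, starting from a value $\le M < d_{\rm c} \le d$.
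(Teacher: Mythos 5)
Your proof is correct and follows essentially the same route as the paper: factor the stabilizer into CSS halves, greedily reduce each half by repeatedly multiplying by a support-shrinking check (which exists since the running operator stays below $d_{\mathrm{LM}}^{(1)}$ or $d_{\mathrm{LM}}^{(1),*}$ and hence cannot be locally minimal), and invoke Proposition~\ref{prop:CSS_sound} to combine. You are in fact slightly more careful than the paper in two respects: you verify explicitly that the invariants (non-logical stabilizer, support strictly below $d_{\rm c}$) are preserved at every step of the greedy reduction, and you honestly flag that Proposition~\ref{prop:CSS_sound} with $f_0(M)=M$ delivers $f(M)=2M$ rather than $M$ (since the $X$- and $Z$-supports can overlap on $Y$-sites), a factor of two the paper's statement elides but which, as you correctly note, is immaterial for the stability argument since only the scaling exponent matters.
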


\begin{proof}
    For any $Z$ stabilizer $\OO$ with $M< d_{\rm c}$, since $M<d^{(1)}_{\rm LM}$, it is not locally minimal, so there exists a check that when adding to the stabilizer, decreases its weight by at least $1$. One can repeat this argument for the new stabilizer which is still not locally minimal, so that after at most $M$ steps the stabilizer shrinks to the identity operator. The original $\OO$ is thus a product of $\le M$ checks. The same holds for $X$ stabilizers, so the $(d_{\rm c}, f)$-check-soundness follows from Proposition \ref{prop:CSS_sound}.
\end{proof}

Therefore the quantum code in \cite{Panteleev_2022} has linear check soundness up to $d_{\rm c}=\Theta(n)$. The same holds for the quantum Tanner code constructed in \cite{qTanner_codes}; one can see this explicitly by Remark 8.9 in \cite{sum_square22}, which implies $d^{(1)}_{\rm LM}=\Theta(n)$ for the quantum Tanner code ($d^{(1),*}_{\rm LM}=\Theta(n)$ also follows because the code construction is symmetric against exchanging $X\leftrightarrow Z$ parts of the code).

Another notable family of LDPC CSS codes is the quantum expander codes \cite{qExpander15} based on the hypergraph product (HGP) \cite{HGP} of two classical expander codes\footnote{A classical expander code can be constructed by e.g. taking a random bipartite graph with regular degrees $(\Delta_{\rm B}, \Delta_{\rm C})$ and identifying the two sets of vertices as bits and checks. The parity-check matrix of the code is thus the adjacency matrix of the bipartite graph.}, which is the first quantum LDPC construction that achieves constant rate and a power-law distance $d=\sqrt{n}$. A minimalist definition of an HGP code is defined as follows:
\begin{defn}[Hypergraph product code]\label{def:HGP}
    Consider two classical codes where the bits for the first (second) code are labeled by $b$ ($\widetilde{b}$), and the checks for the first (second) code are labeled by $c$ ($\widetilde{c}$). Let $\partial,\widetilde{\partial}$ be defined for each code as follows: $\partial b$ is the set of checks $\widetilde{b}$ that act on $b$, while $\partial c$ is the set of bits contained in check $c$. The hypergraph product of these two classical codes is a quantum CSS code with qubits $\Lambda=\{b\widetilde{b}\}\cup \{c\widetilde{c}\}$ and checks $\Lambda_{\rm c}= \Lambda_{\rm c}^X \cup \Lambda_{\rm c}^Z$. Here each $Z$ check in $\Lambda_{\rm c}^Z$ is labeled by $b\widetilde{c}$, while each $X$ check in $\Lambda_{\mathrm{c}}^X$ is labeled by $c\widetilde{b}$, and each are defined as \begin{equation}
        \mathcal{Q}_{b\widetilde{c}} = \prod_{\widetilde{b}\in \widetilde{\partial} \widetilde{c}} Z_{b\widetilde{b}} \cdot \prod_{c\in\partial b} Z_{c\widetilde{c}}, \;\;\;\;\;\;\; \mathcal{Q}_{c\widetilde{b}} = \prod_{\widetilde{c}\in \widetilde{\partial} \widetilde{b}} X_{c\widetilde{c}} \cdot \prod_{b\in\partial c} X_{b\widetilde{b}}.
    \end{equation}
\end{defn}

See Fig.~\ref{fig:HGP}(a) for a sketch of the Tanner graph (Definition \ref{def:stabilizer_code}) of the quantum HGP code.
This relatively simple construction might enable experimental realizations \cite{LRESC_Yifan,HGP_Rydberg24,HGP_gate_Ryd24}, such as the prototype in \cite{Hong:2024fsg}. Intriguingly, the HGP structure naturally leads to check soundness \cite{singleshot_Campbell_2019}:
\begin{prop}[Lemma 5 in \cite{singleshot_Campbell_2019}]\label{prop:HGP_quad_sound}
    Any HGP code generated by \revise{taking the product of} a classical code with itself, has quadratic check soundness $f(M)=M^2/4$ up to $d_{\rm c}=\min(d_{\rm cla}, d_{\rm cla}^T)$, where $d_{\rm cla}, d_{\rm cla}^T$ are the distances of the classical code and its transpose code (by transposing the parity-check matrix).
\end{prop}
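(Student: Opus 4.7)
The plan is to adapt the combinatorial argument of Lemma~5 in \cite{singleshot_Campbell_2019} to the self-product setting. Since an HGP code is CSS, I will invoke Proposition~\ref{prop:CSS_sound} to reduce the claim to establishing per-sector quadratic soundness $f_{X/Z}(M)=M^2/8$ for $X$- and $Z$-stabilizers separately, which sum to the stated $M^2/4$ after using $M_X,M_Z\le M$. The two sectors are handled symmetrically by exchanging the classical code with its transpose, which is why both $d_{\rm cla}$ and $d_{\rm cla}^T$ enter the hypothesis as $d_{\rm c}=\min(d_{\rm cla},d_{\rm cla}^T)$.

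For the $X$-sector, I would encode each stabilizer by its support matrices $(A,B)\in \mathbb{F}_2^{n\times n}\oplus \mathbb{F}_2^{m\times m}$ on $V_1=\{b\widetilde{b}\}$ and $V_2=\{c\widetilde{c}\}$. A direct calculation from Definition~\ref{def:HGP} shows that commutation with all $Z$-checks is equivalent to the single linear relation $AH^T=H^T B$ over $\mathbb{F}_2$. A product of $X$-checks specified by the indicator matrix $T\in \mathbb{F}_2^{m\times n}$ produces the stabilizer $(A,B)=(H^T T,\,TH^T)$, with $|T|$ being the number of checks used. The goal becomes: given $(A,B)$ of weight $M_X=|A|+|B|<d_{\rm cla}^T$, construct some $T$ solving this system with $|T|\le M_X^2/8$.

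The heart of my argument would be a cleaning procedure. Let $R,C\subseteq[n]$ denote the nonzero rows and columns of $A$, and $R',C'\subseteq[m]$ those of $B$; so $|R|,|C|\le |A|$ and $|R'|,|C'|\le |B|$. I would show that $T$ may be chosen with all nonzero entries confined to the rectangle $R'\times C$, which gives $|T|\le |R'|\cdot|C|\le |A|\cdot|B|$; running the argument with rows and columns swapped yields the symmetric bound $|T|\le |R|\cdot|C'|\le |A|\cdot|B|$. Averaging the two confinements and applying AM--GM to $|A|+|B|=M_X$ then produces $|T|\le M_X^2/8$, which combined with Proposition~\ref{prop:CSS_sound} delivers the claimed $f(M)=M^2/4$.

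The main obstacle will be justifying the confinement of $T$ to $R'\times C$. A generic solution $T$ of $(A,B)=(H^T T,\,TH^T)$ can have stray entries outside this rectangle; removing them requires adding an element of the joint kernel, namely a matrix whose rows lie in $\ker H$ and columns in $\ker H^T$. Termination of the cleaning without introducing new stray entries will rely crucially on $M_X<d_{\rm cla}^T$: any stubborn stray entry of $T$ would combine with a row or column of $(A,B)$ to produce a classical codeword of weight strictly below $d_{\rm cla}^T$, contradicting the transpose-code distance. Making this elimination argument precise, while tracking the $1/8$ constant that is needed for the per-sector bound, will be the technical heart of the proof.
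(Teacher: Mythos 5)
The paper does not prove Proposition~\ref{prop:HGP_quad_sound}: it imports it verbatim as Lemma~5 of \cite{singleshot_Campbell_2019}, so there is no internal argument to compare yours against. Your matrix-theoretic set-up---encoding an $X$-type stabilizer by its support pair $(A,B)$ and seeking a small $T$ with $(A,B)=(H^{T}T,\,TH^{T})$---is the natural framework and very likely tracks the cited proof, but as written the proposal has two real gaps.

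First, the arithmetic does not produce the constant you need. Proposition~\ref{prop:CSS_sound} \emph{doubles} the per-sector soundness function, so to land on $f(M)=M^2/4$ in total you must establish a per-sector bound of $M_X^2/8$. However, each confinement you invoke gives only $|T|\le|A|\,|B|$, and AM--GM applied to $|A|+|B|=M_X$ yields $|T|\le M_X^2/4$, not $M_X^2/8$. ``Averaging the two confinements'' is not an operation that sharpens this: if $T$ can be taken to live in either of two rectangles you may take the smaller one, but both have at most $|A|\,|B|$ cells. (Moreover the ``swapped'' rectangle $R\times C'$ lives in $[n]\times[m]$ rather than $[m]\times[n]$, so it is not even an admissible index set for $T$; the only natural confinement is $R'\times C$, because $B=TH^{T}$ controls the rows of $T$ and $A=H^{T}T$ controls the columns.) As sketched, your route caps out at total soundness $M^2/2$, a factor of two above the stated claim.

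Second, the confinement of $T$ to $R'\times C$ is the crux and is left unproved. A row of $T$ outside $R'$ is necessarily a classical codeword, and a column outside $C$ is necessarily a transpose codeword; one cannot delete such a row by adding a kernel matrix supported only on that row, since the addend's columns would have weight one and would have to lie in $\ker H^{T}$, impossible unless the transpose code has distance one. The actual elimination must use the hypothesis $M_X<d_{\mathrm{c}}=\min(d_{\mathrm{cla}},d_{\mathrm{cla}}^{T})$ to forbid any nonzero codeword rows or columns surviving in $T$---this is exactly what is missing. You should follow the cited Lemma~5 for the genuine elimination and then recompute what constant actually survives after passing through Proposition~\ref{prop:CSS_sound}.
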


Although quadratic check soundness is not sufficient for our purpose, here we prove that quantum expander codes actually have linear check soundness up to $d_{\rm c}=\Theta(\sqrt{n})$. The idea is to exploit a closely-related concept: 
\begin{defn}[Check expansion]
    Let $\eta >0$. We say a stabilizer code has $(d_{\rm c}', \eta)$-linear-check-expansion, if for any set of $M'< d_{\rm c}'$ parity checks $\mathcal{Q}^{(1)},\cdots,\mathcal{Q}^{(M')}\in \Lambda_{\rm c}$, their product $\mathcal{Q}^{(1)}\cdots \mathcal{Q}^{(M')}$, as a stabilizer, acts on at least $\eta M'$ sites.
\end{defn}
This definition can be easily generalized beyond linear expansion functions like that for soundness; nevertheless this definition for linear case suffices for our purposes. Check expansion for classical codes can be defined in exactly the same way.  Notice that check expansion is weaker than check soundness for non-redundant codes: in that check expansion allows in principle for a product of $>d^\prime_{\mathrm{c}}$ checks to act on a small number of qubits.  For redundant codes, soundness allows for the product of many checks to be a small stabilizer as long as the product is not the smallest possible representation of the small stabilizer.

\begin{prop}\label{prop:HGP_linear_sound}
    Consider a CSS code on $n$ physical qubits that is the HGP of a classical code and itself, where the classical code has $n_{\rm cla}$ physical bits and $m_{\rm cla}=\zeta n_{\rm cla}$ checks. Suppose the classical code, together with its transpose code, have linear distance $d_{\rm cla}, d_{\rm cla}^T\ge \xi n_{\rm cla}$, and has $(\chi n_{\rm cla}, \eta)$-linear-check-expansion. Then the HGP quantum code has $(\chi' \sqrt{n}, f)$-check-soundness with $f(M)=6M/\eta$, and \begin{equation}\label{eq:chi'=chi}
        \chi'=\frac{1}{\sqrt{1+\zeta^2}}\min(\sqrt{2}\chi, \xi).
    \end{equation}
\end{prop}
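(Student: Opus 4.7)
By Proposition~\ref{prop:CSS_sound}, it suffices to separately bound the number of $X$- (or $Z$-) checks needed to express any $X$- (resp.\ $Z$-) stabilizer of Hamming weight $\le M$; if each case requires at most $3M/\eta$ checks, Proposition~\ref{prop:CSS_sound} doubles this to the claimed $f(M)=6M/\eta$. The $X$- and $Z$-cases are exchanged by the natural symmetry of taking the HGP of a classical code with itself, so I will focus on $Z$-stabilizers.

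Let $H\in\mathbb{F}_2^{m_{\mathrm{cla}}\times n_{\mathrm{cla}}}$ be the classical parity-check matrix, and identify any subset of $Z$-checks with a matrix $T\in\mathbb{F}_2^{n_{\mathrm{cla}}\times m_{\mathrm{cla}}}$ indexed by pairs $(b,\widetilde c)$. A direct computation from Definition~\ref{def:HGP} (using $\widetilde H=H$) shows that the stabilizer $\prod_{(b,\widetilde c)\in T}\mathcal{Q}_{b\widetilde c}$ has $Z$-support $TH$ on the qubits $\{b\widetilde b\}$ and $Z$-support $HT$ on the qubits $\{c\widetilde c\}$. Its Hamming weight is therefore $M=|TH|+|HT|$, while $|T|$ counts the number of checks used. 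Fixing a stabilizer $\mathcal O$, I let $T$ be a minimum-weight representation and aim to prove $|T|\le 3M/\eta$.

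The core step is a row-by-row expansion argument. Decompose $|T|=\sum_b |T_b|$. Each row $T_b\subset\{\widetilde c\}$ is a set of parity checks of the tilde classical code, whose sum of supports is exactly the row $T_b H$, and thus contributes $|T_b H|$ to $|TH|$. By the check-expansion hypothesis, whenever $|T_b|<\chi n_{\mathrm{cla}}$ we have $|T_b H|\ge\eta|T_b|$. Summing over rows then yields $|TH|\ge\eta\sum_{b:\,|T_b|<\chi n_{\mathrm{cla}}}|T_b|$, and an analogous analysis on columns---where $T^{\widetilde c}$ is viewed as a subset of checks of the transpose code using the HGP-with-self symmetry---produces a companion bound for $|HT|$ in terms of the small columns. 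Therefore the contribution to $|T|$ from rows and columns of size below $\chi n_{\mathrm{cla}}$ is at most $M/\eta$.

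The main obstacle is to handle possible oversized rows $|T_b|\ge\chi n_{\mathrm{cla}}$ (and analogously large columns), where the expansion bound does not directly apply. For a minimum representation $T$, such rows can only be reduced modulo rank-one ``redundancies'' $c_1 c_2^\top$ with $c_1\in\ker H$ and $c_2\in\ker H^\top$, i.e.\ pairs of codewords of the classical code and its transpose. The linear-distance hypotheses $d_{\mathrm{cla}},d_{\mathrm{cla}}^T\ge\xi n_{\mathrm{cla}}$ control the minimum weight of such codewords, and combined with the quadratic bound $|T|\le M^2/4$ from Proposition~\ref{prop:HGP_quad_sound} (applicable since $M<\chi'\sqrt n\le\xi n_{\mathrm{cla}}$), one can iteratively shrink every oversized row and column until the row/column expansion bound above applies uniformly, with an overhead absorbed in the constant factor $3$. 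The tuning $\chi'=\min(\sqrt 2\chi,\xi)/\sqrt{1+\zeta^2}$ is exactly chosen to make the three thresholds mutually compatible: the expansion radius $\chi n_{\mathrm{cla}}$, the distance $\xi n_{\mathrm{cla}}$, and the conversion $\sqrt n=\sqrt{1+\zeta^2}\,n_{\mathrm{cla}}$. The hardest part of executing this plan is that rank-one redundancies couple rows to columns, so the shift operations must be orchestrated simultaneously across $T$ rather than one row at a time; this row--column entanglement is the principal combinatorial hurdle separating the easy row-expansion bound $M/\eta$ from the full linear soundness.
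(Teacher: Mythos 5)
Your plan correctly identifies the reduction to $Z$-stabilizers via Proposition~\ref{prop:CSS_sound}, the matrix encoding $T\mapsto (TH,HT)$ with weight $M=|TH|+|HT|$, and the row-by-row use of classical check expansion to control rows with $|T_b|<\chi n_{\mathrm{cla}}$. But the proof has a genuine gap at exactly the point you flag: the treatment of oversized rows and columns. Your row and column bounds only control checks lying in a \emph{small row or a small column}; a check sitting in a large row \emph{and} a large column escapes both counts, and your proposal to remove such entries by iteratively subtracting rank-one redundancies $c_1 c_2^\top$ from $\ker H\times\ker H^\top$ is not a viable repair. The linear-distance hypothesis only tells you such redundancies have weight $\ge\xi^2 n_{\mathrm{cla}}^2$ (or are zero), so adding one to a minimum-weight $T$ cannot shrink a single row without disturbing many others, and you concede yourself that orchestrating these operations across rows and columns simultaneously is the ``principal combinatorial hurdle'' --- i.e., the step is unresolved.

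The paper sidesteps this hurdle entirely with a different strategy. It never tries to bound the size of a minimal representation $T$ directly. Instead it (i) observes that by the quadratic soundness of Proposition~\ref{prop:HGP_quad_sound}, any stabilizer of weight $M<\chi'\sqrt n$ already admits \emph{some} representation with $M'\le M^2/4<(\chi')^2 n/4$ checks; (ii) proves a \emph{check-expansion} statement for the HGP code: the product of any $M'<(\chi')^2 n/4$ $Z$-checks acts on at least $\eta M'/3$ qubits. Applying (ii) to the representation from (i) immediately gives $M'\le 3M/\eta$, and combining $X$/$Z$ as in Proposition~\ref{prop:CSS_sound} yields $f(M)=6M/\eta$. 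The heart of step (ii) is a clean counting dichotomy that replaces your shrinking argument: call a row (column) ``full'' if it contains $\ge\chi n_{\mathrm{cla}}$ of the chosen checks. Either the unfull rows carry $\ge M'/3$ of the check weight, in which case classical row-expansion already yields $\ge\eta M'/3$ affected qubits; or the full rows carry $\ge 2M'/3$ of the weight. In the latter case the number of full rows is $<\chi n_{\mathrm{cla}}/2$, so each column receives $<\chi n_{\mathrm{cla}}/2$ checks from full rows, and a column can only become full by drawing $\ge\chi n_{\mathrm{cla}}/2$ checks from the (small) unfull-row pool. Since that pool has weight $<M'/3$, a straightforward accounting shows the unfull columns still retain $\ge M'/3$ of the check weight, and classical column-expansion finishes the bound. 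This is where the factor $3$ comes from --- not as slack absorbed from a shrinking overhead, but as the explicit split in the dichotomy --- and the threshold $\chi'$ in \eqref{eq:chi'=chi} is tuned precisely so that the quadratic bound $M^2/4$ stays inside the check-expansion cutoff and inside $\min(d_{\mathrm{cla}},d_{\mathrm{cla}}^T)$. You would need to replace your shrinking plan with this (or an equivalent) full-vs-unfull counting argument for the proof to go through.
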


This establishes linear check soundness for quantum expander codes with $d_{\rm c}=\Theta(\sqrt{n})$, so long as its mother classical expander code satisfies all the conditions in Proposition \ref{prop:HGP_linear_sound} with $\zeta,\xi,\chi,\eta$ being constants. More precisely, the linear-check-expansion property comes from the unique neighbor expansion property for a random bipartite graph of degree $(\Delta_{\rm B}, \Delta_{\rm C})$, where $\Delta_{\rm B},\Delta_{\rm C}>2$; see e.g. Theorem A.6 in \cite{Hong:2024vlr}, or \cite{ModernCodingTheory}, for details. 

\begin{figure}
    \centering
    \includegraphics[width=0.9\linewidth]{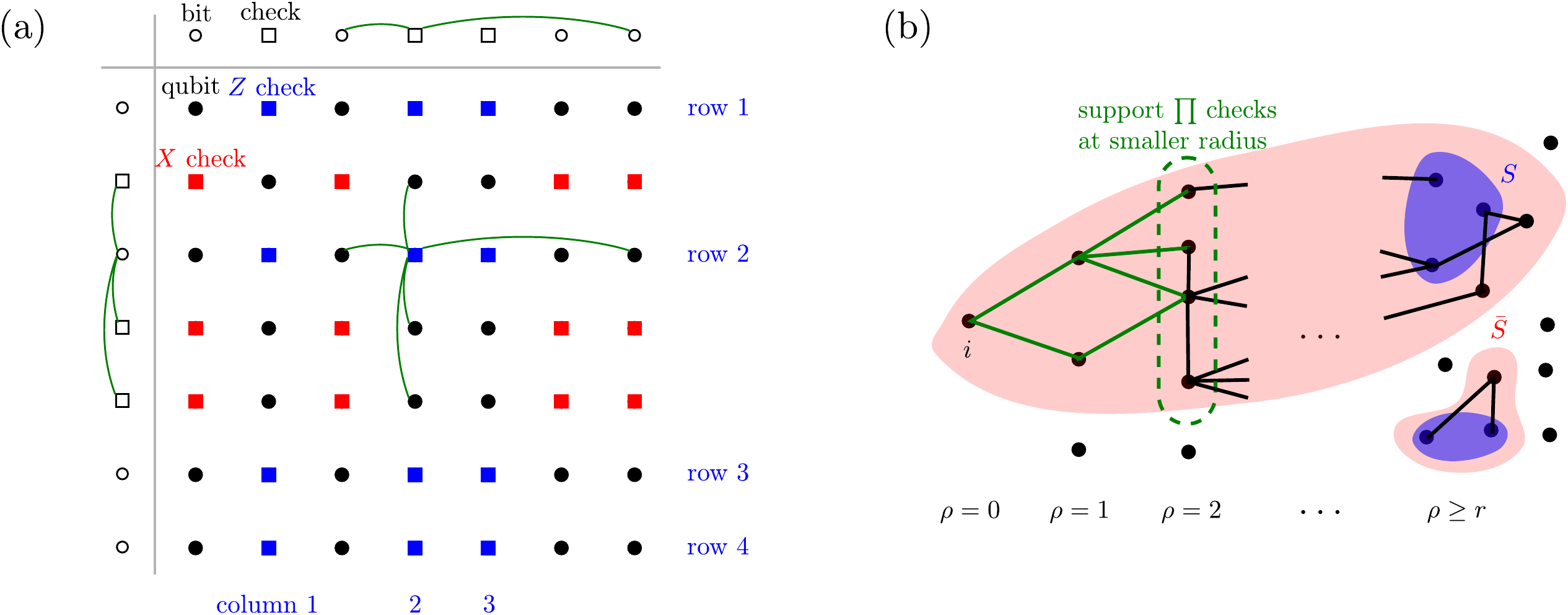}
    \caption{(a) An illustration of the Tanner graph of a quantum hypergraph product code obtained from two classical codes. Linear check soundness of the quantum code can be obtained from expansion properties of the mother classical codes, as we show in Proposition \ref{prop:HGP_linear_sound}. (b) A sketch of obtaining local indistinguishability \eqref{eq:local_indisting} from check soundness. A stabilizer in $S$ (which can be disconnected as shown by the shaded blue regions) is minimally expanded into product of checks in $\bar{S}$ (shaded red regions), where the checks are shown by edges connecting the qubits (black dots). Here we plot each check as acting on $2$ qubits for simplicity. For $\bar{S}$ to contain $i$ that is distance $r$ away from $S$, the checks need to ``push'' its support outwards to larger radius while constrained the check soundness condition. This results in a $r$-determined lower bound on $|S|$. }
    \label{fig:HGP}
\end{figure}

\begin{proof}[Proof of Proposition \ref{prop:HGP_linear_sound}]

We first show that the soundness result of Proposition \ref{prop:HGP_linear_sound} follows if the HGP code has $((\chi')^2n/4, \eta/6)$-linear-check-expansion: Any stabilizer of weight \begin{equation}\label{eq:stabilizer<distance}
    M<\chi' \sqrt{n}\le \xi n_{\rm cla}\le \min(d_{\rm cla}, d_{\rm cla}^T),
\end{equation}
is expandable to $M'\le M^2/4<(\chi')^2n/4$ checks due to Proposition \ref{prop:HGP_quad_sound}. Here in \eqref{eq:stabilizer<distance} we have used \eqref{eq:chi'=chi}, and \begin{equation}\label{eq:n<ncla}
    n=n_{\rm cla}^2 + m_{\rm cla}^2\le (1+\zeta^2)n_{\rm cla}^2,
\end{equation}
from Definition \ref{def:HGP} of HGP. Because the number of checks $M'$ is bounded by the linear-check-expansion cutoff, we can apply the check-expansion property to these $M'$ checks and get $M\ge \eta M'/6$, so that $M'\le 6M/\eta$.

We then use the check expansion of the classical code to prove the linear-check-expansion property above of the HGP code.
We first focus on $Z$-checks, and aim to show that the $Z$-checks, as checks of a classical code, have ($\chi_0,\eta_0$)-linear-check-expansion with \begin{equation}\label{eq:Zcheck_exp}
 (\chi_0,\eta_0) =    (\chi{^\prime 2} n/4, \eta/3).
\end{equation}
The same property will hold for $X$-checks as well because of the $X\leftrightarrow Z$ symmetry of the construction. Since the $Z$ and $X$ checks are independent, this will lead to the desired $((\chi')^2n/4, \eta/6)$-linear-check-expansion for the quantum HGP code, analogously to Proposition \ref{prop:CSS_sound}.

Observe that the ($Z$-)checks are organized in rows and columns, as shown in Fig.~\ref{fig:HGP}(a). Consider a given set of $M<(\chi')^2n/4$ ($Z$-)checks $\mathsf{Q}\subset \Lambda_{\rm c}$ that multiply to stabilizer $\OO$. For each row labeled by $\mathsf{r}$, if it contains \begin{equation}\label{eq:Mr<ncla}
    M_{\mathsf{r}}<\chi n_{\rm cla}
\end{equation} 
checks in $\mathsf{Q}$, then by the check expansion of the classical code, there are at least $\eta M_{\mathsf{r}}$ qubits in this row that $\OO$ acts nontrivially on. Here we have used the fact that qubits in this row are not acted on by ($Z$-)checks in other rows. We say row $\mathsf{r}$ is unfull (and full otherwise) if \eqref{eq:Mr<ncla} is satisfied (violated). The same holds for each column $\mathsf{c}$: \begin{equation}
    M_{\mathsf{c}}<\chi n_{\rm cla}, \rarrow \OO \text{ acts on } \ge \eta M_{\mathsf{c}} \text{ qubits in column } \mathsf{c}. \label{eq:unfullcolumn0}
\end{equation}

If the set of all unfull rows contains a sufficient portion $\ge 1/3$ of the ``weight'' $M$ of the checks in $\mathsf{Q}$: \begin{equation}\label{eq:unfull_row<}
    \sum_{\mathsf{r} \text{ unfull}}M_{\mathsf{r}} \ge \frac{M}{3},
\end{equation}
then the stabilizer $\OO$ acts on at least $\eta M/3$ qubits from the classical check expansion property for these rows, so that the desired check-expansion \eqref{eq:Zcheck_exp} is satisfied for this particular set of $M$ checks.

Otherwise if \eqref{eq:unfull_row<} does not hold, the set of \emph{full} rows must contain a sufficient portion of the check weight: \begin{equation}\label{eq:full_row>}
    \sum_{\mathsf{r} \text{ full}}M_{\mathsf{r}} \ge \frac{2M}{3}.
\end{equation}
For this case, we aim to show that unfull columns contain sufficient check weight: \begin{equation}\label{eq:unfull_column}
    \sum_{\mathsf{c} \text{ unfull}}M_{\mathsf{c}} \ge \frac{M}{3},
\end{equation}
so that the stabilizer still acts on at least $\eta M/3$ qubits from the argument above with rows and columns switched. To this end, we separate the checks in $\mathsf{Q}=\mathsf{Q}_{\rm full}\cup \mathsf{Q}_{\rm unfull}$ into those contained in the full rows, and others in the unfull rows, with \begin{equation}\label{eq:Q_unfull<}
    |\mathsf{Q}_{\rm unfull}|< \frac{M}{3},
\end{equation}
from \eqref{eq:full_row>}. The number of full rows is bounded by \begin{equation}\label{eq:full_row<}
    |\{\mathsf{r}: \mathsf{r} \text{ full}\}|\le \frac{M}{\chi n_{\rm cla}} < \frac{(\chi')^2n}{4\chi n_{\rm cla}}\le \frac{1}{2}\chi n_{\rm cla},
\end{equation}
using \eqref{eq:chi'=chi} and \eqref{eq:n<ncla}.

If there are no other checks $|\mathsf{Q}_{\rm unfull}|=0$, then because of \eqref{eq:full_row<} and \eqref{eq:unfullcolumn0}, all columns are unfull, which implies that \eqref{eq:unfull_column} holds.

We then think of adding back the checks in $\mathsf{Q}_{\rm unfull}$ adversarially to make some of the columns full. \eqref{eq:full_row<} implies that each time a column is made full, it consists of $\ge \chi n_{\rm cla}/2$ checks in $\mathsf{Q}_{\rm unfull}$ and $< \chi n_{\rm cla}/2$ checks in $\mathsf{Q}_{\rm full}$. In other words, by costing $x$ checks in $\mathsf{Q}_{\rm unfull}$, it only decreases the quantity in \eqref{eq:unfull_column} by at most $x$. As a result, costing all checks in $\mathsf{Q}_{\rm unfull}$ only decreases the quantity in \eqref{eq:unfull_column} by an amount bounded by \eqref{eq:Q_unfull<}. Therefore, \begin{equation}
    \sum_{\mathsf{c} \text{ unfull}}M_{\mathsf{c}} \ge |\mathsf{Q}_{\rm full}|-|\mathsf{Q}_{\rm unfull}|\ge M/3,
\end{equation}
from \eqref{eq:Q_unfull<},
which proves \eqref{eq:unfull_column}. Recall that this establishes linear-check-expansion \eqref{eq:Zcheck_exp} for the $Z$ and $X$ checks separately, which combines to $((\chi')^2n/4, \eta/6)$-linear-check-expansion for the whole quantum code. Proposition \ref{prop:HGP_linear_sound} then holds by the argument at the beginning of the proof, which boosts the quadratic check soundness of Proposition~\ref{prop:HGP_quad_sound} to linear.
\end{proof}

\subsection{Local indistinguishability from check soundness}
We now address the main  concern of this section -- showing that check soundness provides strong constraints on locality for stabilizer operators.  We will see that this technical issue is crucial in the proof of our main theorem.

Let $S\subset \Lambda$ be a set of distance $r=\mathsf{d}(i,S)$ away from a given site $i\in \Lambda$. Suppose there exists a stabilizer $\OO$ in $S$ such that its minimal expansion contains a check acting on $i$. Given check soundness and $|S|\le d_{\rm c}$, how large should $|S|$ be for this to happen? 

First of all, $|S|$ cannot be a constant if $r$ is large: Any stabilizer in $S$ can be minimally-expanded to at most $f(|S|)$ number of checks that are all supported in a region $\bar{S}$. The extra support $\bar{S}\setminus S$ has to be connected to $S$, because the expansion is not minimal otherwise. Therefore, in order for $\bar{S}$ to contain $i$ as desired, we need it to connect $S$ and $i$ such that \begin{equation}\label{eq:r<barS}
    r\le |\bar{S}|\le \Delta\cdot f(|S|),
\end{equation}
which gives a lower bound on $|S|$ by inverting function $f$. For linear check soundness, $|S|=\Omega(r)$ for example.
Here the second inequality in \eqref{eq:r<barS} is because the $\le f(|S|)$ checks in the minimal expansion acts on at most $\Delta f(|S|)$ sites. 

The above argument can be ``bootstrapped'' to yield a much stronger bound than \eqref{eq:r<barS}, for which we sketch the idea in Fig.~\ref{fig:HGP}(b): Suppose the minimal expansion of stabilizer $\OO$ has the form \begin{equation}\label{eq:O=Orho}
    \OO=\prod_{\rho=1}^{r+1} \OO_{\rho}, 
\end{equation}
where $\OO_\rho$ for $\rho\le r$ is product of $g(\rho)$ checks that are contained in ball $B_{i,\rho}$ but not $B_{i,\rho-1}$, and $\OO_{r+1}$ is product of the remaining checks not contained in $B_{i,r}$. 

If $g(1)\ge 1$ checks act on $i$, then the product of these checks $\OO_1$ is supported in $B_{i,1}\setminus B_{i,0}$, on at least $f^{-1}(g(1))$ sites. The reason is that these checks is a minimal expansion of $\OO_1$: Otherwise \eqref{eq:O=Orho} is not a minimal expansion for $\OO$. The weight of the stabilizer $\OO_1$ is then no smaller than $f^{-1}(g(1))$ due to check soundness, if $f^{-1}(g(1))<d_{\rm c}$. As a consequence, $\Delta\cdot g(2)\ge f^{-1}(g(1))$ because the $g(2)$ checks in the next layer has to move the support of the product outward (i.e. stabilizer $\OO_1\OO_2$ is supported in $B_{i,2}\setminus B_{i,1}$), while each check acts on at most $\Delta$ sites.

The above argument holds for any layer, so we have iteration relations \begin{equation}
    \Delta\cdot g(\rho+1) \ge f^{-1}(g(1)+\cdots + g(\rho)), \quad \mathrm{if}\quad f^{-1}(g(1)+\cdots + g(\rho))<d_{\rm c}.
\end{equation}
This is solved by $g(1)+\cdots + g(\rho) \ge \widetilde{g}(\rho)$, where the increasing function $\widetilde{g}(\rho)$ is determined by the following iteration relation \begin{equation}\label{eq:g=g+f}
    \widetilde{g}(\rho+1) = \widetilde{g}(\rho) + \min\mlr{d_{\rm c}, f^{-1}(\widetilde{g}(\rho)) }/\Delta, \where \widetilde{g}(1)= 1.
\end{equation}
Here the minimum function comes from the fact that, if $f^{-1}(\widetilde{g}(\rho))\ge d_{\rm c}$, the next layer needs at least $d_{\rm c}$ support to satisfy check soundness.

Since $S$ is minimally expanded with $|S|< d_{\rm c}$, check soundness guarantees that \begin{equation}
    f(|S|) \ge g(1)+\cdots + g(r) \ge \widetilde{g}(r), \rarrow |S|\ge \widetilde{f}(r),
\end{equation}
where the increasing function $\widetilde{f}(r)= f^{-1}(\widetilde{g}(r))$ is a function determined by $\widetilde{g},f$, and thus determined by $f$ using the following iteration relation obtained from \eqref{eq:g=g+f}: \begin{equation}\label{eq:tf=f}
    \widetilde{f}(r+1) := f^{-1}\glr{ f\lr{\widetilde{f}(r)} + \min\mlr{d_{\rm c}, \widetilde{f}(r) }/\Delta}, \where \widetilde{f}(1) = f^{-1}(1).
\end{equation}
We summarize the above by the following Lemma: 
\begin{lem}\label{lem:S<tfr}
    For a set $S$, if $|S|< \min(d, d_{\rm c})$ and $|S|<\widetilde{f}(r)$ for some $r$ with function $\widetilde{f}$ determined by $f$ in \eqref{eq:tf=f}, then any stabilizer in $S$ can be expanded into product of checks supported in region $\bar{S}=\{i\in\Lambda: \mathsf{d}(i,S)\le r\}$. Furthermore, the enlarged region $\bar{S}$ has local indistinguishability in the sense that \eqref{eq:local_indisting} holds for any $\OO$ in $S$.
\end{lem}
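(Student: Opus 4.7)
I would argue by contradiction for the first assertion: suppose the minimal check expansion of the stabilizer $\OO$ contains a check whose support reaches a site $i^*$ with $\mathsf{d}(i^*,S)\ge r$, and derive $|S|\ge \widetilde{f}(r)$, contradicting the hypothesis. Following the decomposition \eqref{eq:O=Orho}, I would organize the checks of this minimal expansion into shells around $i^*$: let $\OO_\rho$ be the product of those checks whose support $C$ satisfies $C\subset B_{i^*,\rho}$ but $C\not\subset B_{i^*,\rho-1}$, for $1\le\rho\le r$, with $\OO_{r+1}$ collecting the rest. Since every check has diameter $\le 1$ in the code graph, any check belonging to $\OO_\rho$ with $\rho\ge 2$ avoids $i^*$, while the check guaranteed by the contradiction hypothesis necessarily lies in $\OO_1$, so $g(1):=|\OO_1|\ge 1$.

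The iteration \eqref{eq:g=g+f} then follows from three observations about each partial product $\OO_1\cdots\OO_\rho$. First, it is a stabilizer as a product of checks. Second, it is supported in the outer shell $B_{i^*,\rho}\setminus B_{i^*,\rho-1}$: the factor $\OO$ itself acts trivially on $B_{i^*,\rho-1}\subset B_{i^*,r-1}$ because $\mathsf{d}(i^*,S)\ge r$, while each check appearing in $\OO_{\rho+1}\cdots\OO_{r+1}$ has all of its support at distance $\ge \rho$ from $i^*$ (again by the diameter-$\le 1$ property), so the complementary factor is trivial on $B_{i^*,\rho-1}$ too. Third, $g(1)+\cdots+g(\rho)$ is the minimum check-expansion size of $\OO_1\cdots\OO_\rho$, because any shorter expansion could be spliced into the expansion of $\OO$ and would contradict its minimality. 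Applying $(d_{\rm c},f)$-check-soundness to the second and third observations, with a separate treatment of the regime where the weight already saturates $d_{\rm c}$, bounds the weight of $\OO_1\cdots\OO_\rho$ from below by $\min[d_{\rm c},\,f^{-1}(g(1)+\cdots+g(\rho))]$. Since $\OO_{\rho+1}$ must cancel this entire weight in order that $\OO_1\cdots\OO_{\rho+1}$ lands in the next outer shell, and each individual check acts on at most $\Delta$ sites, one obtains $\Delta\, g(\rho+1)\ge \min[d_{\rm c},\,f^{-1}(g(1)+\cdots+g(\rho))]$. Iterating from $g(1)\ge 1=\widetilde{g}(1)$ yields $g(1)+\cdots+g(r)\ge\widetilde{g}(r)$, and a final application of check soundness to $\OO$ gives $f(|S|)\ge g(1)+\cdots+g(r+1)\ge\widetilde{g}(r)$, so $|S|\ge\widetilde{f}(r)$, the desired contradiction. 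Hence every check in the minimal expansion has support inside $\bar{S}$.

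For the local-indistinguishability statement, I would expand any operator $\OO$ supported on $S$ in the Pauli basis, $\OO=\sum_a c_a P_a$. If a Pauli string $P_a$ anticommutes with some check in $\bar S$, then $P_{\bar S}P_aP_{\bar S}=0$. Otherwise $P_a$ commutes with every check in $\bar S$; because any check whose support meets $S$ is automatically contained in the $1$-neighborhood of $S$ and hence in $\bar S$, such a $P_a$ in fact commutes with every global check. Since $|S|<d$, $P_a$ cannot be a nontrivial logical operator and is therefore a stabilizer. Applying the already-proved first part to $P_a$ itself, it is a product of checks supported in $\bar S$, so $P_a P_{\bar S}=\pm P_{\bar S}$ and consequently $P_{\bar S}P_aP_{\bar S}=\pm P_{\bar S}$. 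Summing over $a$ gives $P_{\bar S}\OO P_{\bar S}=c_\OO P_{\bar S}$, which is \eqref{eq:local_indisting}.

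The most delicate step is the outer-shell support claim for $\OO_1\cdots\OO_\rho$, which is what makes the recursion tight: it relies on simultaneously exploiting the diameter-$\le 1$ property of check supports, the minimality of the global expansion, and the distance hypothesis $\mathsf{d}(i^*,S)\ge r$. Once this is in place, the remaining care-point is bookkeeping the $\min$ against $d_{\rm c}$ in the recursion, which is needed precisely when $f^{-1}(\widetilde{g}(\rho))$ would otherwise exceed the regime of validity of check soundness; this is what forces the specific form of \eqref{eq:g=g+f} rather than a cleaner iteration purely in $f^{-1}$.
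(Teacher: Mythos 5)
Your proposal is correct and follows essentially the same route as the paper: the same shell decomposition of the minimal check expansion around a far site, the same observation that the partial product $\OO_1\cdots\OO_\rho$ is a stabilizer supported in the $\rho$-th shell whose $g(1)+\cdots+g(\rho)$ checks constitute a minimal expansion (by splicing), the same soundness-driven recursion that leads to $\widetilde{g}$ and $\widetilde{f}$, and the same three-case Pauli-basis argument (logical/error/stabilizer) for local indistinguishability. Your write-up is somewhat more explicit than the paper's in a few useful places -- you phrase the first part as a formal contradiction, you state the diameter-$\le 1$ property of check supports and use it carefully in the shell-support claim, and you correctly point to a check in $\bar{S}$ (rather than $S$) for the error-Pauli case -- but none of this changes the substance of the argument.
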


\begin{proof}
    We have shown the stabilizer property above: There we focused on a given site $i$, but we have actually shown that the minimal expansion of any stabilizer in $S$ cannot act on any site $i$ that is distance $r$ away, given $|S|< d_{\rm c}$ and $|S|<\widetilde{f}(r)$. 
    
    The Lemma then follows from this stabilizer property, using an argument closely following Lemma 1 in \cite{topo_Hastings}
    : Since \eqref{eq:local_indisting} is linear in $\OO$, it suffices to investigate the three classes of Pauli strings: \begin{enumerate}
        \item $S$ does not contain logical operators $\OO$ due to $|S|<d$. 
        \item Any error operator $\OO$ in $S$ anti-commutes with at least one check in $S$ and thus has $P_S \OO P_S=0$.
        \item Any stabilizer $\OO$ in $S$ is expanded into checks contained in $\bar{S}$, so acts as identity in the subspace $P_{\bar{S}}$.
    \end{enumerate} 
    A general operator $\OO$ in $S$ thus satisfies \eqref{eq:local_indisting} by expanding to Paulis. Note that the number $c_\OO$ is only contributed by stabilizers, which have syndrome $\bs=\emptyset$.
\end{proof}

If $f$ grows slowly, function $\widetilde{f}(r)$ will increase with $r$ fast enough. We will need this property in the stability proof to fight against the possible fast expansion of the graph. This is given precisely by the following Lemma:

\begin{lem}\label{lem:f_growth}
If either of the following holds: \begin{enumerate}
    \item $f$ grows slower than quadratic as \eqref{eq:f_2_alpha} for $\beta>0$.
    \item $H_0$ has finite dimension $\mathcal{D}$ and $f$ grows slower than a polynomial of bounded degree: \eqref{eq:f_2_alpha} holds for a possibly negative constant $\beta$.
\end{enumerate}
Then there exist constants $c_f',c_f'', \alpha>0$ determined by $\Delta, c_f,\beta$ (and $\mathcal{D},c_{\mathcal{D}}$) such that 
\begin{equation}\label{eq:sum_r<}
        \sum_{r\ge 0:\, \widetilde{f}(r)<d_{\rm c}}\gamma(r) \ee^{-\delta\kappa \widetilde{f}(r)} \le c_f''\, \ee^{c_f'(\delta\kappa)^{-\alpha}},\quad \forall\, \delta\kappa>0,
    \end{equation}
    where $\widetilde{f}$ is the function determined by $f$ in \eqref{eq:tf=f}.
\end{lem}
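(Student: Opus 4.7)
\smallskip

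\textbf{Proof plan.} The idea is to show that the decay of $\ee^{-\delta\kappa\widetilde{f}(r)}$ eventually overwhelms the growth of $\gamma(r)$, so the sum is dominated by a peak near some $r_\ast$ whose contribution matches the stated right-hand side. This proceeds in two steps: first extract a concrete polynomial (or faster) lower bound on $\widetilde{f}(r)$ from the iteration \eqref{eq:tf=f}, and then perform a Laplace-type estimate on the sum.

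\textbf{Growth bound for $\widetilde{f}$.} To lower-bound $\widetilde{f}$, the plan is to pass to $y(r) := f(\widetilde{f}(r))$, for which \eqref{eq:tf=f} reduces to the linear telescoping relation $y(r+1) - y(r) = \widetilde{f}(r)/\Delta$ (valid for all $r$ with $\widetilde{f}(r) < d_{\rm c}$). The hypothesis $f(M) \le c_f M^{2-\beta}$ then gives $\widetilde{f}(r) = f^{-1}(y(r)) \ge (y(r)/c_f)^{1/(2-\beta)}$, yielding the one-step recursion $y(r+1) - y(r) \ge C_0\, y(r)^\eta$ with $\eta := 1/(2-\beta)$ and $C_0 := c_f^{-\eta}\Delta^{-1}$, starting from $y(1) = 1$. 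By monotonicity of $x \mapsto x + C_0 x^\eta$, the sequence $y(r)$ dominates the Euler-type sequence $\hat y(r+1) = \hat y(r) + C_0 \hat y(r)^\eta$ with $\hat y(1)=1$. For $\beta\in(0,1)$ (so $\eta\in(1/2,1)$), introducing $\hat u := \hat y^{1-\eta}$ and expanding $(1 + C_0 \hat y^{\eta-1})^{1-\eta}$ to first order once $\hat y$ exceeds an $\mathrm{O}(1)$ threshold (which takes only $\mathrm{O}(1)$ steps) gives $\hat u(r+1) - \hat u(r) \ge (1-\eta)C_0/2$, which telescopes to $\widetilde{f}(r) \ge C_1 r^{\mu}$ with $\mu := 1/(1-\beta) > 1$. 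For $\beta \ge 1$ the same recursion gives exponential or finite-time blow-up growth.

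\textbf{Laplace estimate for the sum.} In Case 1, inserting the growth bound yields $\sum_r \gamma(r)\ee^{-\delta\kappa\widetilde{f}(r)} \le \sum_r \Delta^r\ee^{-\delta\kappa C_1 r^\mu}$. Standard analysis of the exponent $\phi(r) := r\log\Delta - \delta\kappa C_1 r^\mu$ places its maximum at $r_\ast = (\log\Delta/(\mu C_1\delta\kappa))^{1/(\mu-1)}$ with $\phi(r_\ast)\sim(\delta\kappa)^{-(1-\beta)/\beta}\log\Delta$ (using $1/(\mu-1)=(1-\beta)/\beta$). Splitting the sum at $r_\ast$---the head contributes $\lesssim\Delta^{r_\ast}$ and the tail decays geometrically once successive ratios fall below $1/2$---produces the claimed form with $\alpha=(1-\beta)/\beta$; the regime $\beta \ge 1$ is trivial since the sum then has only $\mathrm{O}(1)$ nonzero terms. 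In Case 2, $\gamma(r) \le c_{\mathcal{D}} r^{\mathcal{D}-1}$ is only polynomial, and the same growth bound on $\widetilde{f}$ (now with $\mu=1/(1-\beta)$ possibly less than $1$ when $\beta<0$) reduces the sum to $\lesssim\int_0^\infty r^{\mathcal{D}-1}\ee^{-\delta\kappa C_1 r^\mu}\,\dd r \sim (\delta\kappa)^{-\mathcal{D}/\mu}$, a polynomial-in-$1/\delta\kappa$ bound that is trivially absorbed into $c_f''\ee^{c_f'(\delta\kappa)^{-\alpha}}$ for any $\alpha>0$.

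\textbf{Main obstacle.} The trickiest ingredient is controlling the discrete iteration: because $y^\eta$ is concave for $\eta<1$, a naive continuous/forward-Euler comparison \emph{under}estimates the growth of $y(r)$, so some care is needed to extract a clean power-law lower bound. The proposed remedy is to work with $\hat u = \hat y^{1-\eta}$ and exploit the first-order expansion of $(1+x)^{1-\eta}$ for small $x$, which is legitimate only after the initial $\mathrm{O}(1)$ transient, so one must track the two regimes ``$\hat y$ small'' and ``$\hat y$ past threshold'' separately. A secondary delicacy is the borderline $\beta=1$, where the exponent $(1-\beta)/\beta$ naively diverges; this forces $\beta\ge1$ to be split off as its own case, which is, however, immediate since the sum is then uniformly bounded in $\delta\kappa$.
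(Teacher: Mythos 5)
Your proposal is correct and follows essentially the same route as the paper: pass to $y(r)=f(\widetilde f(r))$, extract the polynomial lower bound $\widetilde f(r)\gtrsim r^{1/(1-\beta)}$ from the recursion (the paper verifies an ansatz using convexity of $\widetilde g$, you linearize via $\hat u=\hat y^{1-\eta}$—same estimate, same exponent), then run a Laplace-type estimate giving $\alpha=(1-\beta)/\beta$ in Case 1 and a polynomial bound in Case 2. The one imprecision is the remark that for $\beta\ge1$ the sum "has only $\mathrm{O}(1)$ nonzero terms"—the constraint $\widetilde f(r)<d_{\rm c}$ allows $\Theta(\log d_{\rm c})$ terms when $\widetilde f$ grows exponentially, and it is the doubly-exponential decay of the summand, not the count, that controls the sum; the paper avoids this entirely by observing that $\beta=1$ also satisfies \eqref{eq:f_2_alpha} with, say, $\beta=0.9$, reducing WLOG to $\beta<1$.
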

Here $\delta\kappa$ is a single variable \revise{that will appear} in the context of the proof of Theorem \ref{thm:qtm}. \revise{In a nutshell, as we do Schrieffer-Wolff transformations the operators in the Hamiltonian become more and more nonlocal at higher orders.  To obtain bounds of the form $\lVert [A,B]\rVert_{\kappa-\delta \kappa} \lesssim \lVert A\rVert_\kappa \lVert B\rVert_\kappa$ requires reducing the value of $\kappa$ by a small amount at each order in perturbation theory: see Lemma \ref{lem:clustexp}.}

\begin{proof}
We assume $\beta <1$ without loss of generality, because the linear check soundness $\beta =1$ case also satisfies \eqref{eq:f_2_alpha} with $\beta=0.9$, for example. 

Plugging \eqref{eq:f_2_alpha} into the iteration for $\widetilde{g}(r)$ in \eqref{eq:g=g+f}, we have \begin{equation}\label{eq:g>g+f}
    \widetilde{g}(r+1) - \widetilde{g}(r)\ge \frac{1}{\Delta} \lr{\frac{\widetilde{g}(r)}{c_f}}^{\frac{1}{2-\beta}}.
\end{equation}
Here we have focused on regime $r<r_{\rm c}$ where $r_{\rm c}$ is the smallest integer $r$ such that $f^{-1}(\widetilde{g}(r))=\widetilde{f}(r)\ge d_{\rm c}$. This is the only regime that contributes to the sum \eqref{eq:sum_r<} that we want to bound, and we have taken $f^{-1}(\widetilde{g}(r))$ in \eqref{eq:g=g+f} to be the minimum to get \eqref{eq:g>g+f}. 
\eqref{eq:g>g+f} is solved by ansatz $\widetilde{g}(r) = c_g r^{\frac{2-\beta}{1-\beta}}$, because its left hand side becomes \begin{equation}
    \widetilde{g}(r+1) - \widetilde{g}(r) \ge c_g \left.\frac{\dd \widetilde{g}}{\dd r}\right|_{r} = \frac{2-\beta}{1-\beta}c_g  r^{\frac{1}{1-\beta}} = \frac{2-\beta}{1-\beta}c_g^{1-\frac{1}{2-\beta}} \mlr{\widetilde{g}(r)}^{\frac{1}{2-\beta}},
\end{equation}
(the inequality is because the function $\frac{\dd \widetilde{g}}{\dd r}$ is monotonically increasing) which equals the right hand side by choosing $c_g$ to be given by $\frac{2-\beta}{1-\beta}c_g^{1-\frac{1}{2-\beta}}=c_f^{-\frac{1}{2-\beta}}/\Delta$. As a result, \begin{equation}\label{eq:tfr>}
    \widetilde{f}(r)=f^{-1}\lr{\widetilde{g}(r)} \ge \widetilde{c}_f r^{\frac{1}{1-\beta}},
\end{equation}
for some constant $\widetilde{c}_f$ determined by $\Delta,c_f$ and $\beta$. 

We first deal with the general graph case. We use \eqref{eq:Gamma<Delta} and \eqref{eq:tfr>} to get \begin{align}\label{eq:beta2alpha}
    \sum_{r\ge 0:\, \widetilde{f}(r)<d_{\rm c}}\gamma(r) \ee^{-\delta\kappa \widetilde{f}(r)} &\le \sum_{r\ge 0}\Delta^r \ee^{-\widetilde{c}_f\delta\kappa  r^{\frac{1}{1-\beta}} } =\sum_{r\ge 0} \ee^{r\log \Delta -\widetilde{c}_f\delta\kappa  r^{\frac{1}{1-\beta}} } \nonumber\\
    &\le \lr{2r_{\rm max}+\frac{1}{\widetilde{c}_f'}} \max_{r\ge 0} \ee^{r\log \Delta -\widetilde{c}_f\delta\kappa  r^{\frac{1}{1-\beta}} } = \lr{2r_{\rm max}+\frac{1}{\widetilde{c}_f'}} \ee^{\beta r_{\rm max}\log \Delta} \nonumber\\ 
    &\le c_f'' \ee^{2\beta r_{\rm max}\log \Delta} = c_f'' \exp\mlr{\beta (\log \Delta) \lr{\frac{(1-\beta)\log\Delta}{\widetilde{c}_f\delta\kappa}}^{\frac{1-\beta}{\beta}} },
\end{align}
Here to get the second line, we used the fact that function $\ee^{r\log \Delta -\widetilde{c}_f\delta\kappa  r^{\frac{1}{1-\beta}} }$ decays faster than an exponential $\propto (1-\widetilde{c}_f')^r$ for $r>2r_{\rm max}$, where $r_{\rm max}=\lr{\frac{(1-\beta)\log\Delta}{\widetilde{c}_f\delta\kappa}}^{\frac{1-\beta}{\beta}}$ maximizes the function, and constant $\widetilde{c}_f'$ does not depend on $\delta\kappa$. As a result, the sum over $r$ is upper bounded by the maximum times $2r_{\rm max}+\frac{1}{\widetilde{c}_f'}$, where the second term accounts for the exponentially decaying contribution $\propto 1+(1-\widetilde{c}_f') + (1-\widetilde{c}_f')^2 + \cdots=\frac{1}{\widetilde{c}_f'}$ from $r>2r_{\rm max}$. To get the last line,
we used $2r_{\rm max} \le \lr{c_f''- \frac{1}{\widetilde{c}_f'}}\ee^{\beta r_{\rm max}\log \Delta}$ ($\forall r_{\rm max}\ge 0$) for a constant $c_f''$.  \eqref{eq:beta2alpha} is then established by \eqref{eq:sum_r<} for $\alpha = \frac{1-\beta}{\beta}$ and $c_f'=2\beta (\log\Delta)\lr{\frac{(1-\beta)\log\Delta}{\widetilde{c}_f }}^{\frac{1-\beta}{\beta}}$.

We then treat finite dimensions where $\beta$ can be negative. \eqref{eq:Gamma<finite_d} and \eqref{eq:tfr>} lead to
\begin{align}\label{eq:beta2alpha1}
    \sum_{r\ge 0:\, \widetilde{f}(r)<d_{\rm c}}\gamma(r) \ee^{-\delta\kappa \widetilde{f}(r)} &\le \sum_{r\ge 0}c_{\mathcal{D}}r^{\mathcal{D}-1} \ee^{-\widetilde{c}_f\delta\kappa  r^{\frac{1}{1-\beta}} } \le c_{\mathcal{D}}\lr{2r_{\rm max}+\frac{1}{\widetilde{c}_f'}} \max_{r\ge 0} r^{\mathcal{D}-1} \ee^{-\widetilde{c}_f\delta\kappa  r^{\frac{1}{1-\beta}} } \nonumber\\ 
    &\le c_{\mathcal{D}}\lr{2r_{\rm max}+\frac{1}{\widetilde{c}_f'}} r_{\rm max}^{\mathcal{D}-1} \le c_f'' \ee^{r_{\rm max}} = c_f'' \exp\mlr{\lr{\frac{(1-\beta)(\mathcal{D}-1)}{\widetilde{c}_f\delta\kappa}}^{1-\beta} },
\end{align}
Here the function $r^{\mathcal{D}-1} \ee^{-\widetilde{c}_f\delta\kappa  r^{\frac{1}{1-\beta}} }$ is analyzed in the same way as \eqref{eq:beta2alpha}, which decays exponentially (with exponent $\widetilde{c}_f'$ being a constant) for $r>2r_{\rm max}$ with $r_{\rm max}=\lr{\frac{(1-\beta)(\mathcal{D}-1)}{\widetilde{c}_f\delta\kappa}}^{1-\beta}$. We then bound the power function of $r_{\rm max}$ by an exponential with a constant prefactor $c_f''$. \eqref{eq:sum_r<} is then also satisfied for $\alpha=1-\beta$ and $c_f'=\lr{\frac{(1-\beta)(\mathcal{D}-1)}{\widetilde{c}_f}}^{1-\beta}$.
\end{proof}

\section{Stability of quantum codes}\label{sec:main_proof}
We are now ready to state our main theorem:

\begin{repthm}{thm:qtm}[Formal version]
    Let $H_0$ be a stabilizer code Hamiltonian of the form \eqref{eq:H0=} that has (\emph{i}) LDPC property with bounded degree $\Delta$, (\emph{ii}) a gap $\ge1$ from \eqref{eq:minlambdacheck}, (\emph{iii}) code parameters $\llbracket n,k,d\rrbracket$, and (\emph{iv}) $(d_{\rm c}, f)$-check-soundness with \begin{equation}\label{eq:d>logn}
        d_{\rm s}:= \min(d, d_{\rm c}) \ge c_d \log n,
    \end{equation}
    where $c_d>0$ is some fixed constant independent of $n$, and $f$ grows slowly such that \eqref{eq:sum_r<} holds 
    for constants $c_f',c_f'',  \alpha>0$. Let $\kappa_1$ be a constant satisfying \begin{equation}\label{eq:kappa1>cd}
        \kappa_1 > 2/c_d,
    \end{equation}
    There exist positive constants $\epsilon_0,c_1,c_2,c_3$ determined by $\Delta,\kappa_1,c_f',c_f'',\alpha$, such that if $n\ge c_3$, for any perturbation $V=\sum_{S,\bs} V_{S,\bs}$ with \begin{equation}\label{eq:V=eps}
        \norm{V}_{\kappa_1} =:\epsilon \le \min\lr{\epsilon_0,\frac{1}{3c_1}},
    \end{equation}
    the spectrum of $H:=H_0+V$, up to an overall energy shift, is contained in two disjoint intervals \begin{equation}\label{eq:spec_H}
        [-\epsilon_*,\epsilon_*] \cup[1-c_1\epsilon-\epsilon_*, \infty)\quad \subset \quad [-\epsilon_*,\epsilon_*] \cup[\frac{1}{2}+\epsilon_*, \infty),
    \end{equation} 
    where there are $2^k$ eigenstates in the first interval, and \begin{equation}\label{eq:eps*=}
        \epsilon_*= c_2 n\epsilon\, \ee^{-\kappa_1 d_{\rm s}/2 },
    \end{equation}
    decays to zero as $n\rightarrow \infty$ due to \eqref{eq:d>logn} and \eqref{eq:kappa1>cd}. 
    In other words, there remains a gap $\ge 1/2$ separating the $2^k$ ground states from the rest of the spectrum, and the ground states are almost degenerate with small energy difference $\le 2\epsilon_*$. Moreover, the projector $P_{\rm new}$ to the perturbed ground state subspace is related to the unperturbed $P$ by \begin{equation}\label{eq:Pnew-P}
        \norm{P_{\rm new}-U P U^\dagger} \le 4\sqrt{\epsilon_*},
    \end{equation}
    where \begin{equation}\label{eq:U=intA}
        U=\mathcal{T}\exp\mlr{\int^1_0 \dd t A(t)},
    \end{equation}
    ($\mathcal{T}$ means time-ordering) is finite-time evolution generated by a time-dependent anti-Hermitian $A(t)$ with local norm \begin{equation}\label{eq:At<}
        \max_t \norm{A(t)}_{\kappa_1/2} \le 2 \epsilon.
    \end{equation}
\end{repthm}

Our claim that known instances of good quantum LDPC codes, together with the quantum HGP expander codes, form robust phases of matter with finite entropy density at zero temperature, immediately follows from combining Theorem \ref{thm:qtm} with Lemma \ref{lem:f_growth}, along with the results catalogued in Appendix \ref{app:constratequantum}.

\subsection{Schrieffer-Wolff transformations}

We have reviewed the Schrieffer-Wolff transformation (SWT) scheme for the proof in the main text; here we introduce some notations \revise{and} explicit calculations on performing one SWT step, which will be useful in the proof:

For any operator $V=\sum_{S,\bs} V_{S,\bs}$, define $\PP V:=\sum_{S,\bs} (\PP V)_{S,\bs}$ where \begin{equation}
    (\PP V)_{S,\bs} := P_S V_{S,\bs} P_S+ Q_S V_{S,\bs} Q_S.
\end{equation}
$(\PP V)_{S,\bs}$ indeed has syndrome $\bs$ because $P_S,Q_S$ commutes with all checks. As a result, since $V_{S,\bs}$ is strongly supported in $S$, $(\PP V)_{S,\bs}$ is also strongly supported in $S$. We will also write $(\PP V)_{S,\bs}$ as $\PP V_{S,\bs}$, although strictly speaking, $\PP$ is not a global superoperator that maps operators to operators, because it treats different terms with support $S$ in a different way. $\PP V$ is locally block diagonal in the following sense: \begin{defn}\label{def:D'}
We say $D$ is locally block-diagonal, if its decomposition $D=\sum_{S,\bs} D_{S,\bs}$ satisfies \begin{equation}\label{eq:DcommutP}
    [D_{S,\bs}, P_S ]=0,\quad \forall S,\bs.
\end{equation}
\end{defn}

On the other hand, $V=\PP V + \PP^\perp V$ where each local term \begin{equation}
    \lr{\PP^\perp V}_{S,\bs} := P_S V_{S,\bs} Q_S+ Q_S V_{S,\bs} P_S,
\end{equation}
is block-off-diagonal. Both $\PP$ and $\PP^\perp$ cannot increase operator norm: \begin{equation}\label{eq:PP<1}
    \norm{\PP V_{S,\bs}}, \norm{\PP^\perp V_{S,\bs}}\le \norm{V_{S,\bs}}, \rarrow \norm{\PP V}_\kappa, \norm{\PP^\perp V}_\kappa \le \norm{V}_\kappa,
\end{equation}
because, for example, \begin{equation}
    \norm{\PP V_{S,\bs}}\le \max\lr{\norm{P_S V_{S,\bs} P_S}, \norm{Q_S V_{S,\bs} Q_S}} \le \max\lr{\norm{V_{S,\bs} }, \norm{V_{S,\bs}}} =\norm{V_{S,\bs} },
\end{equation}
where the maximization comes from the orthogonality of the subspaces that $P_S$ and $Q_S$ map to.

The following equation \begin{equation}\label{eq:H0A+V=PV}
    [H_0, A] + V = \PP V,
\end{equation}
is solvable by anti-Hermitian $A=\sum_{S,\bs} A_{S,\bs}$ with \begin{equation}\label{eq:A=PVQH}
    A_{S,\bs} = P_S V_{S,\bs} Q_S H^{-1}_S - H^{-1}_SQ_S V_{S,\bs} P_S,
\end{equation}
as one can verify explicitly. Note that the (pseudo)inverse of $H_S$ above is well-defined because it is always multiplied by $Q_S$ that projects out the kernel.

\subsection{Proof of Theorem \ref{thm:qtm}}

\subsubsection{Overview}

\revise{As already sketched in the main text,} our ultimate goal is to construct a unitary \eqref{eq:U=intA} that approximately block-diagonalizes (in terms of the two subspaces $P$ and $I-P$) the Hamiltonian: \begin{equation}\label{eq:UHU=star}
    U^\dagger H U=H_0+D_*+V_*+c_* I,
\end{equation}
where $c_*$ is a number, and we desire the following properties:
\begin{enumerate}
    \item The ``remaining perturbation'' $V_*$ is (nearly) exponentially small in distance:
    \begin{equation}\label{eq:V*<delta}
        \norm{V_*} \le \epsilon_*,
    \end{equation}
    with $\epsilon_*$ given by \eqref{eq:eps*=}.
    \item $D_*$ is relatively bounded by $H_0$: \begin{equation}\label{eq:D*<H0*}
        \norm{D_*\ket{\psi}}\le c_1 \epsilon \norm{H_0\ket{\psi}},
    \end{equation}
    for some constant $c_1$ determined by $\Delta,\kappa_1$. \eqref{eq:D*<H0*} is equivalent to the operator inequality $D_*^2 \le (c_1\epsilon)^2H_0^2$, and implies $D_*$ is block-diagonal because $D_*\ket{\mathrm{GS}}=0$ for any ground state $\ket{\mathrm{GS}}$ of $H_0$.
    \item The generator $A(t)$ of $U$ satisfies the locality bound \eqref{eq:At<}.
\end{enumerate} 

We aim to show these properties, because the theorem follows by the following Lemma that we will prove in Appendix \ref{sec:spec}:
\begin{lem}\label{lem:spec}
    If there exists a unitary $U$ such that \eqref{eq:UHU=star} holds with the three properties above satisfied, then Theorem \ref{thm:qtm} holds.
\end{lem}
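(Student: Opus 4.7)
The plan is to reduce all three claims of Theorem \ref{thm:qtm} to elementary spectral and projector estimates applied to $\tilde H := U^\dagger H U - c_* I = H_0 + D_* + V_*$. Since $U$ is unitary, $H$ and $U^\dagger H U$ share the same spectrum, and their spectral projectors differ only by conjugation by $U$. The locality claims \eqref{eq:U=intA}--\eqref{eq:At<} are inherited directly from the hypothesis on $A(t)$, so only the spectral statement \eqref{eq:spec_H} and the projector bound \eqref{eq:Pnew-P} require real work.

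For the spectrum I would peel off the pieces in decreasing order of size. Property (2) immediately forces $D_* P = 0$: for any ground state $\ket{\mathrm{GS}}$ of $H_0$, the relative bound \eqref{eq:D*<H0*} gives $D_*\ket{\mathrm{GS}} = 0$, and by Hermiticity $P D_* = 0$ as well. Thus $H_0 + D_*$ preserves the codespace/excited decomposition and vanishes on the codespace. On the excited subspace, where $H_0 \ge Q$ so that $\lVert H_0 \ket{\psi}\rVert \ge 1$ for any unit $\ket{\psi}$, the relative bound gives $\lVert(H_0+D_*)\ket{\psi}\rVert \ge (1-c_1\epsilon)\lVert H_0\ket{\psi}\rVert \ge 1-c_1\epsilon$. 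Running this estimate along the one-parameter family $H_0+sD_*$, $s\in[0,1]$, shows no excited-subspace eigenvalue can cross zero when $c_1\epsilon < 1$, so $\mathrm{spec}(H_0+D_*) \subset \{0\} \cup [1-c_1\epsilon, \infty)$ with multiplicity $2^k$ at zero. Adding $V_*$ and invoking Weyl's inequality shifts each eigenvalue by at most $\lVert V_*\rVert \le \epsilon_*$, giving $\mathrm{spec}(\tilde H) \subset [-\epsilon_*, \epsilon_*] \cup [1 - c_1\epsilon - \epsilon_*, \infty)$ with exactly $2^k$ eigenvalues in the first interval. Under \eqref{eq:V=eps} and \eqref{eq:eps*=} (with the aid of \eqref{eq:d>logn} and \eqref{eq:kappa1>cd}) the two intervals are disjoint and $1-c_1\epsilon-\epsilon_* \ge 1/2 + \epsilon_*$, which is \eqref{eq:spec_H}.

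For the projector statement, let $\tilde P := U^\dagger P_{\mathrm{new}} U$ be the low-energy spectral projector of $\tilde H$. Any unit $\ket{\psi}$ in the range of $\tilde P$ with $\tilde H \ket{\psi} = \lambda\ket{\psi}$ and $|\lambda|\le \epsilon_*$ satisfies $\lVert(H_0+D_*)\ket{\psi}\rVert \le |\lambda| + \lVert V_*\rVert \le 2\epsilon_*$. Since $(H_0+D_*)P = 0$, this norm equals $\lVert (H_0+D_*)Q\ket{\psi}\rVert \ge (1-c_1\epsilon)\lVert Q\ket{\psi}\rVert$ by the excited-subspace lower bound from step 2, so $\lVert Q\ket{\psi}\rVert \le 2\epsilon_*/(1-c_1\epsilon) \le 4\epsilon_*$ (using $c_1\epsilon \le 1/2$). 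Taking the supremum over unit vectors in the range of $\tilde P$ gives $\lVert (I-P)\tilde P\rVert \le 4\epsilon_*$. Because $\tilde P$ and $P$ both have rank $2^k$ (from the spectrum count of step 2), the sine--theta identity $\lVert P - \tilde P\rVert = \lVert (I-P)\tilde P\rVert$, valid for equal-rank orthogonal projectors, yields $\lVert P - \tilde P\rVert \le 4\epsilon_* \le 4\sqrt{\epsilon_*}$ whenever $\epsilon_* \le 1$, and conjugation by $U$ produces \eqref{eq:Pnew-P}.

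No step here is genuinely delicate; the lemma is essentially bookkeeping on Weyl's inequality and a principal-angle comparison. The only point to verify carefully is that the constants in \eqref{eq:V=eps} and \eqref{eq:eps*=} force $c_1\epsilon$ to remain bounded away from $1$ and $\epsilon_*$ to be smaller than $(1-c_1\epsilon)/2$, so that the two spectral intervals are disjoint and the eigenvector-count argument above is unambiguous --- this is precisely what the smallness assumption on $\epsilon$ and the scaling \eqref{eq:d>logn} provide. The genuine difficulty of Theorem \ref{thm:qtm} lies entirely in constructing $U$ with the claimed properties via the iterated Schrieffer--Wolff analysis sketched in the main text, not in this lemma, which is simply the final packaging step.
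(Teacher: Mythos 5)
Your proposal is correct and establishes everything Lemma \ref{lem:spec} asserts. The spectral part (via the relative bound $D_*^2\le(c_1\epsilon)^2H_0^2$, the homotopy $H_0+sD_*$, and Weyl's inequality) is the same logic the paper uses, except that you re-derive the conclusion of Lemma 2 in \cite{topo_Hastings} rather than citing it, which is fine. The projector bound is where you genuinely deviate, and your route is both slicker and sharper. The paper's argument evaluates the \emph{expectation value} of the energy in a near-ground state, $\langle\psi|(H_0+D_*+V_*)|\psi\rangle$, and uses a Markov-type inequality to get $\lVert(I-P)|\psi\rangle\rVert\le 2\sqrt{\epsilon_*}$; it then runs the argument in both directions (perturbing $H_0+D_*$ by $+V_*$, and perturbing $H_0+D_*+V_*$ by $-V_*$) to control both $\lVert(I-P)P'_{\rm new}\rVert$ and $\lVert(P'_{\rm new}-P)P\rVert$, combining via triangle inequality to get $4\sqrt{\epsilon_*}$. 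You instead bound the \emph{vector norm} $\lVert(H_0+D_*)|\psi\rangle\rVert\le 2\epsilon_*$ directly, use the operator lower bound $(1-c_1\epsilon)$ for $H_0+D_*$ on $\mathrm{ran}\,Q$ to pass to $\lVert Q|\psi\rangle\rVert\le 4\epsilon_*$, and then invoke the equal-rank sine--theta identity $\lVert P-\tilde P\rVert=\lVert(I-P)\tilde P\rVert$ to avoid the second pass entirely. This buys you a linear bound in $\epsilon_*$ rather than a square-root one; you then throw the improvement away at the end to match the theorem statement. Two small points of hygiene: (\emph{i}) your reduction to eigenvectors of $\tilde H$ in $\mathrm{ran}\,\tilde P$ is an unnecessary restriction --- the argument works for any unit vector in $\mathrm{ran}\,\tilde P$ once you observe $\lVert\tilde H|\psi\rangle\rVert\le\epsilon_*$ there, which is how the "taking the supremum" step should be justified; (\emph{ii}) the sine--theta identity for equal-rank projectors deserves a one-line justification (it follows from Halmos' two-subspace decomposition, or from noting the identity holds whenever $\lVert P-\tilde P\rVert<1$, which your bound already guarantees). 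Neither is a gap.
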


Our proof then involves three parts: \begin{enumerate}
    \item We first construct $U$ by a perturbation series of iterated SWTs, truncated at a high order $m_*\sim d_{\rm s}$.
    \item We show that the perturbation series converges by obtaining strong locality bounds (Lemma \ref{lem:iterate}) on the operators involved in the SWTs.
    \item Finally, we use the locality bounds to verify the above properties one by one, so that the proof concludes by invoking Lemma \ref{lem:spec}. 
\end{enumerate}

\subsubsection{Setting up the Schrieffer-Wolff transformation}
Recall the strategy of SWTs introduced in Section \ref{sec:main}.  Suppose at the $m$-th step, we have constructed a unitary $U_{m-1}$ such that the Hamiltonian is rotated to \begin{equation}\label{eq:UkHUk=}
    U^\dagger_{m-1} H U_{m-1}=H_0+D_m+V_m+E_m.
\end{equation}
Here $D_m$ is the locally-block-diagonal effective Hamiltonian at this order that preserves subspace $P$, and $V_m$ is the remaining perturbation that is roughly suppressed to order $\epsilon^m$. Both of them contain only terms with support smaller than the code distance: \begin{equation}
    D_m = \sum_{S:|S|<d_{\rm s}} D_{m,S}, \quad V_m = \sum_{S:|S|<d_{\rm s}} V_{m,S}.
\end{equation}
All terms of larger support are contained in the ``garbage'' $E_m$.

For example, $m=1$ corresponds to the original Hamiltonian where $D_1=0,U_0=I$, and $V_1,E$ are the terms of $V$ with small (large) supports. At the final step $m=m_*$ with $m_*$ to be determined, \eqref{eq:UkHUk=} becomes \eqref{eq:UHU=star} 
where \begin{equation}\label{eq:final_step}
    U=U_{m_*-1}, \quad D_*=D_{m_*}-c_*I, \quad V_*=V_{m_*}+E_{m_*}.
\end{equation}
with a suitable number $c_*$.

To proceed to the next $(m+1)$-th order, we let $V=V_m$ in \eqref{eq:H0A+V=PV} and let the corresponding $A$ be $A_m$. With \begin{equation}\label{eq:U=UA}
    U_m = U_{m-1}\ee^{A_m},
\end{equation}
we have\begin{equation}\label{eq:Hk+V=Hk+1}
    \ee^{-\mathcal{A}_m}(H_0 + D_m+ V_m+E_m) = H_0 + D_{m+1}+ V_{m+1}+E_{m+1}, \where \mathcal{A}_m = [A_m, \cdot].
\end{equation}
Here \begin{equation}\label{eq:D=D+PV}
    D_{m+1} := D_m + \PP V_m,
\end{equation}
and $V_{m+1}, E_{m+1}$ are defined as follows: first define \begin{align}\label{eq:Vk+1}
    V'_{m+1} &=  \ee^{-\cA_m}(H_0 + D_m + V_m) - H_0 - (D_m+ \PP V_m) \nonumber\\ 
    &= (\ee^{-\cA_m} - 1)H_0 + \PP^\perp V_m + (\ee^{-\cA_m}-1)(D_m+V_m)  \nonumber\\
    &= -\int_0^1 \dd s \lr{ \ee^{-s\cA_m}-1}\PP^\perp V_m +(\ee^{-\cA_m}-1)(D_m+V_m), \nonumber\\
    &= [D_m, A_m]+ (\ee^{-\cA_m}-1)V_m+ \int_0^1 \dd s \lr{ \ee^{-s\cA_m}-1}\lr{[D_m,A_m]-\PP^\perp V_m},
\end{align}
with local decomposition $V'_{m+1,S}$. 
Then \begin{subequations}
    \begin{align}
        V_{m+1} &= \sum_{S:|S|<d_{\rm s}} V'_{m+1,S}, \label{eq:Vm+1=} \\
        E_{m+1} &= \ee^{-\cA_m}E_m + \sum_{S:|S|\ge d_{\rm s}} V'_{m+1,S}. \label{eq:Em+1=}
    \end{align}
\end{subequations} 

Let us first estimate the orders. Assuming $D_m\sim \epsilon$ and $V_m\sim A_m\sim v_m \ll \epsilon$, the last two terms are subdominant because they are at least second order in $v_m$. It is thus crucial to bound the first term. Naively, \begin{equation}\label{eq:finalterm_naive}
    [D_m, A_m] \sim \epsilon v_m\times \text{(support volume of $A_m$)}\sim \epsilon v_m m,
\end{equation}
which becomes bad already at a $n$-independent order $m\sim 1/\epsilon$; note that this is also the key problem for obtaining tight bounds on the lifetime of false vacuum decay \cite{our_metastable}. One has to use the fact that code $H_0$ has large distance to control this term.

Actually, the naive estimate \eqref{eq:finalterm_naive} for the $\kappa$-norm of the whole term $[D_m, A_m]$ cannot be improved. Here is the simplest example: Consider the 2d toric code where $D_m$ can possibly be $\epsilon\sum_j X_j(I-Q_{\chec_j})$ where $\chec_j$ is a check near $j$. Then consider a $Z$-string $A_m=Z_1Z_2\cdots Z_L$ (or more precisely, the corresponding $A$ for $V=Z_1Z_2\cdots Z_L$ in \eqref{eq:H0A+V=PV}), where $L$ is much smaller than the system size (code distance). Then $\norm{[D_m, A_m]}\propto L$ because every $X_j$ with $j\in\{1,\cdots,L\}$ anti-commutes with $A_m$. This is different to the false vacuum decay setting \cite{our_metastable} where $\norm{[D_m, A_m]}_\kappa$ can be bounded in a tighter way by detailed analysis.

The resolution for the above toric code example is that, the extensive $\propto L$ part of $[D_m, A_m]$ annihilates the ground states by the $I-Q_{\chec_j}$ factor, so is already block diagonal. Only the contributions from the two ends of the string may not be block diagonal (where $C_j$ anti-commutes with $A_m$; this is not possible in the middle of the string), which should be further rotated away in the next order of SWT.

\subsubsection{Locality bounds}
As suggested by the above estimates, we should keep track of two norms of the perturbation: \begin{equation}\label{eq:vm=}
    v_m:=\norm{V_m}_{\kappa_m},\quad \tv_m:=\norm{\PP^\perp V_m}_{\kappa_m}\le v_m,
\end{equation}
where $v_m$ may be parametrically larger than $\tv_m$. $\kappa_m$ will be defined shortly in \eqref{eq:kappak=}, which decays with $m$ and controls the growth of operator support when going to higher order.  
From \eqref{eq:A=PVQH} and the gap of $H_0$, we obtain \begin{equation}\label{eq:A<V}
    \norm{A_m}_{\kappa_m} \le \tv_m,
\end{equation}
bounded by the (eventually at large enough $m$) smaller (as we will prove) of the two norms. To bound the first term of \eqref{eq:Vk+1}, we invoke the following Lemma, proved in Appendix \ref{sec:DA<}: 
\begin{lem}\label{lem:DA<}
    For any $\kappa>\kappa'$ with $\delta\kappa=\kappa-\kappa'$, \begin{equation}\label{eq:DA<DA}
        \norm{[D,A]}_{\kappa'} \le \frac{2}{\delta\kappa} \norm{D}_\kappa \norm{A}_{\kappa}.
    \end{equation}
    Moreover, if $D=\sum_{S:|S|<d_{\rm s}} D_S$ where each $D_S$ is block diagonal, $A$ is given by the solution \eqref{eq:A=PVQH} of \eqref{eq:H0A+V=PV} for a given $V$, and the soundness function $f$ grows slowly such that \eqref{eq:sum_r<} holds, then \begin{equation}\label{eq:PDA<DV}
        \norm{\PP^\perp[D,A]}_{\kappa'} \le \lr{\frac{1}{\delta\kappa}+2\Delta c_f'' \ee^{c_f'(\delta\kappa)^{-\alpha}}} \norm{D}_\kappa \norm{\PP^\perp V}_{\kappa'} \le \widetilde{c}_f'' \ee^{c_f'(\delta\kappa)^{-\alpha}} \norm{D}_\kappa \norm{\PP^\perp V}_{\kappa'},
    \end{equation}
    where constant $\widetilde{c}_f''\ge 2$ is determined by $\Delta,c_f',c_f'',\alpha$.
\end{lem}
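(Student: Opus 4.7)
Both bounds will follow from cluster-expansion bookkeeping, with the second requiring the geometric input Lemma \ref{lem:S<tfr} derived from check soundness. For \eqref{eq:DA<DA}, I expand $[D,A]=\sum_{(S,\bs),(S',\bs')}[D_{S,\bs},A_{S',\bs'}]$; each summand is nonzero only when $S\cap S'\ne\emptyset$ and is then strongly supported on $T=S\cup S'$. Fixing a central qubit $i$ in the definition of $\norm{\cdot}_{\kappa'}$, I pay a factor $2$ for whether $i\in S$ or $i\in S'$, pick a bridge qubit $j\in S\cap S'$, and use $\ee^{\kappa'|T|}\le \ee^{\kappa|S|}\ee^{-\delta\kappa|S|}\ee^{\kappa'|S'|}$. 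The sum over $S'\ni j$ yields $\norm{A}_{\kappa'}\le\norm{A}_\kappa$, and the remaining sum over $j\in S$ produces an extra factor $|S|$; together with the elementary inequality $|S|\ee^{-\delta\kappa|S|}\le 1/(\delta\kappa)$, this produces the $2/(\delta\kappa)$ prefactor.

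For \eqref{eq:PDA<DV} the same expansion is used, now with three algebraic refinements. First, for any $T\supseteq S$, the local block-diagonality $[D_{S,\bs},P_S]=0$ combined with the strong support of $D_{S,\bs}$ in $S$ forces $[D_{S,\bs},P_T]=0$; moreover, for $\bs\ne\emptyset$ the orthogonality between $P_S$ and its syndrome-flipped counterpart upgrades this to $P_TD_{S,\bs}=D_{S,\bs}P_T=0$. Second, the SWT formula \eqref{eq:A=PVQH} annihilates the $\bs'=\emptyset$ sector, so every nonzero $A_{S',\bs'}$ has $\bs'\ne\emptyset$, and since $\bs'\subseteq S'\subseteq T$ this forces $P_TA_{S',\bs'}P_T=0$. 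Consequently $P_T[D_{S,\bs},A_{S',\bs'}]P_T=[D_{S,\bs},P_TA_{S',\bs'}P_T]=0$ at $T=S\cup S'$, reducing $\PP^\perp[D,A]$ at level $T$ to $P_T[D_S,A_{S'}]+[D_S,A_{S'}]P_T$. Third, \eqref{eq:A=PVQH} directly gives $\norm{A_{S',\bs'}}\le\norm{(\PP^\perp V)_{S',\bs'}}$, so in the cluster sum the $A$-factor can be upgraded to $\norm{\PP^\perp V}_{\kappa'}$.

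I then split the double sum into a ``short-range'' contribution, where $i\in S$ and the Part 1 analysis applies directly (now with $\norm{A}_{\kappa'}\le\norm{\PP^\perp V}_{\kappa'}$), yielding the $(1/\delta\kappa)\norm{D}_\kappa\norm{\PP^\perp V}_{\kappa'}$ term, and a ``long-range'' contribution, where $i\in S'\setminus S$ with the nearest point of $S$ to $i$ at graph distance $r\ge 1$. In the latter the surviving $\PP^\perp[D_S,A_{S'}]$ requires the stabilizer content of $D_S$ to bridge the overlap qubit back toward $i$, so by Lemma \ref{lem:S<tfr} we must have $|S|\ge\widetilde{f}(r)$; the weight $\ee^{\kappa|S|}$ in $\norm{D}_\kappa$ then supplies an extra factor $\ee^{-\delta\kappa\widetilde{f}(r)}$ at $\kappa'$. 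Summing over the at most $\gamma_i(r)$ qubits of $S'$ at distance $r$ from $i$ and invoking \eqref{eq:sum_r<} from Lemma \ref{lem:f_growth} collapses this to $c_f''\ee^{c_f'(\delta\kappa)^{-\alpha}}$, with an overall factor $2\Delta$ from the at most $\Delta$ qubits per bridging check, producing the second term of \eqref{eq:PDA<DV}. The principal technical obstacle is justifying the long-range bound rigorously: one must carefully verify that in this regime the surviving $\PP^\perp[D_S,A_{S'}]$ is indeed governed by a stabilizer in $D_{S,\emptyset}$ reaching distance $r$ (the $\bs\ne\emptyset$ sectors of $D$ being handled separately via the $D_{S,\bs}P_T=0$ identity established in paragraph two), so that Lemma \ref{lem:S<tfr} applies as stated.
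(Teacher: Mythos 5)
Your treatment of the first bound \eqref{eq:DA<DA} matches the paper's: decompose the commutator over pairs of supports, pay a factor of $2$ for whether the anchor qubit lies in the $D$-support or the $A$-support, and trade the $|S|$ factor for a $1/\delta\kappa$ via an elementary inequality. That part is fine.

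For \eqref{eq:PDA<DV} you identify the right overall strategy (split into a short-range piece using only $\norm{A}_{\kappa'}\le\norm{\PP^\perp V}_{\kappa'}$ and a long-range piece needing Lemma \ref{lem:S<tfr} and \eqref{eq:sum_r<}), but the algebraic ``refinements'' you offer in the middle paragraph are not the mechanism the proof actually runs on, and one of them is wrong. The observation that $P_TXP_T=0$ with $T=S\cup S'$ is true, but it is not useful: it says nothing about $\PP^\perp X=P_TXQ_T+Q_TXP_T$, which is precisely the object to be controlled. What the paper establishes (its key observation \eqref{eq:PDA=0}) is the much stronger statement that $\PP^\perp[D_{S',\bs'},A_{S,\bs}]=0$ \emph{entirely} when the $D$-support is small compared with its distance to the $A$-syndrome; this is proved by passing to the enlarged region $\bar{S}'$ supplied by Lemma \ref{lem:S<tfr}, inserting the scalar $c$ from $P_{\bar{S}'}D_{S',\bs'}P_{\bar{S}'}=cP_{\bar{S}'}$, and then commuting $P_{\bar{S}'}$ through $A_{S,\bs}$ because the $A$-syndrome avoids $\bar{S}'$; finally, one descends from $P_{\bar{S}''}$ to $P_{S''}$ using strong support. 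None of this is captured by $P_TXP_T=0$. Relatedly, your parenthetical claim that the $\bs\neq\emptyset$ sectors of $D$ ``are handled separately via the $D_{S,\bs}P_T=0$ identity'' is false: $D_{S,\bs}P_T=P_TD_{S,\bs}=0$ gives $D_{S,\bs}=Q_TD_{S,\bs}Q_T$, but then $\PP^\perp[D_{S,\bs},A_{S',\bs'}]=Q_TD_{S,\bs}A_{S',\bs'}P_T-P_TA_{S',\bs'}D_{S,\bs}Q_T$ does not vanish in general, so the distance constraint is still required for these sectors just as for $\bs=\emptyset$.

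Two further points. First, the vanishing condition is anchored on the syndrome of $A$ (the checks violated by $A_{S',\bs'}$), not on an arbitrary qubit $i$ of the $A$-support as you write; the sum in the long-range piece must be organized over $\mathcal{Q}_C\in\bs'$ and $i\in C$, and this is where the factor $|\bs'|\Delta$ appears. Second, you use only $\norm{A_{S',\bs'}}\le\norm{(\PP^\perp V)_{S',\bs'}}$; the paper in fact needs the stronger $\norm{A_{S',\bs'}}\le\norm{(\PP^\perp V)_{S',\bs'}}/|\bs'|$ (from the $\ge|\bs'|$ energy of a state with $|\bs'|$ flipped checks) precisely to cancel that $|\bs'|$ factor. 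Without the $1/|\bs'|$ you pick up an uncontrolled syndrome-count, which cannot simply be absorbed at constant $\kappa$. You do honestly flag that the long-range vanishing is ``the principal technical obstacle,'' and indeed it is: the content of \eqref{eq:PDA<DV} is exactly the argument around \eqref{eq:QDAP=0}, and your sketch does not supply it.
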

Here \eqref{eq:DA<DA} is just a combinatorial result; the crucial bound is \eqref{eq:PDA<DV} bounding the off-block-diagonal part, where $\kappa'$ for $\PP^\perp V$ indicates that the volume factor in \eqref{eq:finalterm_naive} is gone. \eqref{eq:PDA<DV} can be viewed as a generalization of Corollary 3 in \cite{topo_Hastings}, which comes from the fact that $D$ is ``relatively bounded'' by $H_0$. In comparison, our proof of Lemma \ref{lem:DA<} (presented later) is arguably simpler as we do not need the relative boundedness explicitly; we get rid of this by keeping track of the extra syndrome information in decompositions \eqref{eq:O=OSs}. Nevertheless, we will need the relative boundedness for showing that the gap does not close for $H_0+D_{m_*}$, after doing SWTs.

To bound the other terms in \eqref{eq:Vk+1}, we do not need the block diagonal property of $D$ and just invoke the following locality bound based on cluster expansions:
\begin{lem}[Lemma 23 in \cite{our_metastable}]\label{lem:clustexp}
Consider two operators $A,\OO$. If \begin{equation}\label{eq:AK<dK}
    \norm{A}_\kappa \le \frac{\delta \kappa}{3}:= \frac{\kappa-\kappa'}{3},
\end{equation}
then \begin{subequations}\label{eq:clustexp}
    \begin{align}\label{eq:clustexp1}
        \norm{\lr{\ee^{-\cA}-1}\OO}_{\kappa'}, 2\norm{\int_0^1 \dd s \lr{ \ee^{-s\cA}-1}\OO}_{\kappa'} &\le \frac{18}{\kappa' \delta \kappa} \norm{A}_\kappa \norm{\OO}_\kappa, \\
        \norm{\ee^{-\cA}\OO}_{\kappa'} &\le \lr{1+\frac{18}{\kappa' \delta \kappa} \norm{A}_\kappa} \norm{\OO}_\kappa. \label{eq:clustexp2}
    \end{align}
\end{subequations} 
\end{lem}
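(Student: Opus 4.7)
The plan is to Taylor-expand $\ee^{-\cA}\OO = \sum_{m=0}^{\infty} \frac{(-1)^m}{m!} \,\mathrm{ad}_A^m \OO$, where $\mathrm{ad}_A \OO := [A,\OO]$, and bound each term in the local $\kappa'$-norm by a cluster expansion. Substituting $A = \sum_{S,\bs} A_{S,\bs}$ and $\OO = \sum_{T,\mathbf{t}} \OO_{T,\mathbf{t}}$, the $m$-th iterated commutator becomes a sum over ordered ``clusters'' $((S_1,\bs_1),\ldots,(S_m,\bs_m),(T,\mathbf{t}))$. Each cluster contributes an operator strongly supported in a subset of $U:=T\cup S_1\cup\cdots\cup S_m$, with operator norm at most $2^m \prod_{j=1}^m \norm{A_{S_j,\bs_j}} \cdot \norm{\OO_{T,\mathbf{t}}}$, where the factor $2^m$ comes from applying the triangle inequality to each commutator.

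Since $A_{S_j,\bs_j}$ acts trivially outside $S_j$, a nonvanishing nested commutator requires that $S_j$ overlap $T\cup S_{j+1}\cup\cdots\cup S_m$ for every $j$ --- i.e., the cluster must be \emph{connected} under the overlap relation. I would compute $\norm{(\ee^{-\cA}-1)\OO}_{\kappa'}$ by fixing an anchor site $i$ and summing over all such connected clusters with $i\in U$. The key reweighting is $\ee^{\kappa'|U|}\le \ee^{-(m+1)\delta\kappa}\cdot\ee^{\kappa|T|}\prod_{j}\ee^{\kappa|S_j|}$, which converts the $\kappa'$-weight on the output into $\kappa$-weights on the inputs, at the price of a convergence factor $\ee^{-(m+1)\delta\kappa}$ per cluster of length $m+1$.

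The remaining step is the standard cluster-expansion combinatorial bound: organize the sum by a rooted spanning tree on the cluster, and grow it one attachment at a time. Each attachment contributes at most $\norm{A}_\kappa$ (using $\sum_{S\ni j,\bs}\norm{A_{S,\bs}}\ee^{\kappa|S|}\le \norm{A}_\kappa$), times a multiplicity $|U_{\mathrm{current}}|$ counting possible attachment points; this multiplicity is absorbed by the exponential convergence factor, producing an effective per-step cost $\lesssim \norm{A}_\kappa/\delta\kappa$. Critically, the number of \emph{orderings} compatible with a given tree structure is at most $m!$, exactly cancelling the $1/m!$ from Taylor. The result is a geometric series $\sum_{m\ge 1}(C\norm{A}_\kappa/\delta\kappa)^m$ that converges under the hypothesis $\norm{A}_\kappa \le \delta\kappa/3$, producing the overall factor $1/\delta\kappa$. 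An additional $1/\kappa'$ arises from anchoring the cluster on the support of $\OO_{T,\mathbf{t}}$, yielding the explicit $18/(\kappa'\delta\kappa)$. The factor-of-two improvement in the $s$-integrated version follows from $\int_0^1 s^m\dd s/m! = 1/(m+1)!$, and the bound on $\norm{\ee^{-\cA}\OO}_{\kappa'}$ follows by adding back the $m=0$ term $\OO$.

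The main technical obstacle is the combinatorial cancellation between the $m!$ Taylor denominator and the ordered cluster count; this must be done carefully to avoid losing a stray $m!$ factor that would destroy convergence. A secondary subtlety, specific to this setting, is that the iterated commutators must be reorganized into the strongly-supported decomposition \eqref{eq:O=OSs} with well-defined syndromes before the $\kappa'$-norm is applied, since \eqref{eq:kappanorm_def} is defined on that decomposition; this is fine because commutators of Pauli strings with definite syndromes again have definite syndromes, so one only needs to verify that the cluster-expansion output naturally regroups into such a decomposition, which it does by collecting together all clusters with the same union $U$ and net syndrome.
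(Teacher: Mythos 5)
The paper does not prove this lemma in-line; it cites Lemma 23 of \cite{our_metastable}, which is itself a syndrome-tracking generalization of Lemma 4.1 of \cite{abanin2017rigorous}. Your proposal follows the same high-level strategy as those references: Taylor-expand $\ee^{-\cA}\OO$, substitute $A=\sum_{S,\bs}A_{S,\bs}$ and $\OO=\sum_{T,\mathbf{t}}\OO_{T,\mathbf{t}}$, use the connectivity constraint on nested commutators, and trade the $\kappa'$-weight on the output for $\kappa$-weights on the inputs plus a per-cluster convergence factor. Your reweighting $\ee^{\kappa'|U|}\le \ee^{-(m+1)\delta\kappa}\ee^{\kappa|T|}\prod_j\ee^{\kappa|S_j|}$ is correct (since $|T|+\sum_j|S_j|\ge m+1$), and your closing observation that $[A_{S,\bs},\OO_{T,\mathbf{t}}]$ is again strongly supported in $S\cup T$ with definite syndrome $\bs+\mathbf{t}$ is exactly what makes the syndrome-resolved $\kappa$-norm \eqref{eq:kappanorm_def} compatible with this expansion; that part is a genuine and necessary addition relative to \cite{abanin2017rigorous}.

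The one place your outline would not close as written is the combinatorial step you yourself flag. The claim that ``the number of orderings compatible with a given tree structure is at most $m!$, exactly cancelling the $1/m!$ from Taylor'' is vacuous as a bound (every collection of permutations has $\le m!$ elements) and, taken literally, uses up the $1/m!$ without leaving a geometric factor to sum. The actual role of the Taylor $1/m!$ is different: when the connected ordered sum is organized by growing the cluster one $A_{S_j}$ at a time, each growth step carries a multiplicity bounded by the current cluster size, and the product of these running sizes is what can scale like $m!$; the Taylor $1/m!$ tames precisely this product, not the ordering count. Equivalently, \cite{abanin2017rigorous} avoid explicit cluster combinatorics altogether by telescoping, $\norm{\mathrm{ad}_A^m\OO}_{\kappa'}\lesssim (Cm\norm{A}_\kappa/\delta\kappa)^m\norm{\OO}_\kappa$ after splitting $\delta\kappa$ into $m$ increments, and then invoke $m^m/m!\le e^m$. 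Either bookkeeping, done carefully, delivers the stated $18/(\kappa'\delta\kappa)$ form under $\norm{A}_\kappa\le\delta\kappa/3$; your sketch conflates the two $m!$'s and would need to be reworked along one of those lines.
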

See Lemma 4.1 in \cite{abanin2017rigorous} for the proof of the situation where the $\kappa$ norm does not keep track of the syndrome information, and \cite{our_metastable} for the generalization to the case with syndromes. Note that although \cite{our_metastable,abanin2017rigorous} focus on constant dimensional lattices where operators $\OO=\sum_S \OO_S$ are decomposed to only connected subsets $S$ on the lattice, the above Lemma therein holds generally for any interaction graph where $S$ can be any subset. This can be checked through the proof in \cite{abanin2017rigorous}, and can be simply inferred from the fact that Lemma \ref{lem:clustexp} does not involve constants like $\Delta$ about the underlying graph.

\subsubsection{Obtaining the flow equations}

We choose a slow decay of $\kappa_m$ following \cite{abanin2017rigorous,our_metastable}: \begin{equation}\label{eq:kappak=}
    \kappa_m:=\frac{\kappa_1}{2}\lr{1+\frac{1}{1+\log m}} \ge \frac{\kappa_1}{2},
\end{equation}
which is consistent at $m=1$, and leads to \begin{subequations}\label{eq:deltakap=} 
\begin{align}
    \delta \kappa_m:=&\kappa_{m}-\kappa_{m+1}= \frac{\kappa_1\log((m+1)/m)}{2(1+\log m)[1+\log(m+1)]}\ge \frac{\kappa_1}{6m (1+\log^2 m)}\ge \frac{\kappa_1}{6m^2}, \\ 
\label{eq:deltakap'=} 
    \widetilde{\delta\kappa}_m:=&\kappa_{m}-\kappa_{2m}= \frac{\kappa_1\log 2}{2(1+\log m)[1+\log(2m))]}\ge \frac{\kappa_1}{5 (1+\log m)^2},
\end{align}
\end{subequations}
for any $m\ge 1$. The $6$ factor in \eqref{eq:deltakap=} comes from numerically plotting the function, and the factor $5$ comes from $2(1+\log 2)/\log 2<5$. We will frequently use the fact that $\delta \kappa_m\ge\delta \kappa_{m'} $ for $m\le m'$.

In the following, we bound the locality of $V'_{m+1}$ in \eqref{eq:Vk+1}, and furthermore $v_{m+1}$ and $\tv_{m+1}$,  using the above two Lemmas.

For the first term in \eqref{eq:Vk+1}, we do not directly plug $D=D_m$ in \eqref{eq:PDA<DV} due to the $\delta\kappa$ dependence: $\kappa=\kappa_m$ decays with $m$ and will ultimately make the factors in front of $\norm{\PP^\perp V}_{\kappa'}$ larger than $1$ at an order $m$ independent of system size, where SWT would have to stop. Instead, we know from the iteration \eqref{eq:D=D+PV} that \begin{equation}\label{eq:D=PV+PV}
    D_m = \PP V_1 + \cdots + \PP V_{m-1}= \lr{\PP V_1 + \cdots + \PP V_{\floor{m/2}}} + \lr{\PP V_{\floor{m/2}+1}+\cdots+\PP V_{m-1}},
\end{equation}
where each term is locally block-diagonal, and controlled by a different $\kappa$. We have also separated $D_m$ into two parts, where we want to use \eqref{eq:PDA<DV} for the $m'\le \floor{m/2}$ part, and \eqref{eq:DA<DA} for the other part. More precisely, we bound the block-off-diagonal part of the first term in \eqref{eq:Vk+1} by \begin{align}\label{eq:PDA<vd}
    &\norm{\PP^\perp[D_m,A_m]}_{\kappa_{m+1}} \le \sum_{m'=1}^{m-1} \norm{\PP^\perp[\PP V_{m'},A_m]}_{\kappa_{m+1}} \nonumber\\
    &\quad \le \sum_{m'=1}^{\floor{m/2}} \widetilde{c}_f'' \ee^{c_f'(\kappa_{m'}-\kappa_{m+1})^{-\alpha}} \norm{\PP V_{m'}}_{\kappa_{m'} }\norm{\PP^\perp V_m}_{\kappa_{m+1}} + \sum_{m'=\floor{m/2}+1}^{m-1} \frac{2}{\kappa_{m}-\kappa_{m+1}}\norm{\PP V_{m'}}_{\kappa_{m} }\norm{\PP^\perp V_m}_{\kappa_{m}} \nonumber\\
    &\quad \le \tv_m \lr{\sum_{m'=1}^{\floor{m/2}} \widetilde{c}_f'' \ee^{c_f'(\widetilde{\delta\kappa}_{m'})^{-\alpha}}  v_{m'}+\frac{2}{\delta\kappa_m}\sum_{m'=\floor{m/2}+1}^{m-1} v_{m'} }= \tv_m \lr{\mathbbm{d}_m+ \frac{1}{\delta\kappa_m}\td_m } ,
\end{align}
where we introduce two effective local norms of $D_m$: \begin{equation}\label{eq:dm=}
    \mathbbm{d}_m := \sum_{m'=1}^{\floor{m/2}} \widetilde{c}_f'' \ee^{c_f'(\widetilde{\delta\kappa}_{m'})^{-\alpha}}  v_{m'},\quad \td_m := 2\sum_{m'=\floor{m/2}+1}^{m-1} v_{m'}.
\end{equation}
On the other hand, the whole first term in \eqref{eq:Vk+1} is bounded by \eqref{eq:DA<DA}: \begin{align}\label{eq:DA<vv}
    \norm{[D_m,A_m]}_{\kappa_{m+1}} &\le \frac{2}{\delta\kappa_m} \norm{D_m}_{\kappa_m} \norm{A_m}_{\kappa_m} \le \frac{2}{\delta\kappa_m}\tv_m \sum_{m'=1}^{m-1}\norm{\PP V_{m'}}_{\kappa_m} \nonumber\\
    &\le \frac{2}{\delta\kappa_m}\tv_m \sum_{m'=1}^{m-1} v_{m'} \le \frac{1}{\delta\kappa_m}\tv_m \lr{\mathbbm{d}_m+\td_m},
\end{align}
where we have used $\mathbbm{d}_m\ge 2\sum_{m'=1}^{\floor{m/2}}v_{m'}$ from $\widetilde{c}_f''\ge 2$. 

For now, assume the condition of Lemma \ref{lem:clustexp}, \begin{equation}\label{eq:tv<dkappa}
    \tv_m \le \delta\kappa_m/3,
\end{equation}
holds here (we will verify this condition indeed holds by induction later). Then
Lemma \ref{lem:clustexp} bounds the rest of the terms in \eqref{eq:Vk+1} that we denote by $V''_{m+1}:=V'_{m+1}-[D_m,A_m]$: \begin{align}\label{eq:PV''<}
    \norm{\PP^\perp V''_{m+1}}_{\kappa_{m+1}} \le \norm{ V''_{m+1}}_{\kappa_{m+1}} &\le \frac{9}{\kappa_{m+1}\delta\kappa_m}\norm{A_m}_{\kappa_m} \lr{2\norm{V_m}_{\kappa_m}+2\norm{[D_m,A_m]}_{\frac{\kappa_m+\kappa_{m+1}}{2}} + \norm{\PP^\perp V_m}_{\kappa_m}} \nonumber\\
    &\le \frac{9}{\kappa_{m+1}\delta\kappa_m} \tv_m \lr{3v_m + \frac{4}{\delta\kappa_m}\tv_m \lr{\mathbbm{d}_m+\td_m}}. 
\end{align}
Here we have used \eqref{eq:PP<1}, \eqref{eq:vm=} and that
$\norm{[D_m,A_m]}_{\frac{\kappa_m+\kappa_{m+1}}{2}}$ is bounded by \eqref{eq:DA<vv} with an extra factor of $2$ due to the different $\kappa$-norm.

The perturbation at next order satisfies \begin{equation}\label{eq:vm+1<V'}
    v_{m+1}=\norm{V_{m+1}}_{\kappa_{m+1}} \le \norm{V'_{m+1}}_{\kappa_{m+1}} ,\quad \tv_{m+1} =\norm{\PP^\perp V_{m+1}}_{\kappa_{m+1}}\le \norm{\PP^\perp V'_{m+1}}_{\kappa_{m+1}},
\end{equation}
because $V_{m+1}$ is just the support $<d_{\rm s}$ terms of $V'_{m+1}$.
Combining \eqref{eq:Vk+1}, \eqref{eq:PDA<vd}, \eqref{eq:DA<vv}, \eqref{eq:PV''<} and \eqref{eq:vm+1<V'}, we get a set of iteration relations (so called flow equations \cite{topo_Hastings}) for variables $(v_m, \tv_m, \mathbbm{d}_m,\td_m)$ with $m\ge 1$: \begin{subequations}\label{eq:iterate}
    \begin{align}
        v_{m+1} &\le \frac{1}{\delta\kappa_m} \mathbbm{d}_m \tv_m + \frac{1}{\delta\kappa_m} \td_m \tv_m + \frac{9}{\kappa_{m+1}\delta\kappa_m} \tv_m \lr{3v_m + \frac{4}{\delta\kappa_m}\tv_m \lr{\mathbbm{d}_m+\td_m}}, \label{eq:v<dtv} \\
        \tv_{m+1} &\le \mathbbm{d}_m \tv_m + \frac{1}{\delta\kappa_m} \td_m \tv_m + \frac{9}{\kappa_{m+1}\delta \kappa_m} \tv_m \lr{3v_m + \frac{4}{\delta\kappa_m}\tv_m \lr{\mathbbm{d}_m+\td_m}}, \label{eq:tv<dtv} \\
        \mathbbm{d}_{m+1} &\le \mathbbm{d}_m + \mathbb{I}[m \text{ is odd}]\, \widetilde{c}_f'' \ee^{c_f'(\widetilde{\delta\kappa}_{(m+1)/2})^{-\alpha}} v_{(m+1)/2}, \label{eq:d<d+tv} \\
        \td_m &= 2\sum_{m'=\floor{m/2}+1}^{m-1} v_{m'}.
    \end{align}
\end{subequations}
Here $\mathbb{I}[m \text{ is odd}]$ is an indicator function that returns $0,1$ for $m$ being even/odd. 
The initial values are \begin{equation}\label{eq:iterate_initial}
    v_1 , \tv_1\le \epsilon ,\quad \mathbbm{d}_1=\td_1=\td_2 = 0.
\end{equation}

\subsubsection{The last SWT step}

We analyze the iteration problem above by the following Lemma that we prove in Appendix \ref{sec:iterate}: 
\begin{lem}\label{lem:iterate}
    There exist constants $c_{\rm iter},\epsilon_0$ determined by $\Delta,\kappa_1$ such that \begin{equation}
        c_{\rm iter} \epsilon_0 \le 1/4. \label{eq:citereps<} 
    \end{equation}
    Furthermore, for any $\epsilon\le \epsilon_0$, the iteration \eqref{eq:iterate} with $\kappa_m$ defined in \eqref{eq:kappak=} and initial value \eqref{eq:iterate_initial} generates variables that satisfy \begin{equation}\label{eq:vm<epsm}
    v_m\le \frac{\epsilon}{\delta\kappa_{m-1}} (c_{\rm iter}\epsilon)^{m-1},\quad \tv_m \le \epsilon\, (c_{\rm iter}\epsilon)^{m-1}, \quad \mathbbm{d}_m \le \frac{2}{3}c_{\rm iter}\epsilon,\quad \td_m \le \frac{3\epsilon}{\delta\kappa_{m-2}} (c_{\rm iter}\epsilon)^{\floor{m/2}}
\end{equation}
for any $m\ge 2$. The condition \eqref{eq:tv<dkappa} required to derive the flow equations is also satisfied for all $m\ge 1$.
\end{lem}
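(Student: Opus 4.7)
The plan is to prove all four bounds in \eqref{eq:vm<epsm} together with the precondition \eqref{eq:tv<dkappa} by strong induction on $m$. The base cases $m=1,2$ are essentially immediate from the initial conditions \eqref{eq:iterate_initial}: one has $v_1,\tv_1\le\epsilon$ and $\mathbbm{d}_1=\td_1=\td_2=0$, and \eqref{eq:tv<dkappa} for $m=1$ reduces to $\epsilon\le\delta\kappa_1/3$, which will be built into the choice of $\epsilon_0$. For the inductive step, I would assume the four estimates hold at all indices up to $m$ and then propagate them to $m+1$ in the order $\mathbbm{d}_{m+1}\to\td_{m+1}\to\tv_{m+1}\to v_{m+1}$, because the first two appear on the right-hand sides of the iterations \eqref{eq:v<dtv}--\eqref{eq:tv<dtv}.

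The bound on $\mathbbm{d}_{m+1}$ is the delicate one: its defining sum in \eqref{eq:dm=} pits the check-soundness factor $\ee^{c_f'(\widetilde{\delta\kappa}_{m'})^{-\alpha}}$ against the geometric decay of $v_{m'}$. Using \eqref{eq:deltakap'=}, $\widetilde{\delta\kappa}_{m'}^{-\alpha}$ grows only like $(\log m')^{2\alpha}$, so the check-soundness factor is subexponential in $m'$, while the inductive bound $v_{m'}\le \epsilon\,(c_{\rm iter}\epsilon)^{m'-1}/\delta\kappa_{m'-1}$ contributes a geometric $(c_{\rm iter}\epsilon)^{m'-1}$ with only polynomial corrections from $1/\delta\kappa_{m'-1}\lesssim m'^2$. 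Provided $c_{\rm iter}\epsilon\le 1/4$, the series converges and is dominated by its first few terms; I would then fix $c_{\rm iter}$ large enough that the full tail is bounded by $\tfrac{2}{3}c_{\rm iter}\epsilon$. The bound on $\td_{m+1}$ is simpler: it is a sum over $m'\in(\lfloor m/2\rfloor,m)$, and the smallest index dominates, giving a factor $(c_{\rm iter}\epsilon)^{\lfloor(m+1)/2\rfloor}$ as claimed, with $1/\delta\kappa_{m'-1}\le 1/\delta\kappa_{m-1}$ absorbed into the constant.

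Having bounded $\mathbbm{d}_m$ and $\td_m$, I would substitute into \eqref{eq:v<dtv}--\eqref{eq:tv<dtv}. Each summand on the right contains a single factor of $\tv_m$, so schematically $v_{m+1},\tv_{m+1}\lesssim \tv_m\cdot(\mathbbm{d}_m+\td_m/\delta\kappa_m+\ldots)\sim \tv_m\cdot c_{\rm iter}\epsilon$, producing exactly the extra factor of $c_{\rm iter}\epsilon$ needed to promote the inductive bound from $m$ to $m+1$. Two residual terms require checking: the $\td_m/\delta\kappa_m$ piece carries an extra $\epsilon/\delta\kappa_{m-2}$ that must be absorbed (trivial since $(c_{\rm iter}\epsilon)^{\lfloor m/2\rfloor}$ beats any polynomial in $m$), and the $\tv_m^2(\mathbbm{d}_m+\td_m)/(\kappa_{m+1}\delta\kappa_m^2)$ piece brings an extra factor of $\tv_m$ which is even smaller. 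Finally \eqref{eq:tv<dkappa} for $m+1$ follows by comparing $\tv_{m+1}\le\epsilon(c_{\rm iter}\epsilon)^m$ with $\delta\kappa_{m+1}\ge\kappa_1/(6(m+1)^2)$; the exponential decay beats the polynomial for $\epsilon\le\epsilon_0$.

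The main obstacle is therefore purely the constant chase: one must fix $c_{\rm iter}$ large enough to simultaneously (i) swallow the first few explicit numerical factors $9/\kappa_{m+1}\delta\kappa_m$, etc., appearing in the iteration; (ii) dominate the full convergent series defining $\mathbbm{d}_\infty$ including the check-soundness factor evaluated at small indices where $\widetilde{\delta\kappa}_{m'}$ is $\Theta(1)$; and (iii) leave $\epsilon_0$ small enough that \eqref{eq:citereps<} holds. Since $c_{\rm iter}$ depends only on $\Delta$ and $\kappa_1$ (through $c_f',c_f'',\alpha,\widetilde{c}_f''$, all of which are fixed by the soundness hypothesis and the $\kappa_m$ schedule), this can be done in finite order, and then $\epsilon_0$ is simply chosen so that $c_{\rm iter}\epsilon_0\le 1/4$ and all the inductive inequalities close with room to spare.
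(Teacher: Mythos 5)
Your proposal is correct and follows essentially the same strong-induction scheme as the paper: bound $\td_m$ from the $v_{m'}$'s, use the $(c_{\rm iter}\epsilon)^{m'-1}\le 4^{1-m'}$ geometric decay to dominate the subexponential check-soundness factor $\ee^{c_f'(\widetilde{\delta\kappa}_{m'})^{-\alpha}}$ in $\mathbbm{d}_m$, and propagate $v,\tv$ via the single overall factor of $\tv_m$ in each iteration term. The only thing worth flagging is that the $m=2$ base case is not quite ``immediate from the initial conditions'' — it requires evaluating the iteration at $m=1$ and building the numerical factor $27/(\kappa_2\delta\kappa_1)$ into $c_{\rm iter}$ — but you note exactly this under item (i) of your constant chase, so the argument closes.
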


This Lemma indicates that the iteration can go to $m\rightarrow \infty$. However, we truncate at some order $m_*$ since there is no need to proceed once the perturbation $V_{m_*}$ is smaller than the garbage term $E_{m_*}$, which we cannot rotate away via SWT. We do not care about the local decomposition \eqref{eq:O=OSs} of $E_m$, and just bound its total operator norm. To this end, we first obtain \begin{align}\label{eq:V>d<}
    \norm{\sum_{S:|S|\ge d_{\rm s}} V'_{m+1,S}} &\le \sum_{i\in \Lambda}\sum_{S\ni i:|S|\ge d_{\rm s}}\sum_\bs\norm{V'_{m+1,S,\bs}} \le \sum_{i\in \Lambda}\ee^{-\kappa_{m+1}d_{\rm s}}\sum_{S\ni i:|S|\ge d_{\rm s}}\sum_\bs\norm{V'_{m+1,S,\bs}}\ee^{\kappa_{m+1}|S|} \nonumber\\
    &\le n\,\ee^{-\kappa_{m+1}d_{\rm s}}\norm{V'_{m+1}}_{\kappa_{m+1}}\le n\,\ee^{-\kappa_{m+1}d_{\rm s}}\cdot \frac{\epsilon}{\delta\kappa_m} (c_{\rm iter}\epsilon)^{m}\le n\frac{6\epsilon}{ \kappa_1}m^2 (c_{\rm iter}\epsilon)^{m}\ee^{-\kappa_1 d_{\rm s}/2},
\end{align}
for $m\ge 1$,
where we have used \eqref{eq:kappak=}, \eqref{eq:deltakap=}, and the bound \eqref{eq:vm<epsm} on $v_{m+1}$ that also applies to $\norm{V'_{m+1}}_{\kappa_{m+1}}$. Similarly, $\norm{E_1}\le n\,\ee^{-\kappa_{1}d_{\rm s}}\epsilon$. 
Then \eqref{eq:Em+1=} leads to 
\begin{align}\label{eq:Em*<summ}
    \norm{E_{m_*}} &\le \norm{E_1} + \sum_{m=1}^{m_*-1}\norm{\sum_{S:|S|\ge d_{\rm s}} V'_{m+1,S}}\le \max\lr{1,\frac{6}{\kappa_1}} n\epsilon \ee^{-\kappa_1d_{\rm s}/2} \sum_{m=0}^{m_*-1}\max(1,m^2) (c_{\rm iter}\epsilon)^m  \nonumber\\
    &\le \frac{c_2}{2} n \epsilon\, \ee^{-\kappa_1d_{\rm s}/2}, \where c_2:=4\max\lr{1,\frac{6}{\kappa_1}},
\end{align}
independent of $m_*$.
Here we have used \eqref{eq:citereps<} and $\sum_{m=0}^\infty \max(1,m^2) 4^{-m} < 2$.
Since $\norm{V_{m_*}}\le n v_{m_*}$ with $v_m$ bounded in \eqref{eq:vm<epsm}, we stop the SWTs at step $m_*=\Theta(d_{\rm s})$ that is the smallest integer fulfilling the following: \begin{equation}\label{eq:mstar=}
    \frac{6m_*^2}{\kappa_1} (c_{\rm iter}\epsilon)^{m_*-1} \le \ee^{-\kappa_1d_{\rm s}/2}, \rarrow \norm{V_{m_*}}\le  n\frac{\epsilon}{\delta\kappa_{m_*-1}} (c_{\rm iter}\epsilon)^{m_*-1}\le \frac{c_2}{2} n \epsilon\, \ee^{-\kappa_1d_{\rm s}/2}, 
\end{equation} 
The total garbage in the end is thus \begin{equation}
    \norm{V_{m_*}+E_{m_*}} \le c_2 n \epsilon\, \ee^{-\kappa_1d_{\rm s}/2}.
\end{equation}
This establishes the first property \eqref{eq:V*<delta} we want with $c_2$ determined by $\kappa_1$. 

For the second property of relative boundedness, we follow \cite{topo_Hastings} to show the following, with the proof delayed until Appendix \ref{sec:rela_bound}: 
\begin{prop}\label{prop:rela_bound}
    Suppose the soundness function $f$ grows slowly such that \eqref{eq:sum_r<} holds. If $D=\sum_{S:|S|<d_{\rm s}}\sum_\bs D_{S,\bs}$ where each $D_{S,\bs}$ is block diagonal, then there exists a number $c_D$ such that $D-c_DI$ is relatively bounded by $H_0$: for any state $\ket{\psi}$, \begin{equation}\label{eq:rela_bound}
        \norm{(D-c_DI)\ket{\psi}} \le \lr{2+\frac{\Delta}{ \kappa}} \Delta c_f'' \ee^{c_f'(\kappa/2)^{-\alpha}} \norm{D}_\kappa  \norm{H_0\ket{\psi}}.
    \end{equation}
\end{prop}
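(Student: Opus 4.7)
The plan is to adapt the relative-boundedness argument of \cite{topo_Hastings} to the LDPC setting, with check soundness replacing the finite-dimensional geometry used in the topological-order proof. The first step is to unpack the hypothesis $[D_{S,\bs}, P_S]=0$ into a structural dichotomy on each term. If $\bs=\emptyset$, then $D_{S,\emptyset}$ commutes with every parity check and, since $|S|<d_{\rm s}\le d$, contains no nontrivial logical, so it must be a linear combination of stabilizers supported in $S$. By Lemma~\ref{lem:S<tfr}, each such stabilizer expands as a product of checks supported in the enlargement $\bar S:=\{i:\mathsf{d}(i,S)\le r(S)\}$, where $r(S)$ is the smallest integer with $|S|<\widetilde f(r(S))$; hence $D_{S,\emptyset}$ is diagonal in the joint eigenbasis $\{|\phi_{\mathbf r}\rangle\}$ of all parity checks, with eigenvalue $d_S(\mathbf r|_{\bar S})$ depending only on the syndrome restricted to $\bar S$. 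If instead $\bs\ne\emptyset$, strong support forces $\bs\subseteq\{C:C\subset S\}$, and pushing $D_{S,\bs}$ through $P_S$ flips check signs by $\bs$, landing in an eigenspace orthogonal to $P_S$; combined with block diagonality this yields $D_{S,\bs}=Q_S D_{S,\bs} Q_S$ and in particular $D_{S,\bs}P=0$.

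I would set $c_D:=\sum_{S:|S|<d_{\rm s}} d_S(\emptyset)$, so that $(D-c_D I)P=0$, then expand $|\psi\rangle=\sum_{\mathbf r} c_{\mathbf r}|\phi_{\mathbf r}\rangle$ in the joint check eigenbasis and split $(D-c_D I)|\psi\rangle$ by its outgoing syndrome $\mathbf r'$. The diagonal ($\bs=\emptyset$) contribution at $\mathbf r'=\mathbf r$ is $c_{\mathbf r}\sum_S(d_S(\mathbf r|_{\bar S})-d_S(\emptyset))$, while off-diagonal contributions come from terms with $\bs=\mathbf r\,\Delta\,\mathbf r'\ne\emptyset$ and $\bs\subseteq S$. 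In both cases, the triangle inequality and regrouping by flipped check reduce the analysis to bounding $\sum_{S:\,C\subset\bar S}\|D_S\|$ for each flipped check $C$. Fixing an arbitrary qubit $i\in C$ and invoking the contrapositive of Lemma~\ref{lem:S<tfr} -- if $i\in\bar S$ with $\mathsf{d}(i,S)=r$ then $|S|\ge\widetilde f(r)$ -- I would stratify the sum by $r$ and spend a $\kappa/2$ slack on the exponential weight $e^{\kappa|S|}$ built into $\|D\|_\kappa$, producing a series $\sum_r\gamma(r)\,e^{-(\kappa/2)\widetilde f(r)}\|D\|_\kappa$ which Lemma~\ref{lem:f_growth}'s estimate \eqref{eq:sum_r<} bounds by $c_f''\,e^{c_f'(\kappa/2)^{-\alpha}}\|D\|_\kappa$.

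Each flipped check $C\in\mathbf r$ contributes this estimate once, and $|\mathbf r|\le H_0(\mathbf r)$ by $\lambda_{\mathcal{Q}}\ge 1$. A Pythagorean summation over outgoing syndromes then gives $\|(D-c_D I)|\psi\rangle\|^2\le(\mathrm{const})\cdot(c_f'')^2 e^{2c_f'(\kappa/2)^{-\alpha}}\|D\|_\kappa^2\|H_0|\psi\rangle\|^2$, delivering \eqref{eq:rela_bound}; the overall prefactor $(2+\Delta/\kappa)\Delta$ absorbs the $\Delta$ factors arising from degree-based counting when passing from $\sum_{S:\,C\subset\bar S}$ to $\sum_{S:\,i\in\bar S}$ for a representative qubit $i\in C$, together with a small $\kappa$-norm redistribution $\|D\|_{\kappa-\delta\kappa}\le\|D\|_\kappa$ that accommodates the $\delta\kappa=\kappa/2$ slack.

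The principal obstacle is the key inner-sum estimate: naively, $\sum_{S:\,C\subset\bar S}\|D_S\|$ could diverge with system size on expander graphs because the ball volume $\gamma(r)$ grows exponentially in $r$. The resolution is exactly check soundness, which via Lemma~\ref{lem:S<tfr} forces $|S|\ge\widetilde f(r)$ whenever $\bar S$ reaches distance $r$ from $S$, and via the slow-growth hypothesis on $f$ in Lemma~\ref{lem:f_growth} guarantees that $\widetilde f(r)$ grows fast enough for the exponential suppression $e^{(\kappa/2)\widetilde f(r)}$ to defeat $\gamma(r)$ and for the series to converge uniformly in $n$. Once this estimate is in hand, the remaining steps -- verifying that the off-diagonal ($\bs\ne\emptyset$) contributions satisfy the same type of bound via the representative-qubit trick applied to the anchor set $\bigcup_{C\in\bs}C\subseteq S$, and tracking the $\Delta$ prefactors from degree counting -- are routine.
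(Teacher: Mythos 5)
Your overall approach matches the paper's: decompose $\ket{\psi}$ by syndrome, use block-diagonality together with Lemma~\ref{lem:S<tfr} to argue that $\widetilde{P}_{\bs+\bs'}(D_{S,\bs}-c_{S,\bs}I)\widetilde{P}_{\bs'}$ vanishes unless $|S|\ge\widetilde f\bigl(\mathsf d(S,\bs')\bigr)$, define $c_D$ as the total trivial-sector offset, regroup the surviving $S$ by a representative qubit at distance $r$ from a flipped check, and invoke \eqref{eq:sum_r<} to tame $\gamma(r)\,e^{-\delta\kappa\,\widetilde f(r)}$. That part is essentially the paper's argument, correctly reproduced.

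The gap is the final passage ``a Pythagorean summation over outgoing syndromes then gives $\dots$''. Writing $\|(D-c_DI)\ket{\psi}\|^2 = \sum_{\bs'}\|\widetilde P_{\bs'}(D-c_DI)\ket{\psi}\|^2$ and then triangle-inequalitying each summand over incoming $\bs$ does \emph{not} deliver \eqref{eq:rela_bound} directly: the cross terms between different incoming syndromes survive, and to compare against $\|H_0\ket{\psi}\|^2 = \sum_\bs E_\bs^2\|\psi_\bs\|^2$ you need a factor of $|\bs|^2$, whereas your ``each flipped check $C\in\mathbf r$ contributes this estimate once'' only produces a single power of $|\bs|$. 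The paper closes this by a two-step Cauchy--Schwarz (its \eqref{eq:D2<H02}) that requires \emph{two distinct} matrix-element bounds: the one you derived, $\sum_{\bs'}\|\widetilde P_{\bs+\bs'}\widetilde D\widetilde P_\bs\|\lesssim\mathbbm d'|\bs|$ (the paper's \eqref{eq:sumPDP}), plus a second, weighted bound $\sum_{\bs'}|\bs'|\|\widetilde P_{\bs+\bs'}\widetilde D\widetilde P_\bs\|\lesssim(\Delta/\kappa)\,\mathbbm d'|\bs|$ (the paper's \eqref{eq:sumPDPs}). The second bound is where the $\Delta/\kappa$ factor in \eqref{eq:rela_bound} actually comes from, and its proof is not a ``$\kappa$-norm redistribution'' as you suggest but instead invokes the other half of the LDPC hypothesis -- that each qubit participates in at most $\mathrm{O}(1)$ checks -- to control $|\bs'|\le\Delta|S|$ and trade it for half of the $e^{\kappa|S|}$ weight via $2x\le e^x$. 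You never produce this second estimate, and without it the ``routine'' Pythagorean step does not close.
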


Letting $D=D_{m_*}$ in Proposition \ref{prop:rela_bound}, there exist constant $c_*$ such that for any state $\ket{\psi}$,
\begin{align}
    \norm{D_*\ket{\psi}} &\le \lr{2+\frac{\Delta}{ \kappa_1/2}} \Delta c_f'' \ee^{c_f'(\kappa_1/4)^{-\alpha}} \norm{D}_{\kappa_1/2}  \norm{H_0\ket{\psi}} \nonumber\\
    &\le 2 \lr{1+\frac{\Delta}{ \kappa_1}} \Delta c_f'' \ee^{c_f'(\kappa_1/4)^{-\alpha}}   \norm{H_0\ket{\psi}} \sum_{m=1}^{m_*-1} \norm{\PP V_m}_{\kappa_1/2} \nonumber\\
    &\le 2 \lr{1+\frac{\Delta}{ \kappa_1}} \Delta c_f'' \ee^{c_f'(\kappa_1/4)^{-\alpha}}   \norm{H_0\ket{\psi}} \sum_{m=1}^{m_*-1} v_m \nonumber\\
    &\le 2 \lr{1+\frac{\Delta}{ \kappa_1}} \Delta  \ee^{c_f'(\kappa_1/4)^{-\alpha}}   \norm{H_0\ket{\psi}} \lr{\mathbbm{d}_{m_*} + c_f''\td_{m_*}/2} \nonumber\\
    &\le c_1 \epsilon \norm{H_0\ket{\psi}},
\end{align}
where $D_*$ is defined in \eqref{eq:final_step}.
Here we have used \eqref{eq:D=PV+PV} in the second line, \eqref{eq:PP<1} and $\kappa_m\ge \kappa_1/2$ in the third line, and \eqref{eq:dm=} in the fourth line. The last line comes from \eqref{eq:vm<epsm} and \eqref{eq:citereps<} with constant \begin{equation}
    c_1 = 2 \lr{1+\frac{\Delta}{ \kappa_1}} \Delta  \ee^{c_f'(\kappa_1/4)^{-\alpha}} \lr{\frac{2}{3}c_{\rm iter} + c_f'' \max_{m\ge 3}\frac{3}{\delta\kappa_{m-2}} 4^{-\floor{m/2}}},
\end{equation}
where the maximum over $m$ is bounded because of the exponential decay in $m$.
The second property \eqref{eq:D*<H0*} is thus also established.

Finally, it remains to prove \eqref{eq:At<}. Since $U=U_{m_*-1}=\ee^{A_1}\cdots \ee^{A_{m_*-1}}$ from \eqref{eq:U=UA}, the time-dependent generator in \eqref{eq:U=intA} can be chosen as
\begin{equation}\label{eq:As=Ak}
    A(t) = \left\{ \begin{aligned}
        &0, & &t< 2^{1-m_*} \\
        &2^m A_m, & 2^{-m} \le \, & t < 2^{1-m}, \quad (m=1,\cdots, m_*-1)
    \end{aligned} \right.
\end{equation}
As a result, for any subset $S$, \eqref{eq:At<} is satisfied: \begin{align}
    \max_t \norm{A(t)}_{\kappa_1/2} &\le \max_{m=1,\cdots,m_*-1} 2^m \norm{A_m}_{\kappa_1/2}\le \max_{m=1,\cdots,m_*-1} 2^m \norm{A_{m}}_{\kappa_m} \nonumber\\
    &\le \max_{m=1,\cdots,m_*-1} 2^m \tv_m \le \max_{m=1,\cdots,m_*-1} 2^m \epsilon 4^{1-m} = 2\epsilon.
\end{align}
Here in the second line, we have used \eqref{eq:A<V} and \eqref{eq:vm<epsm} with \eqref{eq:citereps<}. 

We have established the three conditions for Lemma \ref{lem:spec}, so this finishes the proof of Theorem \ref{thm:qtm}.

\subsection{Putting everything together: Proof of Lemma \ref{lem:spec}} \label{sec:spec}
\begin{proof}[Proof of Lemma \ref{lem:spec}]
Due to the gap of $H_0$ and relative boundedness \eqref{eq:D*<H0*}, Lemma 2 in \cite{topo_Hastings} implies that the spectrum of $H_0+D_*$ is contained in \begin{equation}\label{eq:H0+D*_spec}
    \{0\} \cup [1-c_1\epsilon ,\infty)\quad \subset\quad \{0\} \cup [2/3,\infty),
\end{equation}
where the exact energy-$0$ subspace consists of the unperturbed $2^k$ ground states in $P$, and we have used $1-c_1\epsilon\ge 2/3$ from \eqref{eq:V=eps}.

We then consider $V_*$ as a perturbation to $H_0+D_*$. According to Weyl's inequality, \begin{equation}
    \abs{\lambda_m(A+B)-\lambda_m(A)} \le \norm{B},
\end{equation}
where $\lambda_m(A)$ is the $m$-th smallest eigenvalue of $A$. Letting $A=H_0+D_*$ and $B=V_*$, the spectrum of $U^\dagger H U$ is thus contained in the two disjoint intervals \eqref{eq:spec_H} due to \eqref{eq:H0+D*_spec}, with exactly $2^k$ ground states. Here to bound the final gap by $1/2$, we have used \begin{equation}
    3\epsilon_*\le 1/6,
\end{equation}
which is achievable if $n$ is larger than a constant $c_3$ because $\epsilon_*$ decays with $n$ in \eqref{eq:eps*=}. The spectrum of $H$ is thus also contained in \eqref{eq:spec_H}, since conjugation by a unitary does not change eigenvalues.

To prove stability of eigenvectors \eqref{eq:Pnew-P}, we first show that for any normalized new ground state \begin{equation}\label{eq:psi=Pnew}
    \ket{\psi}=P_{\rm new}'\ket{\psi}, \where P_{\rm new}':=\lr{U^\dagger P_{\rm new}U}
\end{equation}
(here we work in the rotated frame with the new ground states contained in $U^\dagger P_{\rm new}U$), its support on the unperturbed excited-state subspace $I-P$ of $H_0+D_*$ is small: \begin{equation}\label{eq:I-P<delta}
    \norm{(I-P)\ket{\psi} } \le 2\sqrt{\epsilon_*}.
\end{equation}
This comes from a Markov inequality for evaluating the total energy \begin{equation}
    \epsilon_* \ge \bra{\psi}(H_0+D_*+V_*)\ket{\psi} \ge \bra{\psi}(H_0+D_*)\ket{\psi} - \norm{V_*} \ge (1-c_1\epsilon)\bra{\psi}(I-P)\ket{\psi} - \epsilon_* \ge \frac{1}{2}\norm{(I-P)\ket{\psi} }^2-\epsilon_*,
\end{equation}
where the first and the third inequality comes from the spectrums \eqref{eq:spec_H} and \eqref{eq:H0+D*_spec}, respectively. We have also used $1-c_1\epsilon\ge 1/2$ from \eqref{eq:V=eps}. \eqref{eq:I-P<delta} actually implies \begin{equation}\label{eq:P-PP<}
    \norm{(I-P)P_{\rm new}'}=\norm{(P_{\rm new}'-P)P_{\rm new}'}\le 2\sqrt{\epsilon_*},
\end{equation}
due to \eqref{eq:psi=Pnew}.

On the other hand, we can view $H_0+D_*$ as perturbing $H_0+D_*+V_*$ by $-V_*$. For any original ground state $\ket{\psi}=P\ket{\psi}$, its support in the new excited states is also small: \begin{equation}\label{eq:I-P<delta1}
    \norm{(I-P_{\rm new}')\ket{\psi} } \le 2\sqrt{\epsilon_*},
\end{equation}
because \begin{align}
    0 = \bra{\psi}(H_0+D_*)\ket{\psi} \ge \bra{\psi}(H_0+D_*+V_*)\ket{\psi} - \norm{V_*} &\ge (1-c_1\epsilon-\epsilon_*)\bra{\psi}(I-P_{\rm new}')\ket{\psi}- \epsilon_* \bra{\psi}P_{\rm new}'\ket{\psi} - \epsilon_* \nonumber\\
    &\ge \frac{1}{2} \norm{(I-P_{\rm new}')\ket{\psi}}^2 - 2\epsilon_*.
\end{align}
\eqref{eq:I-P<delta1} leads to $\norm{(P_{\rm new}'-P)P}\le 2\sqrt{\epsilon_*}$, so that combining with \eqref{eq:P-PP<}, we have \begin{align}
    \norm{P_{\rm new}'-P}&=\norm{(P_{\rm new}'-P)(P+I-P)} \le \norm{(P_{\rm new}'-P)P} + \norm{(P_{\rm new}'-P)(I-P)} \nonumber\\
    &= \norm{(P_{\rm new}'-P)P} + \norm{P_{\rm new}'(I-P)} \le 4\sqrt{\epsilon_*}.
\end{align}
Here to the second line comes from $P(I-P)=0$ and plugging in the bounds above.

Therefore, Theorem \ref{thm:qtm} holds since \eqref{eq:At<} is also satisfied by assumption.
\end{proof}

\subsection{Analyzing the flow equations: Proof of Lemma \ref{lem:iterate}}\label{sec:iterate}
\begin{proof}[Proof of Lemma \ref{lem:iterate}]
First observe that if the first bound in \eqref{eq:vm<epsm} on $v_m$ holds, then the last bound on $\td_m$ follows because \begin{align}
    \td_m &= 2\sum_{m'=\floor{m/2}+1}^{m-1} v_{m'} \le 2\sum_{m'=\floor{m/2}+1}^{m-1} \frac{\epsilon}{\delta\kappa_{m'-1}} (c_{\rm iter}\epsilon)^{m'-1} \le \frac{2\epsilon}{\delta\kappa_{m-2}}\sum_{m'=\floor{m/2}+1}^{m-1}(c_{\rm iter}\epsilon)^{m'-1} \nonumber\\
    &\le \frac{2\epsilon}{\delta\kappa_{m-2}} (c_{\rm iter}\epsilon)^{\floor{m/2}} \lr{1+\frac{1}{4}+\frac{1}{4^2}+\cdots} \le \frac{3\epsilon}{\delta\kappa_{m-2}} (c_{\rm iter}\epsilon)^{\floor{m/2}}.
\end{align}
Therefore, we can focus on the first three bounds in \eqref{eq:vm<epsm}.

Let \begin{equation}\label{eq:citer=}
    c_{\rm iter}:=\max\lr{\frac{3\widetilde{c}_f''}{2}\lr{\ee^{c_f'(\widetilde{\delta\kappa}_1)^{-\alpha}}+ \sum_{m=2}^\infty \ee^{c_f'(\widetilde{\delta\kappa}_{m})^{-\alpha}}  \frac{4^{1-m}}{\delta\kappa_{m-1}}  },\frac{27}{\kappa_2 \delta\kappa_1}\max\lr{1, \delta\kappa_1} },
\end{equation}
be a constant determined by $c_f',\widetilde{c}_f'',\alpha,\kappa_1$. In particular, the infinite sum converges because the exponential decay $4^{1-m}$ dominates e.g. growth of $\ee^{c_f'(\widetilde{\delta\kappa}_{m})^{-\alpha}}= \ee^{\mathrm{O}[(\log m)^{2\alpha}]}$ from \eqref{eq:deltakap'=}.
There exists a constant \begin{equation}\label{eq:eps0<deltakap}
    0<\epsilon_0<\min\lr{1/(4c_{\rm iter}), \delta\kappa_1/3},
\end{equation} 
determined by $\Delta,\kappa_1$ such that $\forall m\ge 2$, \begin{equation}
    \frac{3}{\delta\kappa_m\delta\kappa_{m-1}} (c_{\rm iter}\epsilon_0)^{\floor{m/2}}+ \frac{9}{\kappa_{m+1}\delta \kappa_m} (c_{\rm iter}\epsilon_0)^{m-1} \lr{\frac{3}{\delta\kappa_{m-1}}+ \frac{4\epsilon_0}{\delta\kappa_m}\lr{\frac{2}{3}c_{\rm iter}+ \frac{3}{\delta\kappa_{m-1}}}} \le \frac{c_{\rm iter}}{3} \min\lr{1, \frac{1}{ \delta\kappa_2}}. \label{eq:citerm<}
\end{equation}
The reason is that \eqref{eq:citerm<} is always suppressed by a positive power of $\epsilon_0$, and
this exponential-decaying factor $(c_{\rm iter}\epsilon_0)^{m-1}$ dominates over the polynomial dependence of $\kappa_{m+1},\delta \kappa_{m'}$ ($m'=m,m-1$) at large $m$, so there always exists a sufficiently small $\epsilon_0$ that does the job. 


For $m=2$, \eqref{eq:tv<dtv} yields \begin{equation}
    \tv_2 \le 0 + \frac{9}{\kappa_2 \delta\kappa_1} \epsilon(3\epsilon + 0) \le c_{\rm iter} \epsilon^2,
\end{equation}
according to \eqref{eq:citer=}. \eqref{eq:v<dtv} yields $v_2\le c_{\rm iter} \epsilon^2/(\delta\kappa_1)$ similarly, and \eqref{eq:d<d+tv} leads to \begin{equation}\label{eq:d2<}
    \mathbbm{d}_2\le \widetilde{c}_f'' \ee^{c_f'(\widetilde{\delta\kappa}_1)^{-\alpha}} \epsilon\le \frac{2}{3}c_{\rm iter}\epsilon,
\end{equation}
using \eqref{eq:citer=}. Therefore, \eqref{eq:vm<epsm} is satisfied at $m=2$.

We then prove \eqref{eq:vm<epsm} inductively for any $\epsilon\le \epsilon_0$ and $c_{\rm iter},\epsilon_0$ defined above: if \eqref{eq:vm<epsm} holds for all $m=1,\cdots,m'$ with $m'\ge 2$, it suffices to prove that it also holds for $m=m'+1$. For \eqref{eq:tv<dtv}, we have \begin{align}
    \tv_{m'+1} &\le \tv_{m'} \mlr{\frac{2}{3}c_{\rm iter}\epsilon+\frac{3\epsilon}{\delta\kappa_{m'}\delta\kappa_{m'-1}} (c_{\rm iter}\epsilon_0)^{\floor{m'/2}} +\frac{9\epsilon}{\kappa_{m'+1}\delta \kappa_{m'}} (c_{\rm iter}\epsilon)^{m'-1} \lr{\frac{3}{\delta\kappa_{m'-1}}+ \frac{4}{\delta\kappa_{m'}} \lr{\frac{2}{3}c_{\rm iter}\epsilon_0+ \frac{3\epsilon_0}{\delta\kappa_{m'-1}}}} } \nonumber\\
    &\le \tv_{m'} \lr{\frac{2}{3}c_{\rm iter}\epsilon+\frac{1}{3}c_{\rm iter}\epsilon } \le c_{\rm iter}\epsilon \tv_{m'} \le \epsilon \lr{c_{\rm iter}\epsilon}^{m'},
\end{align}
where we have used \eqref{eq:vm<epsm} for $m=m'$, $\delta\kappa_{m'-2}\le \delta\kappa_{m'-1}$, and that \eqref{eq:citerm<} holds for $\epsilon_0$ replaced by any $\epsilon\le \epsilon_0$. One can similarly obtain $v_{m'+1}\le c_{\rm iter}\epsilon \tv_{m'}/(\delta\kappa_{m'})$ using \eqref{eq:v<dtv} and $\delta\kappa_{m'}\le \delta\kappa_2$, so that $v_{m'+1}$ also satisfies \eqref{eq:vm<epsm}. Finally, \eqref{eq:d<d+tv} (or the equivalent \eqref{eq:dm=}) leads to \begin{align}
    \mathbbm{d}_{m'+1} &\le \mathbbm{d}_2+ \sum_{m=2}^{\floor{(m'+1)/2}} \widetilde{c}_f'' \ee^{c_f'(\widetilde{\delta\kappa}_{m})^{-\alpha}}  \frac{1}{\delta\kappa_{m-1}} v_m \le \widetilde{c}_f'' \ee^{c_f'(\widetilde{\delta\kappa}_1)^{-\alpha}} \epsilon+\widetilde{c}_f''\epsilon \sum_{m=2}^\infty \ee^{c_f'(\widetilde{\delta\kappa}_{m})^{-\alpha}}  \frac{1}{\delta\kappa_{m-1}} 4^{1-m} \le \frac{2}{3}c_{\rm iter} \epsilon,
\end{align}
where we have used the bound on $v_m$ in \eqref{eq:vm<epsm},  $c_{\rm iter}\epsilon\le 1/4$ from \eqref{eq:eps0<deltakap}, and finally \eqref{eq:d2<} and \eqref{eq:citer=}. Therefore, \eqref{eq:vm<epsm} holds inductively.

Finally, since the condition for iteration \eqref{eq:tv<dkappa} is satisfied at $m=1$ due to \eqref{eq:eps0<deltakap}, and $\delta\kappa_m\ge \delta\kappa_1 4^{1-m}$, \eqref{eq:tv<dkappa} continues to hold for all $m\ge 1$.
\end{proof}

\subsection{Application of local indistinguishability: Proof of Lemma \ref{lem:DA<}}\label{sec:DA<}
\begin{proof}[Proof of Lemma \ref{lem:DA<}]
We assign the local terms as \begin{equation}\label{eq:DASs}
    \lr{[D, A]}_{S'',\bs''} = \sum_{\substack{S, S', \bs , \bs': \\
    S\cup S'=S'',\bs+\bs'=\bs''}} [D_{S',\bs'}, A_{S,\bs}],
\end{equation}
where each term is strongly supported in $S'\cup S$, and has syndrome $\bs+\bs'$. 

 We first prove the easier \eqref{eq:DA<DA} that does not depend on e.g. the block-diagonal structure of $D$. For a given $D_{S',\bs'}$,
\begin{align}\label{eq:sumSDA<1}
    \sum_{S,\bs} \norm{[D_{S',\bs'}, A_{S,\bs}]} \ee^{\kappa' |S'\cup S|} &\le \sum_{i\in S'} \sum_{S\ni i} \sum_{\bs} 2\norm{D_{S',\bs'}}\norm{A_{S,\bs}} \ee^{\kappa'(|S|+|S'|)} \nonumber\\
    &\le 2|S'|\norm{D_{S',\bs'}} \ee^{\kappa'|S'|} \norm{A}_{\kappa'} \nonumber\\
    &\le \frac{1}{\delta\kappa} \norm{D_{S',\bs'}} \ee^{\kappa|S'|} \norm{A}_{\kappa'}.
\end{align}
Here in the first line, we have organized the sum by the fact that for any $S$ that overlaps with $S'$ such that the commutator does not vanish, there exists $i\in S'$ such that $S\ni i$. The overlapping condition also leads to $|S'\cup S|\le |S|+|S'|-1$ ($-1$ is not important and ignored). In the second line, we have invoked the definition of $\kappa$-norm for $A$. The last line comes from $2x\le \ee^x$ for $x=\delta\kappa|S'|$. Exchanging the roles of $D$ and $A$ in the previous manipulation \eqref{eq:sumSDA<1}, we also have for a given $A_{S,\bs}$: \begin{equation}\label{eq:sumS'DA<1}
    \sum_{S',\bs'} \norm{[D_{S',\bs'}, A_{S,\bs}]} \ee^{\kappa' |S'\cup S|} \le \frac{1}{\delta\kappa} \norm{D}_{\kappa'} \norm{A_{S,\bs}}\ee^{\kappa|S|}.
\end{equation}
Observe that if a local term $[D_{S',\bs'}, A_{S,\bs}]$ acts on $i$, either $i\in S$ or $i\in S'$. As a result, \eqref{eq:DA<DA} follows by \begin{align}\label{eq:commDA<}
    \norm{[D,A]}_{\kappa'} &\le \max_i \left[\sum_{S\ni i}\sum_{\bs} \sum_{S',\bs'} \norm{[D_{S',\bs'}, A_{S,\bs}]} \ee^{\kappa'|S'\cup S|} + \sum_{S'\ni i}\sum_{\bs'} \sum_{S,\bs} \norm{[D_{S',\bs'}, A_{S,\bs}]} \ee^{\kappa'|S'\cup S|}\right] \nonumber\\
    &\le \max_i \left[ \sum_{S\ni i}\sum_{\bs} \frac{1}{\delta\kappa} \norm{D}_{\kappa'} \norm{A_{S,\bs}}\ee^{\kappa|S|} + \sum_{S'\ni i}\sum_{\bs'}\frac{1}{\delta\kappa} \norm{D_{S',\bs'}} \ee^{\kappa|S'|} \norm{A}_{\kappa'}\right] \nonumber\\
    &\le \frac{1}{\delta\kappa} \lr{\norm{D}_{\kappa'} \norm{A}_{\kappa}+ \norm{D}_\kappa \norm{A}_{\kappa'}} \le \frac{2}{\delta\kappa}\norm{D}_\kappa \norm{A}_{\kappa}.
\end{align}
Here we have plugged in \eqref{eq:sumS'DA<1} and \eqref{eq:sumSDA<1} in the second line, and used e.g. $\norm{D}_{\kappa'}\le \norm{D}_{\kappa}$ in the end.

We then turn to \eqref{eq:PDA<DV}, where we need two crucial observations. 

The first is that many terms in \eqref{eq:DASs} vanish when projected by $\PP^\perp$ (see the toric code example discussed earlier for intuition): \begin{equation}\label{eq:PDA=0}
    \PP^\perp [D_{S',\bs'}, A_{S,\bs}] =0,
\end{equation}
as long as $S'$ is far from the syndrome $\bs$, compared to its own size $|S'|$. More precisely, \eqref{eq:PDA=0} holds if \begin{equation}\label{eq:S'<fdistance}
    |S'|<\widetilde{f}( \mathsf{d}(S', C)), \quad \forall \mathcal{Q}_C\in \bs,
\end{equation}
where $\widetilde{f}$ is the function determined by $f$ in \eqref{eq:tf=f}.

To see this, Lemma \ref{lem:S<tfr} implies that (here is why we restrict $D$ to only contain $|S|<d_{\rm s}$ terms) \begin{equation}\label{eq:PDP=cP}
    P_{\bar{S}'} D_{S',\bs'} P_{\bar{S}'} = c P_{\bar{S}'},
\end{equation}
for constant $c$ (note that $c=0$ for $\bs'\neq \emptyset$), where $\bar{S}'$ does not overlap with the syndrome $\bs$ according to \eqref{eq:S'<fdistance}. As a result, for $\bar{S}''=\bar{S}'\cup S$, \begin{align}\label{eq:QDAP=0}
    Q_{\bar{S}''}[D_{S',\bs'}, A_{S,\bs}] P_{\bar{S}''} &=Q_{\bar{S}''}[D_{S',\bs'}-cI, A_{S,\bs}] P_{\bar{S}''} = Q_{\bar{S}''}(D_{S',\bs'}-cI) A_{S,\bs} P_{\bar{S}''} \nonumber\\
    &= Q_{\bar{S}''}(D_{S',\bs'}-cI) A_{S,\bs}P_{\bar{S}'} P_{\bar{S}''} = Q_{\bar{S}''}(D_{S',\bs'}-cI) P_{\bar{S}'} A_{S,\bs}P_{\bar{S}''} = 0.
\end{align}
Here in the first line, we have inserted an identity $cI$ in the commutator where $c$ is the constant in \eqref{eq:PDP=cP}  such that (recall \eqref{eq:DcommutP}) \begin{equation}\label{eq:P(D-c)=0}
    P_{\bar{S}'}(D_{S',\bs'}-cI)=0,\rarrow P_{\bar{S}''}(D_{S',\bs'}-cI) =0.
\end{equation}
In the second line of \eqref{eq:QDAP=0}, we have commuted $P_{\bar{S}'}$ through $A_{S,\bs}$ because the syndrome $\bs$ does not touch $\bar{S}'$, and used \eqref{eq:P(D-c)=0} again. \eqref{eq:QDAP=0} implies that the commutator $[D_{S',\bs'}, A_{S,\bs}]$ commutes with $P_{\bar{S}''}$. On the other hand, since $[D_{S',\bs'}, A_{S,\bs}]$ is strongly supported in $S''=S'\cup S$, it commutes with checks in $P_{\bar{S}''}$ that are not contained in $S''$. As a result, in order for the commutator to commute with $P_{\bar{S}''}$, it must commute with $P_{S''}$:
\begin{equation}\label{eq:PDA=0_}
    Q_{S''}[D_{S',\bs'}, A_{S,\bs}] P_{S''} = 0, \rarrow \PP^\perp [D_{S',\bs'}, A_{S,\bs}]=Q_{S''}[D_{S',\bs'}, A_{S,\bs}] P_{S''}+\mathrm{H.c.}=0.
\end{equation}

The second observation is that $A_{S,\bs}$ is itself suppressed by a factor \begin{equation}\label{eq:A<V/s}
    \norm{A_{S,\bs}} \le \max\lr{\norm{P_S V_{S,\bs} Q_S H_S^{-1}}, \norm{H_S^{-1}Q_S V_{S,\bs} P_S }}\le \norm{P_S V_{S,\bs} Q_S}/|\bs| = \norm{\PP^\perp V_{S,\bs}}/|\bs|.
\end{equation}
Here we have used $H_S^{-1}\le 1/|\bs|$ because any state in $V_{S,\bs}P_S$ flips $|\bs|$ checks, and any flipped check cost at least energy $1$, from (\ref{eq:minlambdacheck}).

For a given $A_{S,\bs}$, these two facts lead to \begin{align}\label{eq:sumS'DA<}
    \sum_{S',\bs'} \norm{\PP^\perp [D_{S',\bs'}, A_{S,\bs}]} \ee^{\kappa' |S'\cup S|} &\le \sum_{\mathcal{Q}_C\in \bs} \sum_{i\in C} \sum_{S': |S'|\ge \widetilde{f}(\mathsf{d}(i,S'))}\sum_{\bs'} \norm{[D_{S',\bs'},A_{S,\bs}]} \ee^{\kappa'(|S|+|S'|)} \nonumber\\
    &\le \sum_{\mathcal{Q}_C\in \bs} \sum_{i\in C} \sum_{r\ge 0: \widetilde{f}(r)<d_{\rm s}} \sum_{j: \mathsf{d}(i,j)=r}\sum_{S'\ni j:|S'|\ge \widetilde{f}(r) }\sum_{\bs'} 2\norm{D_{S',\bs'}}\norm{A_{S,\bs}} \ee^{\kappa'(|S|+|S'|)} \nonumber\\
    &\le 2\norm{A_{S,\bs}}\ee^{\kappa'|S|}\sum_{\mathcal{Q}_C\in \bs} \sum_{i\in C} \sum_{r\ge 0: \widetilde{f}(r)<d_{\rm s}}\gamma(r) \norm{D}_\kappa \ee^{-\delta\kappa \widetilde{f}(r)} \nonumber\\
    &\le 2\norm{A_{S,\bs}}\ee^{\kappa'|S|}|\bs| \Delta \norm{D}_\kappa c_f'' \ee^{c_f'(\delta\kappa)^{-\alpha}} \nonumber\\
    &\le 2\Delta c_f'' \ee^{c_f'(\delta\kappa)^{-\alpha}} \norm{D}_{\kappa}\norm{\PP^\perp V_{S,\bs}} \ee^{\kappa'|S|}.
\end{align}
Here the first line comes from \eqref{eq:PP<1} and the first observation above, where for the projected commutator to not vanish, there exists $i$ contained in syndrome $\bs$ such that $|S'|\ge \widetilde{f}(\mathsf{d}(i,S'))$. This condition is further relaxed to the second line that $S'$ acts on $j\in \Lambda$ such that $|S'|\ge \widetilde{f}(\mathsf{d}(i,j))$. The sum over $r$ is restricted by $\widetilde{f}(r)<d_{\rm s}$, because $D$ only contains terms with support $|S'|<d_{\rm s}$. To get the third line of \eqref{eq:sumS'DA<}, notice that $\ee^{\kappa'|S'|}\le \ee^{\kappa|S'|}\ee^{\delta\kappa \widetilde{f}(r)}$ for any $S'$ in the sum, so we can use the $\kappa$-norm of $D$ with an exponential factor leftover. We also replaced the sum over $j$ by $\gamma(r)$, which was defined in \eqref{eq:Gamma<Delta}. We then used condition \eqref{eq:sum_r<} to get the fourth line, and \eqref{eq:A<V/s} for the last line.

For a given $D_{S',\bs'}$, we use the previous \eqref{eq:sumSDA<1} to obtain \begin{align}\label{eq:sumSDA<}
    \sum_{S,\bs} \norm{\PP^\perp [D_{S',\bs'}, A_{S,\bs}]} \ee^{\kappa' |S'\cup S|}\le \sum_{S,\bs} \norm{ [D_{S',\bs'}, A_{S,\bs}]} \ee^{\kappa' |S'\cup S|}\le \frac{1}{\delta\kappa} \norm{D_{S',\bs'}} \ee^{\kappa|S'|} \norm{\PP^\perp V}_{\kappa'},
\end{align}
where we have used $\norm{A}_{\kappa'}\le \norm{\PP^\perp V}_{\kappa'}$ (see also \eqref{eq:A<V}) because each $\norm{A_{S,\bs}}\le \norm{\PP^\perp V_{S,\bs}}$ from \eqref{eq:A<V/s} and $|\bs|\ge 1$, since $A_{S,\bs}=0$ as long as $\bs=\emptyset$.
We then follow \eqref{eq:commDA<} to derive \eqref{eq:PDA<DV}: \begin{align}
    \norm{\PP^\perp [D,A]}_{\kappa'} &\le \max_i \left[\sum_{S\ni i}\sum_{\bs} \sum_{S',\bs'} \norm{\PP^\perp[D_{S',\bs'}, A_{S,\bs}]} \ee^{\kappa'|S'\cup S|} + \sum_{S'\ni i}\sum_{\bs'} \sum_{S,\bs} \norm{\PP^\perp[D_{S',\bs'}, A_{S,\bs}]} \ee^{\kappa'|S'\cup S|} \right] \nonumber\\
    &\le \max_i \left[ \sum_{S\ni i}\sum_{\bs} 2 \Delta c_f'' \ee^{c_f'(\delta\kappa)^{-\alpha}} \norm{D}_{\kappa}\norm{\PP^\perp V_{S,\bs}} \ee^{\kappa'|S|} + \sum_{S'\ni i}\sum_{\bs'}\frac{1}{\delta\kappa} \norm{D_{S',\bs'}} \ee^{\kappa|S'|} \norm{\PP^\perp V}_{\kappa'}\right] \nonumber\\
    &\le \lr{\frac{1}{\delta\kappa}+2\Delta c_f'' \ee^{c_f'(\delta\kappa)^{-\alpha}}}\norm{D}_\kappa \norm{\PP^\perp V}_{\kappa'},
\end{align}
where the second line combines \eqref{eq:sumS'DA<} and \eqref{eq:sumSDA<}. The second expression in \eqref{eq:PDA<DV} then follows by relaxing the term $1/\delta\kappa$ to an exponential using $\alpha>0$. We have also assumed $\widetilde{c}_f''\ge 2$ without loss of generality, because otherwise we can set $\max(2,\widetilde{c}_f'')$ to be the new constant $\widetilde{c}_f''$.
\end{proof}

\subsection{Relative boundedness: Proof of Proposition \ref{prop:rela_bound}}\label{sec:rela_bound}
\begin{proof}[Proof of Proposition \ref{prop:rela_bound}]
Like operators in \eqref{eq:O=OSs}, any state \begin{equation}\label{eq:psi=syndrome}\ket{\psi}=\sum_{\bs}\widetilde{P}_{\bs}\ket{\psi}=:\sum_\bs \ket{\psi_\bs},
\end{equation} 
can be also decomposed into different syndrome subspaces with orthogonal projectors $\{\widetilde{P}_{\bs}\}$ for each syndrome $\bs$ defined by \begin{equation}
    \widetilde{P}_\bs := \prod_{\mathcal{Q}\in \bs}\frac{I-\mathcal{Q}}{2} \prod_{\mathcal{Q}\notin \bs}\frac{I+\mathcal{Q}}{2}.
\end{equation}
In other words, $\widetilde{P}_\bs$ projects onto the simultaneous eigensubspace of the checks with eigenvalues given in $\bs$. Note that some $\widetilde{P}_\bs=0$ if the checks have redundancies.

$D_{S,\bs}$ maps a state of syndrome $\bs'$ to syndrome $\bs+\bs'$. Furthermore, similar to the argument around \eqref{eq:PDA=0}, we have \begin{equation}\label{eq:PDP=0}
    \exists c\in \mathbb{R}: \widetilde{P}_{\bs+\bs'} (D_{S,\bs}-c_{S,\bs}I) \widetilde{P}_{\bs'}=0, \quad \text{unless } |S|\ge \widetilde{f}(r), \where r=\min_{\mathcal{Q}_C\in \bs'}\mathsf{d}(C,S).
\end{equation}
This holds because if $S$ is far from syndrome $\bs'$ comparing to its size $|S|<\widetilde{f}(r)$, $\widetilde{P}_{\bs'}\propto P_{\bar{S}}$ where $\bar{S}$ is the locally indistinguishable region in Lemma \ref{lem:S<tfr}, so that \begin{equation}\label{eq:DP=PDP}
D_{S,\bs}P_{\bar{S}}=D_{S,\bs}P_{\bar{S}}^2=P_{\bar{S}}D_{S,\bs}P_{\bar{S}}=c_{S,\bs}P_{\bar{S}},
\end{equation}
for a number $c_{S,\bs}$, which yields \eqref{eq:PDP=0}. Here the second equality in \eqref{eq:DP=PDP} holds because $D_{S,\bs}$ is block diagonal so commutes with $P_S$ and furthermore $P_{\bar{S}}$. Note that $c_{S,\bs}\neq 0$ only for $\bs=\emptyset$.

Define $c_D=\sum_{S,\bs} c_{S,\bs}$ to be the total offset of $D$, and let $\widetilde{D}=D-c_DI$.

Similar to the summation arrangement in \eqref{eq:sumS'DA<}, the condition for $S$ in \eqref{eq:PDP=0} implies the existence of site $j\in S$  that is of distance $r$ to the syndrome $\bs'$.
\eqref{eq:PDP=0} then leads to \begin{align}\label{eq:PDP<}
    \norm{\widetilde{P}_{\bs+\bs'} \widetilde{D} \widetilde{P}_{\bs}} &\le \sum_{\mathcal{Q}_C\in \bs}\sum_{i\in C}\sum_{r\ge 0: \widetilde{f}(r)<d_{\rm s}} \sum_{j:\mathsf{d}(i,j)=r} \sum_{S\ni j: |S|\ge \widetilde{f}(r)}\norm{D_{S,\bs'}-c_{S,\bs'}I} \nonumber\\
    &\le 2\sum_{\mathcal{Q}_C\in \bs}\sum_{i\in C}\sum_{r\ge 0: \widetilde{f}(r)<d_{\rm s}} \ee^{-\kappa \widetilde{f}(r)} \sum_{j:\mathsf{d}(i,j)=r} \sum_{S\ni j: |S|\ge \widetilde{f}(r)}\norm{D_{S,\bs'}} \ee^{\kappa|S|}
\end{align}
where we have used $\norm{\widetilde{P}_{\bs+\bs'} \lr{D_{S,\bs'}-c_{S,\bs'}I}\widetilde{P}_{\bs}}\le \norm{D_{S,\bs'}-c_{S,\bs'}I}$ in the first line, and $|c_{S,\bs'}|=\norm{P_{\bar{S}}D_{S,\bs'}P_{\bar{S}}}\le \norm{D_{S,\bs'}}$ in the second.
As a result, 
\begin{align}\label{eq:sumPDP} 
    \sum_{\bs'}\norm{\widetilde{P}_{\bs+\bs'} \widetilde{D} \widetilde{P}_{\bs}} &\le 2\sum_{\mathcal{Q}_C\in \bs}\sum_{i\in C}\sum_{r\ge 0: \widetilde{f}(r)<d_{\rm s}} \ee^{-\kappa \widetilde{f}(r)} \sum_{j:\mathsf{d}(i,j)=r} \sum_{S\ni j: |S|\ge \widetilde{f}(r)}\sum_{\bs'}\norm{D_{S,\bs'}} \ee^{\kappa|S|} \nonumber\\
    &\le 2 |\bs| \Delta \norm{D}_\kappa \sum_{r\ge 0: \widetilde{f}(r)<d_{\rm s}}\gamma(r) \ee^{-\kappa \widetilde{f}(r)} \le 2 \Delta|\bs|  \norm{D}_\kappa c_f'' \ee^{c_f'\kappa^{-\alpha}} \nonumber\\
    &\le 2 \Delta\norm{D}_\kappa c_f'' \ee^{c_f'(\kappa/2)^{-\alpha}} |\bs|  =:  \mathbbm{d}'|\bs|  ,
\end{align}
where we have used \eqref{eq:sum_r<} and defined $\mathbbm{d}'$ by the final expression. Similarly,
\begin{align}
    \sum_{\bs'}|\bs'|\norm{\widetilde{P}_{\bs+\bs'} \widetilde{D} \widetilde{P}_{\bs}} &\le 2\sum_{\mathcal{Q}_C\in \bs}\sum_{i\in C}\sum_{r\ge 0: \widetilde{f}(r)<d_{\rm s}} \ee^{-\frac{\kappa}{2} \widetilde{f}(r)} \sum_{j:\mathsf{d}(i,j)=r} \sum_{S\ni j: |S|\ge \widetilde{f}(r)}\sum_{\bs'}\norm{D_{S,\bs'}} \ee^{\frac{\kappa}{2}|S|}|\bs'| \nonumber\\
    &\le 2\sum_{\mathcal{Q}_C\in \bs}\sum_{i\in C}\sum_{r\ge 0: \widetilde{f}(r)<d_{\rm s}} \ee^{-\frac{\kappa}{2} \widetilde{f}(r)} \sum_{j:\mathsf{d}(i,j)=r} \sum_{S\ni j: |S|\ge \widetilde{f}(r)}\sum_{\bs'}\norm{D_{S,\bs'}} \ee^{\frac{\kappa}{2}|S|}\lr{\frac{\Delta}{\kappa}\ee^{\frac{\kappa}{2} |S|} } \nonumber\\
    &\le \frac{\Delta}{\kappa} \mathbbm{d}'|\bs|  .\label{eq:sumPDPs}
\end{align}
Here the last line comes from comparing to \eqref{eq:sumPDP}. In the second line of \eqref{eq:sumPDPs}, we have used $2x\le \ee^x$ for $x=\frac{\kappa}{2\Delta}|\bs'|\le \kappa |S|/2$, because an operator of support $S$ at most overlaps with $\Delta |S|$ checks that could belong to the syndrome $\bs'$, according to the LDPC condition. Note that this is the only place where we need each site to participate in bounded number of checks; all other places using the LDPC condition (bounding $\sum_{i\in C}\le \Delta$ here and in \eqref{eq:sumS'DA<}) invoke the other condition that each check acts on bounded number of sites. 

For any state $\psi$ expanded by \eqref{eq:psi=syndrome}, we have (here $\norm{\phi}:=\sqrt{\alr{\phi|\phi}}$)
\begin{align}\label{eq:D2<H02}
    \norm{\widetilde{D}\ket{\psi}}^2 &= \sum_{\bs_1,\bs_2,\bs_3} \bra{\psi_{\bs_3}} \widetilde{D} \widetilde{P}_{\bs_2} \widetilde{D}\ket{\psi_{\bs_1}} \le \sum_{\bs_1,\bs_2,\bs_3} \norm{\psi_{\bs_3}} \norm{\widetilde{P}_{\bs_3} \widetilde{D} \widetilde{P}_{\bs_2}} \norm{\widetilde{P}_{\bs_2} \widetilde{D} \widetilde{P}_{\bs_1}} \norm{\psi_{\bs_1}} \nonumber\\
    &\le \sum_{\bs_1,\bs_2,\bs_3}\frac{1}{2}\lr{\norm{\psi_{\bs_1}}^2+\norm{\psi_{\bs_3}}^2} \norm{\widetilde{P}_{\bs_3} \widetilde{D} \widetilde{P}_{\bs_2}} \norm{\widetilde{P}_{\bs_2} \widetilde{D} \widetilde{P}_{\bs_1}}  = \sum_{\bs_1,\bs_2}\norm{\psi_{\bs_1}}^2  \norm{\widetilde{P}_{\bs_2} \widetilde{D} \widetilde{P}_{\bs_1}} \sum_{\bs_3} \norm{\widetilde{P}_{\bs_3} \widetilde{D} \widetilde{P}_{\bs_2}}  \nonumber\\
    &\le \sum_{\bs_1,\bs_2}\norm{\psi_{\bs_1}}^2  \norm{\widetilde{P}_{\bs_2} \widetilde{D} \widetilde{P}_{\bs_1}} \lr{\mathbbm{d}' |\bs_2|} \le \mathbbm{d}' \sum_{\bs_1}\norm{\psi_{\bs_1}}^2 \sum_{\bs_2'}\lr{|\bs_1| + |\bs_2'| } \norm{\widetilde{P}_{\bs_1+\bs_2'} \widetilde{D} \widetilde{P}_{\bs_1}}  \nonumber\\
    &\le (\mathbbm{d}')^2 \sum_{\bs_1}\norm{\psi_{\bs_1}}^2 |\bs_1|\lr{|\bs_1|+\frac{\Delta}{\kappa}} \le (\mathbbm{d}')^2 \sum_{\bs_1}\norm{\psi_{\bs_1}}^2 |\bs_1|^2\lr{1+\frac{\Delta}{\kappa}} \nonumber\\
    &=\lr{1+\frac{\Delta}{\kappa}}(\mathbbm{d}')^2 \sum_{\bs_1} |\bs_1|^2 \bra{\psi}\widetilde{P}_{\bs_1}\ket{\psi}\le \lr{1+\frac{\Delta}{ \kappa}}(\mathbbm{d}')^2 \norm{H_0 \ket{\psi}}^2.
\end{align}
Here the second line comes from Cauchy-Schwarz inequality and relabeling the variables for the term $\propto \norm{\psi_{\bs_3}}^2$. To get the third line, we observe that
the sum over $\bs_3$ is equivalent to a sum over syndrome $\bs_3'$: $ \sum_{\bs_3'} \norm{\widetilde{P}_{\bs_2+\bs_3'} \widetilde{D} \widetilde{P}_{\bs_2}}$, so we can plug in \eqref{eq:sumPDP}. We have defined $\bs_2'$ similarly and used $|\bs_2|=|\bs_1+\bs_2'|\le |\bs_1|+|\bs_2'|$. We have again used \eqref{eq:sumPDP} and \eqref{eq:sumPDPs} in the fourth line of \eqref{eq:D2<H02}. The last line comes from $H_0^2 \ge \sum_\bs |\bs|^2 \widetilde{P}_\bs$ because flipping $|\bs|$ checks costs at least energy $|\bs|$ due to \eqref{eq:minlambdacheck}. \eqref{eq:D2<H02} finishes the proof of Proposition \ref{prop:rela_bound} using $\mathbbm{d}'$ defined in \eqref{eq:sumPDP}, and $\sqrt{1+x}\le 1+x/2$ for any $x\ge 0$.
\end{proof}

\section{Stability of classical codes under symmetric perturbations}\label{app:classical}
Here we present the proofs of our generalizations of Theorem \ref{thm:qtm} to classical codes with symmetric perturbations.

\begin{repcor}{cor:cla}[Formal version]
    Theorem \ref{thm:qtm} also holds for $H_0$ being a classical LDPC code, as long as each term $V_{S,\bs}$ in the perturbation $V$ is symmetric under conjugation by any logical operator (which are Pauli-$X$s by Definition \ref{def:stabilizer_code}).
\end{repcor}

\begin{proof}
    The proof closely follows the argument in the discussion section of \cite{bravyi2011short}.
    Here the only difference comparing to the proof of Theorem \ref{thm:qtm} is that \eqref{eq:local_indisting} (where $\bar{S}$ is defined in the same way as quantum codes from check soundness) does not hold for all operators $\OO$, because a single $Z_i$ could differentiate the ground states. An alternative viewpoint is that $Z_i$ is a logical $Z$ operator when viewing the classical code as a quantum CSS code with only $Z$ checks.

    Nevertheless, \eqref{eq:local_indisting} holds for any $\OO$ that is symmetric under conjugation by $X$ logicals, because this precisely rules out $Z$ logicals that must anti-commute with some $X$ logical. If the local perturbation $V_{S,\bs}$ is symmetric as we assumed, the operators in the proof of Theorem \ref{thm:qtm} are all symmetric because they are ``generated'' from the perturbation, along with $H_0$ which is also symmetric. More precisely, the initial perturbation $V_1$ first generates (local terms of) $A_1$ by \eqref{eq:A=PVQH}, which is locally symmetric because the $P,Q,H$ factors in \eqref{eq:A=PVQH} all commute with $X$ logicals (since each check in $H_0$ commutes with $X$ logicals). The same holds for $D_2$ in \eqref{eq:D=D+PV}. $V_2$ generated by \eqref{eq:Vk+1} and \eqref{eq:Vm+1=} is also locally symmetric because all components in \eqref{eq:Vk+1} are symmetric. This iterates to arbitrary order $m$, so that we only encounter symmetric $\OO$ when applying \eqref{eq:local_indisting}. The proof of Theorem \ref{thm:qtm} thus generalizes here.
\end{proof}

\begin{repcor}{cor:Ising}[Formal version]
    The conclusions of Theorem \ref{thm:qtm} hold for $H_0$ being the Ising model \eqref{eq:IsingH0} on graph $G$, where the parameter $d_{\rm s}$ may be taken to be $d_{\mathrm{s}}=n$, if the symmetric perturbation $V=\sum_{S,\bs }V_{S,\bs}$ where each $S$ is connected in $G$. 
\end{repcor}

\begin{proof}
We first prove that \eqref{eq:local_indisting} holds for any connected $S$ and symmetric $\OO$ with $\bar{S}=\widetilde{S}$ being $S$ and its neighborhood. In other words, \eqref{eq:local_indisting0} holds: Since $S$ is connected, $P_{\widetilde{S}}=\ket{\mathbf{0}}_{\widetilde{S}}\bra{\mathbf{0}} + \ket{\mathbf{1}}_{\widetilde{S}}\bra{\mathbf{1}}$ projects onto the all-zero $\ket{\mathbf{0}}$ and all-one $\ket{\mathbf{1}}$ states in $\widetilde{S}$. If $\OO$ is symmetric under the $X$ logical $\prod_{i\in \Lambda} X_i$, it is also symmetric under its local version $\mathbf{X}:=\prod_{i\in \widetilde{S}} X_i$, so that \begin{equation}
    \bra{\mathbf{0}} \OO \ket{\mathbf{0}}_{\widetilde{S}} = \bra{\mathbf{0}}\mathbf{X} \OO \mathbf{X}\ket{\mathbf{0}}_{\widetilde{S}}=\bra{\mathbf{1}} \OO \ket{\mathbf{1}}_{\widetilde{S}},
\end{equation}
and $\bra{\mathbf{1}} \OO \ket{\mathbf{0}}_{\widetilde{S}}=0$ because $\OO$ does not act on $\widetilde{S}\setminus S$. As a result, $\OO$ acts as an identity in the local ground states and \eqref{eq:local_indisting0} holds.

Observe that by restricting the initial perturbation to contain only connected $S$, all generated operators $A_m,D_m,V_m$ in the proof of Theorem \ref{thm:qtm} also only contain connected supports $S$ for their local terms, which can be checked in a similar way as the proof of Corollary \ref{cor:cla}. In particular, the connectedness pertains when invoking Lemma \ref{lem:clustexp} to generate operators like $(\ee^{\cA_m}-1)V_m$; see \cite{abanin2017rigorous,our_metastable} for details, which indeed work with connected sets throughout. 

Corollary \ref{cor:Ising} then follows from the above two observations: As we only need to consider connected sets of qubits, we can use \eqref{eq:local_indisting0} directly that avoids invoking check soundness. More precisely, check soundness was only used to obtain \eqref{eq:sum_r<}, but \eqref{eq:sum_r<} now trivially holds because $\bar{S}$ is only $S$ with its neighborhood so that we can set $\widetilde{f}(r)=\infty$ for $r>1$.
\end{proof}

\end{appendix}

\bibliography{thebib}

\providecommand{\noopsort}[1]{}\providecommand{\singleletter}[1]{#1}%
\begin{thebibliography}{61}%
\makeatletter
\providecommand \@ifxundefined [1]{%
 \@ifx{#1\undefined}
}%
\providecommand \@ifnum [1]{%
 \ifnum #1\expandafter \@firstoftwo
 \else \expandafter \@secondoftwo
 \fi
}%
\providecommand \@ifx [1]{%
 \ifx #1\expandafter \@firstoftwo
 \else \expandafter \@secondoftwo
 \fi
}%
\providecommand \natexlab [1]{#1}%
\providecommand \enquote  [1]{``#1''}%
\providecommand \bibnamefont  [1]{#1}%
\providecommand \bibfnamefont [1]{#1}%
\providecommand \citenamefont [1]{#1}%
\providecommand \href@noop [0]{\@secondoftwo}%
\providecommand \href [0]{\begingroup \@sanitize@url \@href}%
\providecommand \@href[1]{\@@startlink{#1}\@@href}%
\providecommand \@@href[1]{\endgroup#1\@@endlink}%
\providecommand \@sanitize@url [0]{\catcode `\\12\catcode `\$12\catcode `\&12\catcode `\#12\catcode `\^12\catcode `\_12\catcode `\%12\relax}%
\providecommand \@@startlink[1]{}%
\providecommand \@@endlink[0]{}%
\providecommand \url  [0]{\begingroup\@sanitize@url \@url }%
\providecommand \@url [1]{\endgroup\@href {#1}{\urlprefix }}%
\providecommand \urlprefix  [0]{URL }%
\providecommand \Eprint [0]{\href }%
\providecommand \doibase [0]{http://dx.doi.org/}%
\providecommand \selectlanguage [0]{\@gobble}%
\providecommand \bibinfo  [0]{\@secondoftwo}%
\providecommand \bibfield  [0]{\@secondoftwo}%
\providecommand \translation [1]{[#1]}%
\providecommand \BibitemOpen [0]{}%
\providecommand \bibitemStop [0]{}%
\providecommand \bibitemNoStop [0]{.\EOS\space}%
\providecommand \EOS [0]{\spacefactor3000\relax}%
\providecommand \BibitemShut  [1]{\csname bibitem#1\endcsname}%
\let\auto@bib@innerbib\@empty
\bibitem [{\citenamefont {Zeng}\ \emph {et~al.}(2019)\citenamefont {Zeng}, \citenamefont {Chen}, \citenamefont {Zhou}, \citenamefont {Wen} \emph {et~al.}}]{QI_meet_QM}%
  \BibitemOpen
  \bibfield  {author} {\bibinfo {author} {\bibfnamefont {Bei}\ \bibnamefont {Zeng}}, \bibinfo {author} {\bibfnamefont {Xie}\ \bibnamefont {Chen}}, \bibinfo {author} {\bibfnamefont {Duan-Lu}\ \bibnamefont {Zhou}}, \bibinfo {author} {\bibfnamefont {Xiao-Gang}\ \bibnamefont {Wen}},  \emph {et~al.},\ }\href@noop {} {\emph {\bibinfo {title} {Quantum information meets quantum matter}}}\ (\bibinfo  {publisher} {Springer},\ \bibinfo {year} {2019})\BibitemShut {NoStop}%
\bibitem [{\citenamefont {Chen}\ \emph {et~al.}(2010)\citenamefont {Chen}, \citenamefont {Gu},\ and\ \citenamefont {Wen}}]{phase_localU10}%
  \BibitemOpen
  \bibfield  {author} {\bibinfo {author} {\bibfnamefont {Xie}\ \bibnamefont {Chen}}, \bibinfo {author} {\bibfnamefont {Zheng-Cheng}\ \bibnamefont {Gu}}, \ and\ \bibinfo {author} {\bibfnamefont {Xiao-Gang}\ \bibnamefont {Wen}},\ }\bibfield  {title} {\enquote {\bibinfo {title} {Local unitary transformation, long-range quantum entanglement, wave function renormalization, and topological order},}\ }\href {\doibase 10.1103/PhysRevB.82.155138} {\bibfield  {journal} {\bibinfo  {journal} {Phys. Rev. B}\ }\textbf {\bibinfo {volume} {82}},\ \bibinfo {pages} {155138} (\bibinfo {year} {2010})}\BibitemShut {NoStop}%
\bibitem [{\citenamefont {Bravyi}\ \emph {et~al.}(2006)\citenamefont {Bravyi}, \citenamefont {Hastings},\ and\ \citenamefont {Verstraete}}]{phase_LRB06}%
  \BibitemOpen
  \bibfield  {author} {\bibinfo {author} {\bibfnamefont {S.}~\bibnamefont {Bravyi}}, \bibinfo {author} {\bibfnamefont {M.~B.}\ \bibnamefont {Hastings}}, \ and\ \bibinfo {author} {\bibfnamefont {F.}~\bibnamefont {Verstraete}},\ }\bibfield  {title} {\enquote {\bibinfo {title} {Lieb-robinson bounds and the generation of correlations and topological quantum order},}\ }\href {\doibase 10.1103/PhysRevLett.97.050401} {\bibfield  {journal} {\bibinfo  {journal} {Phys. Rev. Lett.}\ }\textbf {\bibinfo {volume} {97}},\ \bibinfo {pages} {050401} (\bibinfo {year} {2006})}\BibitemShut {NoStop}%
\bibitem [{\citenamefont {Bravyi}\ \emph {et~al.}(2010{\natexlab{a}})\citenamefont {Bravyi}, \citenamefont {Hastings},\ and\ \citenamefont {Michalakis}}]{topo_Hastings}%
  \BibitemOpen
  \bibfield  {author} {\bibinfo {author} {\bibfnamefont {Sergey}\ \bibnamefont {Bravyi}}, \bibinfo {author} {\bibfnamefont {Matthew~B.}\ \bibnamefont {Hastings}}, \ and\ \bibinfo {author} {\bibfnamefont {Spyridon}\ \bibnamefont {Michalakis}},\ }\bibfield  {title} {\enquote {\bibinfo {title} {Topological quantum order: Stability under local perturbations},}\ }\href {\doibase 10.1063/1.3490195} {\bibfield  {journal} {\bibinfo  {journal} {Journal of Mathematical Physics}\ }\textbf {\bibinfo {volume} {51}},\ \bibinfo {pages} {093512} (\bibinfo {year} {2010}{\natexlab{a}})}\BibitemShut {NoStop}%
\bibitem [{\citenamefont {Piroli}\ \emph {et~al.}(2021)\citenamefont {Piroli}, \citenamefont {Styliaris},\ and\ \citenamefont {Cirac}}]{phase_adaptive21}%
  \BibitemOpen
  \bibfield  {author} {\bibinfo {author} {\bibfnamefont {Lorenzo}\ \bibnamefont {Piroli}}, \bibinfo {author} {\bibfnamefont {Georgios}\ \bibnamefont {Styliaris}}, \ and\ \bibinfo {author} {\bibfnamefont {J.~Ignacio}\ \bibnamefont {Cirac}},\ }\bibfield  {title} {\enquote {\bibinfo {title} {Quantum circuits assisted by local operations and classical communication: Transformations and phases of matter},}\ }\href {\doibase 10.1103/PhysRevLett.127.220503} {\bibfield  {journal} {\bibinfo  {journal} {Phys. Rev. Lett.}\ }\textbf {\bibinfo {volume} {127}},\ \bibinfo {pages} {220503} (\bibinfo {year} {2021})}\BibitemShut {NoStop}%
\bibitem [{\citenamefont {Tantivasadakarn}\ \emph {et~al.}(2023)\citenamefont {Tantivasadakarn}, \citenamefont {Vishwanath},\ and\ \citenamefont {Verresen}}]{phase_adaptive_hierarchy23}%
  \BibitemOpen
  \bibfield  {author} {\bibinfo {author} {\bibfnamefont {Nathanan}\ \bibnamefont {Tantivasadakarn}}, \bibinfo {author} {\bibfnamefont {Ashvin}\ \bibnamefont {Vishwanath}}, \ and\ \bibinfo {author} {\bibfnamefont {Ruben}\ \bibnamefont {Verresen}},\ }\bibfield  {title} {\enquote {\bibinfo {title} {Hierarchy of topological order from finite-depth unitaries, measurement, and feedforward},}\ }\href {\doibase 10.1103/PRXQuantum.4.020339} {\bibfield  {journal} {\bibinfo  {journal} {PRX Quantum}\ }\textbf {\bibinfo {volume} {4}},\ \bibinfo {pages} {020339} (\bibinfo {year} {2023})}\BibitemShut {NoStop}%
\bibitem [{\citenamefont {Stephen}\ \emph {et~al.}(2024)\citenamefont {Stephen}, \citenamefont {Dua}, \citenamefont {Lavasani},\ and\ \citenamefont {Nandkishore}}]{phase_nonlocal24}%
  \BibitemOpen
  \bibfield  {author} {\bibinfo {author} {\bibfnamefont {David~T.}\ \bibnamefont {Stephen}}, \bibinfo {author} {\bibfnamefont {Arpit}\ \bibnamefont {Dua}}, \bibinfo {author} {\bibfnamefont {Ali}\ \bibnamefont {Lavasani}}, \ and\ \bibinfo {author} {\bibfnamefont {Rahul}\ \bibnamefont {Nandkishore}},\ }\bibfield  {title} {\enquote {\bibinfo {title} {Nonlocal finite-depth circuits for constructing symmetry-protected topological states and quantum cellular automata},}\ }\href {\doibase 10.1103/PRXQuantum.5.010304} {\bibfield  {journal} {\bibinfo  {journal} {PRX Quantum}\ }\textbf {\bibinfo {volume} {5}},\ \bibinfo {pages} {010304} (\bibinfo {year} {2024})}\BibitemShut {NoStop}%
\bibitem [{\citenamefont {Kitaev}(2003)}]{toric_code03}%
  \BibitemOpen
  \bibfield  {author} {\bibinfo {author} {\bibfnamefont {A.Yu.}\ \bibnamefont {Kitaev}},\ }\bibfield  {title} {\enquote {\bibinfo {title} {Fault-tolerant quantum computation by anyons},}\ }\href {\doibase https://doi.org/10.1016/S0003-4916(02)00018-0} {\bibfield  {journal} {\bibinfo  {journal} {Annals of Physics}\ }\textbf {\bibinfo {volume} {303}},\ \bibinfo {pages} {2--30} (\bibinfo {year} {2003})}\BibitemShut {NoStop}%
\bibitem [{\citenamefont {Michalakis}\ and\ \citenamefont {Zwolak}(2013)}]{michalakis2013stability}%
  \BibitemOpen
  \bibfield  {author} {\bibinfo {author} {\bibfnamefont {Spyridon}\ \bibnamefont {Michalakis}}\ and\ \bibinfo {author} {\bibfnamefont {Justyna~P}\ \bibnamefont {Zwolak}},\ }\bibfield  {title} {\enquote {\bibinfo {title} {Stability of frustration-free hamiltonians},}\ }\href {\doibase https://doi.org/10.1007/s00220-013-1762-6} {\bibfield  {journal} {\bibinfo  {journal} {Communications in Mathematical Physics}\ }\textbf {\bibinfo {volume} {322}},\ \bibinfo {pages} {277--302} (\bibinfo {year} {2013})}\BibitemShut {NoStop}%
\bibitem [{\citenamefont {Brown}\ \emph {et~al.}(2016)\citenamefont {Brown}, \citenamefont {Loss}, \citenamefont {Pachos}, \citenamefont {Self},\ and\ \citenamefont {Wootton}}]{Brown_2016}%
  \BibitemOpen
  \bibfield  {author} {\bibinfo {author} {\bibfnamefont {Benjamin~J.}\ \bibnamefont {Brown}}, \bibinfo {author} {\bibfnamefont {Daniel}\ \bibnamefont {Loss}}, \bibinfo {author} {\bibfnamefont {Jiannis~K.}\ \bibnamefont {Pachos}}, \bibinfo {author} {\bibfnamefont {Chris~N.}\ \bibnamefont {Self}}, \ and\ \bibinfo {author} {\bibfnamefont {James~R.}\ \bibnamefont {Wootton}},\ }\bibfield  {title} {\enquote {\bibinfo {title} {Quantum memories at finite temperature},}\ }\href {\doibase 10.1103/RevModPhys.88.045005} {\bibfield  {journal} {\bibinfo  {journal} {Rev. Mod. Phys.}\ }\textbf {\bibinfo {volume} {88}},\ \bibinfo {pages} {045005} (\bibinfo {year} {2016})}\BibitemShut {NoStop}%
\bibitem [{\citenamefont {Weinstein}\ \emph {et~al.}(2019)\citenamefont {Weinstein}, \citenamefont {Ortiz},\ and\ \citenamefont {Nussinov}}]{Weinstein_2019}%
  \BibitemOpen
  \bibfield  {author} {\bibinfo {author} {\bibfnamefont {Zack}\ \bibnamefont {Weinstein}}, \bibinfo {author} {\bibfnamefont {Gerardo}\ \bibnamefont {Ortiz}}, \ and\ \bibinfo {author} {\bibfnamefont {Zohar}\ \bibnamefont {Nussinov}},\ }\bibfield  {title} {\enquote {\bibinfo {title} {Universality classes of stabilizer code hamiltonians},}\ }\href {\doibase 10.1103/PhysRevLett.123.230503} {\bibfield  {journal} {\bibinfo  {journal} {Phys. Rev. Lett.}\ }\textbf {\bibinfo {volume} {123}},\ \bibinfo {pages} {230503} (\bibinfo {year} {2019})}\BibitemShut {NoStop}%
\bibitem [{\citenamefont {Krinner}\ \emph {et~al.}(2022)\citenamefont {Krinner} \emph {et~al.}}]{TC_exp22}%
  \BibitemOpen
  \bibfield  {author} {\bibinfo {author} {\bibfnamefont {Sebastian}\ \bibnamefont {Krinner}} \emph {et~al.},\ }\bibfield  {title} {\enquote {\bibinfo {title} {{Realizing repeated quantum error correction in a distance-three surface code}},}\ }\href {\doibase 10.1038/s41586-022-04566-8} {\bibfield  {journal} {\bibinfo  {journal} {Nature}\ }\textbf {\bibinfo {volume} {605}},\ \bibinfo {pages} {669--674} (\bibinfo {year} {2022})},\ \Eprint {http://arxiv.org/abs/2112.03708} {arXiv:2112.03708 [quant-ph]} \BibitemShut {NoStop}%
\bibitem [{\citenamefont {Gupta}\ \emph {et~al.}(2024)\citenamefont {Gupta} \emph {et~al.}}]{TC_IBM23}%
  \BibitemOpen
  \bibfield  {author} {\bibinfo {author} {\bibfnamefont {Riddhi~S.}\ \bibnamefont {Gupta}} \emph {et~al.},\ }\bibfield  {title} {\enquote {\bibinfo {title} {{Encoding a magic state with beyond break-even fidelity}},}\ }\href {\doibase 10.1038/s41586-023-06846-3} {\bibfield  {journal} {\bibinfo  {journal} {Nature}\ }\textbf {\bibinfo {volume} {625}},\ \bibinfo {pages} {259--263} (\bibinfo {year} {2024})},\ \Eprint {http://arxiv.org/abs/2305.13581} {arXiv:2305.13581 [quant-ph]} \BibitemShut {NoStop}%
\bibitem [{\citenamefont {Acharya}\ \emph {et~al.}(2023)\citenamefont {Acharya} \emph {et~al.}}]{TC_google23}%
  \BibitemOpen
  \bibfield  {author} {\bibinfo {author} {\bibfnamefont {Rajeev}\ \bibnamefont {Acharya}} \emph {et~al.} (\bibinfo {collaboration} {Google Quantum AI}),\ }\bibfield  {title} {\enquote {\bibinfo {title} {{Suppressing quantum errors by scaling a surface code logical qubit}},}\ }\href {\doibase 10.1038/s41586-022-05434-1} {\bibfield  {journal} {\bibinfo  {journal} {Nature}\ }\textbf {\bibinfo {volume} {614}},\ \bibinfo {pages} {676--681} (\bibinfo {year} {2023})},\ \Eprint {http://arxiv.org/abs/2207.06431} {arXiv:2207.06431 [quant-ph]} \BibitemShut {NoStop}%
\bibitem [{\citenamefont {Acharya}\ \emph {et~al.}(2025)\citenamefont {Acharya} \emph {et~al.}}]{TC_google24}%
  \BibitemOpen
  \bibfield  {author} {\bibinfo {author} {\bibfnamefont {Rajeev}\ \bibnamefont {Acharya}} \emph {et~al.} (\bibinfo {collaboration} {Google Quantum AI and Collaborators}),\ }\bibfield  {title} {\enquote {\bibinfo {title} {{Quantum error correction below the surface code threshold}},}\ }\href {\doibase 10.1038/s41586-024-08449-y} {\bibfield  {journal} {\bibinfo  {journal} {Nature}\ }\textbf {\bibinfo {volume} {638}},\ \bibinfo {pages} {920--926} (\bibinfo {year} {2025})},\ \Eprint {http://arxiv.org/abs/2408.13687} {arXiv:2408.13687 [quant-ph]} \BibitemShut {NoStop}%
\bibitem [{\citenamefont {Bluvstein}\ \emph {et~al.}(2024)\citenamefont {Bluvstein} \emph {et~al.}}]{Ryd_Lukin24}%
  \BibitemOpen
  \bibfield  {author} {\bibinfo {author} {\bibfnamefont {Dolev}\ \bibnamefont {Bluvstein}} \emph {et~al.},\ }\bibfield  {title} {\enquote {\bibinfo {title} {{Logical quantum processor based on reconfigurable atom arrays}},}\ }\href {\doibase 10.1038/s41586-023-06927-3} {\bibfield  {journal} {\bibinfo  {journal} {Nature}\ }\textbf {\bibinfo {volume} {626}},\ \bibinfo {pages} {58--65} (\bibinfo {year} {2024})},\ \Eprint {http://arxiv.org/abs/2312.03982} {arXiv:2312.03982 [quant-ph]} \BibitemShut {NoStop}%
\bibitem [{\citenamefont {Berthusen}\ \emph {et~al.}(2024)\citenamefont {Berthusen}, \citenamefont {Dreiling}, \citenamefont {Foltz}, \citenamefont {Gaebler}, \citenamefont {Gatterman}, \citenamefont {Gresh}, \citenamefont {Hewitt}, \citenamefont {Mills}, \citenamefont {Moses}, \citenamefont {Neyenhuis}, \citenamefont {Siegfried},\ and\ \citenamefont {Hayes}}]{TC4d_Quantinuum24}%
  \BibitemOpen
  \bibfield  {author} {\bibinfo {author} {\bibfnamefont {Noah}\ \bibnamefont {Berthusen}}, \bibinfo {author} {\bibfnamefont {Joan}\ \bibnamefont {Dreiling}}, \bibinfo {author} {\bibfnamefont {Cameron}\ \bibnamefont {Foltz}}, \bibinfo {author} {\bibfnamefont {John~P.}\ \bibnamefont {Gaebler}}, \bibinfo {author} {\bibfnamefont {Thomas~M.}\ \bibnamefont {Gatterman}}, \bibinfo {author} {\bibfnamefont {Dan}\ \bibnamefont {Gresh}}, \bibinfo {author} {\bibfnamefont {Nathan}\ \bibnamefont {Hewitt}}, \bibinfo {author} {\bibfnamefont {Michael}\ \bibnamefont {Mills}}, \bibinfo {author} {\bibfnamefont {Steven~A.}\ \bibnamefont {Moses}}, \bibinfo {author} {\bibfnamefont {Brian}\ \bibnamefont {Neyenhuis}}, \bibinfo {author} {\bibfnamefont {Peter}\ \bibnamefont {Siegfried}}, \ and\ \bibinfo {author} {\bibfnamefont {David}\ \bibnamefont {Hayes}},\ }\bibfield  {title} {\enquote {\bibinfo {title} {Experiments with the four-dimensional surface code on a quantum charge-coupled device quantum computer},}\ }\href {\doibase
  10.1103/PhysRevA.110.062413} {\bibfield  {journal} {\bibinfo  {journal} {Phys. Rev. A}\ }\textbf {\bibinfo {volume} {110}},\ \bibinfo {pages} {062413} (\bibinfo {year} {2024})}\BibitemShut {NoStop}%
\bibitem [{\citenamefont {Tillich}\ and\ \citenamefont {Zemor}(2014)}]{HGP}%
  \BibitemOpen
  \bibfield  {author} {\bibinfo {author} {\bibfnamefont {Jean-Pierre}\ \bibnamefont {Tillich}}\ and\ \bibinfo {author} {\bibfnamefont {Gilles}\ \bibnamefont {Zemor}},\ }\bibfield  {title} {\enquote {\bibinfo {title} {Quantum {LDPC} codes with positive rate and minimum distance proportional to the square root of the blocklength},}\ }\href {\doibase 10.1109/tit.2013.2292061} {\bibfield  {journal} {\bibinfo  {journal} {{IEEE} Transactions on Information Theory}\ }\textbf {\bibinfo {volume} {60}},\ \bibinfo {pages} {1193--1202} (\bibinfo {year} {2014})}\BibitemShut {NoStop}%
\bibitem [{\citenamefont {Hastings}\ \emph {et~al.}(2021)\citenamefont {Hastings}, \citenamefont {Haah},\ and\ \citenamefont {O'Donnell}}]{TP_codes}%
  \BibitemOpen
  \bibfield  {author} {\bibinfo {author} {\bibfnamefont {Matthew~B.}\ \bibnamefont {Hastings}}, \bibinfo {author} {\bibfnamefont {Jeongwan}\ \bibnamefont {Haah}}, \ and\ \bibinfo {author} {\bibfnamefont {Ryan}\ \bibnamefont {O'Donnell}},\ }\bibfield  {title} {\enquote {\bibinfo {title} {Fiber bundle codes: Breaking the $\sqrt{n}$ polylog($n$) barrier for quantum ldpc codes},}\ }in\ \href {\doibase 10.1145/3406325.3451005} {\emph {\bibinfo {booktitle} {Proceedings of the 53rd Annual ACM SIGACT Symposium on Theory of Computing}}},\ \bibinfo {series and number} {STOC 2021}\ (\bibinfo  {publisher} {Association for Computing Machinery},\ \bibinfo {address} {New York, NY, USA},\ \bibinfo {year} {2021})\ p.\ \bibinfo {pages} {1276–1288}\BibitemShut {NoStop}%
\bibitem [{\citenamefont {Breuckmann}\ and\ \citenamefont {Eberhardt}(2021)}]{BP_codes}%
  \BibitemOpen
  \bibfield  {author} {\bibinfo {author} {\bibfnamefont {Nikolas~P.}\ \bibnamefont {Breuckmann}}\ and\ \bibinfo {author} {\bibfnamefont {Jens~N.}\ \bibnamefont {Eberhardt}},\ }\bibfield  {title} {\enquote {\bibinfo {title} {Balanced product quantum codes},}\ }\href {\doibase 10.1109/tit.2021.3097347} {\bibfield  {journal} {\bibinfo  {journal} {{IEEE} Transactions on Information Theory}\ }\textbf {\bibinfo {volume} {67}},\ \bibinfo {pages} {6653--6674} (\bibinfo {year} {2021})}\BibitemShut {NoStop}%
\bibitem [{\citenamefont {Panteleev}\ and\ \citenamefont {Kalachev}(2022{\natexlab{a}})}]{LP_codes}%
  \BibitemOpen
  \bibfield  {author} {\bibinfo {author} {\bibfnamefont {Pavel}\ \bibnamefont {Panteleev}}\ and\ \bibinfo {author} {\bibfnamefont {Gleb}\ \bibnamefont {Kalachev}},\ }\bibfield  {title} {\enquote {\bibinfo {title} {Quantum ldpc codes with almost linear minimum distance},}\ }\href {\doibase 10.1109/TIT.2021.3119384} {\bibfield  {journal} {\bibinfo  {journal} {IEEE Transactions on Information Theory}\ }\textbf {\bibinfo {volume} {68}},\ \bibinfo {pages} {213--229} (\bibinfo {year} {2022}{\natexlab{a}})}\BibitemShut {NoStop}%
\bibitem [{\citenamefont {Panteleev}\ and\ \citenamefont {Kalachev}(2022{\natexlab{b}})}]{Panteleev_2022}%
  \BibitemOpen
  \bibfield  {author} {\bibinfo {author} {\bibfnamefont {Pavel}\ \bibnamefont {Panteleev}}\ and\ \bibinfo {author} {\bibfnamefont {Gleb}\ \bibnamefont {Kalachev}},\ }\bibfield  {title} {\enquote {\bibinfo {title} {Asymptotically good quantum and locally testable classical ldpc codes},}\ }in\ \href {\doibase 10.1145/3519935.3520017} {\emph {\bibinfo {booktitle} {Proceedings of the 54th Annual ACM SIGACT Symposium on Theory of Computing}}},\ \bibinfo {series and number} {STOC 2022}\ (\bibinfo  {publisher} {Association for Computing Machinery},\ \bibinfo {address} {New York, NY, USA},\ \bibinfo {year} {2022})\ p.\ \bibinfo {pages} {375–388}\BibitemShut {NoStop}%
\bibitem [{\citenamefont {Leverrier}\ and\ \citenamefont {Zémor}(2022)}]{qTanner_codes}%
  \BibitemOpen
  \bibfield  {author} {\bibinfo {author} {\bibfnamefont {Anthony}\ \bibnamefont {Leverrier}}\ and\ \bibinfo {author} {\bibfnamefont {Gilles}\ \bibnamefont {Zémor}},\ }\bibfield  {title} {\enquote {\bibinfo {title} {Quantum tanner codes},}\ }in\ \href {\doibase 10.1109/FOCS54457.2022.00117} {\emph {\bibinfo {booktitle} {2022 IEEE 63rd Annual Symposium on Foundations of Computer Science (FOCS)}}}\ (\bibinfo {year} {2022})\ pp.\ \bibinfo {pages} {872--883}\BibitemShut {NoStop}%
\bibitem [{\citenamefont {Dinur}\ \emph {et~al.}(2022{\natexlab{a}})\citenamefont {Dinur}, \citenamefont {Hsieh}, \citenamefont {Lin},\ and\ \citenamefont {Vidick}}]{dinur2022good}%
  \BibitemOpen
  \bibfield  {author} {\bibinfo {author} {\bibfnamefont {Irit}\ \bibnamefont {Dinur}}, \bibinfo {author} {\bibfnamefont {Min-Hsiu}\ \bibnamefont {Hsieh}}, \bibinfo {author} {\bibfnamefont {Ting-Chun}\ \bibnamefont {Lin}}, \ and\ \bibinfo {author} {\bibfnamefont {Thomas}\ \bibnamefont {Vidick}},\ }\href@noop {} {\enquote {\bibinfo {title} {Good quantum ldpc codes with linear time decoders},}\ } (\bibinfo {year} {2022}{\natexlab{a}}),\ \Eprint {http://arxiv.org/abs/2206.07750} {arXiv:2206.07750 [quant-ph]} \BibitemShut {NoStop}%
\bibitem [{\citenamefont {Hong}\ \emph {et~al.}(2025)\citenamefont {Hong}, \citenamefont {Guo},\ and\ \citenamefont {Lucas}}]{Hong:2024vlr}%
  \BibitemOpen
  \bibfield  {author} {\bibinfo {author} {\bibfnamefont {Yifan}\ \bibnamefont {Hong}}, \bibinfo {author} {\bibfnamefont {Jinkang}\ \bibnamefont {Guo}}, \ and\ \bibinfo {author} {\bibfnamefont {Andrew}\ \bibnamefont {Lucas}},\ }\bibfield  {title} {\enquote {\bibinfo {title} {{Quantum memory at nonzero temperature in a thermodynamically trivial system}},}\ }\href {\doibase 10.1038/s41467-024-55570-7} {\bibfield  {journal} {\bibinfo  {journal} {Nature Commun.}\ }\textbf {\bibinfo {volume} {16}},\ \bibinfo {pages} {316} (\bibinfo {year} {2025})},\ \Eprint {http://arxiv.org/abs/2403.10599} {arXiv:2403.10599 [quant-ph]} \BibitemShut {NoStop}%
\bibitem [{\citenamefont {Rakovszky}\ and\ \citenamefont {Khemani}(2023)}]{Rakovszky:2023fng}%
  \BibitemOpen
  \bibfield  {author} {\bibinfo {author} {\bibfnamefont {Tibor}\ \bibnamefont {Rakovszky}}\ and\ \bibinfo {author} {\bibfnamefont {Vedika}\ \bibnamefont {Khemani}},\ }\bibfield  {title} {\enquote {\bibinfo {title} {{The Physics of (good) LDPC Codes I. Gauging and dualities}},}\ }\href@noop {} {\  (\bibinfo {year} {2023})},\ \Eprint {http://arxiv.org/abs/2310.16032} {arXiv:2310.16032 [quant-ph]} \BibitemShut {NoStop}%
\bibitem [{\citenamefont {Rakovszky}\ and\ \citenamefont {Khemani}(2024)}]{Rakovszky:2024iks}%
  \BibitemOpen
  \bibfield  {author} {\bibinfo {author} {\bibfnamefont {Tibor}\ \bibnamefont {Rakovszky}}\ and\ \bibinfo {author} {\bibfnamefont {Vedika}\ \bibnamefont {Khemani}},\ }\bibfield  {title} {\enquote {\bibinfo {title} {{The Physics of (good) LDPC Codes II. Product constructions}},}\ }\href@noop {} {\  (\bibinfo {year} {2024})},\ \Eprint {http://arxiv.org/abs/2402.16831} {arXiv:2402.16831 [quant-ph]} \BibitemShut {NoStop}%
\bibitem [{\citenamefont {Hong}\ \emph {et~al.}(2024{\natexlab{a}})\citenamefont {Hong}, \citenamefont {Marinelli}, \citenamefont {Kaufman},\ and\ \citenamefont {Lucas}}]{LRESC_Yifan}%
  \BibitemOpen
  \bibfield  {author} {\bibinfo {author} {\bibfnamefont {Yifan}\ \bibnamefont {Hong}}, \bibinfo {author} {\bibfnamefont {Matteo}\ \bibnamefont {Marinelli}}, \bibinfo {author} {\bibfnamefont {Adam~M.}\ \bibnamefont {Kaufman}}, \ and\ \bibinfo {author} {\bibfnamefont {Andrew}\ \bibnamefont {Lucas}},\ }\bibfield  {title} {\enquote {\bibinfo {title} {Long-range-enhanced surface codes},}\ }\href {\doibase 10.1103/PhysRevA.110.022607} {\bibfield  {journal} {\bibinfo  {journal} {Phys. Rev. A}\ }\textbf {\bibinfo {volume} {110}},\ \bibinfo {pages} {022607} (\bibinfo {year} {2024}{\natexlab{a}})}\BibitemShut {NoStop}%
\bibitem [{\citenamefont {Hong}\ \emph {et~al.}(2024{\natexlab{b}})\citenamefont {Hong}, \citenamefont {Durso-Sabina}, \citenamefont {Hayes},\ and\ \citenamefont {Lucas}}]{Hong:2024fsg}%
  \BibitemOpen
  \bibfield  {author} {\bibinfo {author} {\bibfnamefont {Yifan}\ \bibnamefont {Hong}}, \bibinfo {author} {\bibfnamefont {Elijah}\ \bibnamefont {Durso-Sabina}}, \bibinfo {author} {\bibfnamefont {David}\ \bibnamefont {Hayes}}, \ and\ \bibinfo {author} {\bibfnamefont {Andrew}\ \bibnamefont {Lucas}},\ }\bibfield  {title} {\enquote {\bibinfo {title} {Entangling four logical qubits beyond break-even in a nonlocal code},}\ }\href {\doibase 10.1103/PhysRevLett.133.180601} {\bibfield  {journal} {\bibinfo  {journal} {Phys. Rev. Lett.}\ }\textbf {\bibinfo {volume} {133}},\ \bibinfo {pages} {180601} (\bibinfo {year} {2024}{\natexlab{b}})}\BibitemShut {NoStop}%
\bibitem [{\citenamefont {Xu}\ \emph {et~al.}(2024{\natexlab{a}})\citenamefont {Xu}, \citenamefont {Ataides}, \citenamefont {Pattison}, \citenamefont {Raveendran}, \citenamefont {Bluvstein}, \citenamefont {Wurtz}, \citenamefont {Vasic}, \citenamefont {Lukin}, \citenamefont {Jiang},\ and\ \citenamefont {Zhou}}]{HGP_Rydberg24}%
  \BibitemOpen
  \bibfield  {author} {\bibinfo {author} {\bibfnamefont {Qian}\ \bibnamefont {Xu}}, \bibinfo {author} {\bibfnamefont {J.~Pablo~Bonilla}\ \bibnamefont {Ataides}}, \bibinfo {author} {\bibfnamefont {Christopher~A.}\ \bibnamefont {Pattison}}, \bibinfo {author} {\bibfnamefont {Nithin}\ \bibnamefont {Raveendran}}, \bibinfo {author} {\bibfnamefont {Dolev}\ \bibnamefont {Bluvstein}}, \bibinfo {author} {\bibfnamefont {Jonathan}\ \bibnamefont {Wurtz}}, \bibinfo {author} {\bibfnamefont {Bane}\ \bibnamefont {Vasic}}, \bibinfo {author} {\bibfnamefont {Mikhail~D.}\ \bibnamefont {Lukin}}, \bibinfo {author} {\bibfnamefont {Liang}\ \bibnamefont {Jiang}}, \ and\ \bibinfo {author} {\bibfnamefont {Hengyun}\ \bibnamefont {Zhou}},\ }\bibfield  {title} {\enquote {\bibinfo {title} {{Constant-overhead fault-tolerant quantum computation with reconfigurable atom arrays}},}\ }\href {\doibase 10.1038/s41567-024-02479-z} {\bibfield  {journal} {\bibinfo  {journal} {Nature Phys.}\ }\textbf {\bibinfo {volume} {20}},\ \bibinfo {pages}
  {1084--1090} (\bibinfo {year} {2024}{\natexlab{a}})},\ \Eprint {http://arxiv.org/abs/2308.08648} {arXiv:2308.08648 [quant-ph]} \BibitemShut {NoStop}%
\bibitem [{\citenamefont {Xu}\ \emph {et~al.}(2024{\natexlab{b}})\citenamefont {Xu}, \citenamefont {Zhou}, \citenamefont {Zheng}, \citenamefont {Bluvstein}, \citenamefont {Ataides}, \citenamefont {Lukin},\ and\ \citenamefont {Jiang}}]{HGP_gate_Ryd24}%
  \BibitemOpen
  \bibfield  {author} {\bibinfo {author} {\bibfnamefont {Qian}\ \bibnamefont {Xu}}, \bibinfo {author} {\bibfnamefont {Hengyun}\ \bibnamefont {Zhou}}, \bibinfo {author} {\bibfnamefont {Guo}\ \bibnamefont {Zheng}}, \bibinfo {author} {\bibfnamefont {Dolev}\ \bibnamefont {Bluvstein}}, \bibinfo {author} {\bibfnamefont {J.~Pablo~Bonilla}\ \bibnamefont {Ataides}}, \bibinfo {author} {\bibfnamefont {Mikhail~D.}\ \bibnamefont {Lukin}}, \ and\ \bibinfo {author} {\bibfnamefont {Liang}\ \bibnamefont {Jiang}},\ }\bibfield  {title} {\enquote {\bibinfo {title} {{Fast and Parallelizable Logical Computation with Homological Product Codes}},}\ }\href@noop {} {\  (\bibinfo {year} {2024}{\natexlab{b}})},\ \Eprint {http://arxiv.org/abs/2407.18490} {arXiv:2407.18490 [quant-ph]} \BibitemShut {NoStop}%
\bibitem [{\citenamefont {Anshu}\ \emph {et~al.}(2023)\citenamefont {Anshu}, \citenamefont {Breuckmann},\ and\ \citenamefont {Nirkhe}}]{Anshu_2023}%
  \BibitemOpen
  \bibfield  {author} {\bibinfo {author} {\bibfnamefont {Anurag}\ \bibnamefont {Anshu}}, \bibinfo {author} {\bibfnamefont {Nikolas~P.}\ \bibnamefont {Breuckmann}}, \ and\ \bibinfo {author} {\bibfnamefont {Chinmay}\ \bibnamefont {Nirkhe}},\ }\bibfield  {title} {\enquote {\bibinfo {title} {Nlts hamiltonians from good quantum codes},}\ }in\ \href {\doibase 10.1145/3564246.3585114} {\emph {\bibinfo {booktitle} {Proceedings of the 55th Annual ACM Symposium on Theory of Computing}}},\ \bibinfo {series and number} {STOC ’23}\ (\bibinfo  {publisher} {ACM},\ \bibinfo {year} {2023})\BibitemShut {NoStop}%
\bibitem [{\citenamefont {Freedman}\ and\ \citenamefont {Hastings}(2013)}]{freedman2013}%
  \BibitemOpen
  \bibfield  {author} {\bibinfo {author} {\bibfnamefont {M.~H.}\ \bibnamefont {Freedman}}\ and\ \bibinfo {author} {\bibfnamefont {M.~B.}\ \bibnamefont {Hastings}},\ }\href@noop {} {\enquote {\bibinfo {title} {Quantum systems on non-$k$-hyperfinite complexes: A generalization of classical statistical mechanics on expander graphs},}\ } (\bibinfo {year} {2013}),\ \Eprint {http://arxiv.org/abs/1301.1363} {arXiv:1301.1363 [quant-ph]} \BibitemShut {NoStop}%
\bibitem [{\citenamefont {Lavasani}\ \emph {et~al.}(2024)\citenamefont {Lavasani}, \citenamefont {Gullans}, \citenamefont {Albert},\ and\ \citenamefont {Barkeshli}}]{stability_maryland}%
  \BibitemOpen
  \bibfield  {author} {\bibinfo {author} {\bibfnamefont {Ali}\ \bibnamefont {Lavasani}}, \bibinfo {author} {\bibfnamefont {Michael~J.}\ \bibnamefont {Gullans}}, \bibinfo {author} {\bibfnamefont {Victor~V.}\ \bibnamefont {Albert}}, \ and\ \bibinfo {author} {\bibfnamefont {Maissam}\ \bibnamefont {Barkeshli}},\ }\bibfield  {title} {\enquote {\bibinfo {title} {{On stability of k-local quantum phases of matter}},}\ }\href@noop {} {\  (\bibinfo {year} {2024})},\ \Eprint {http://arxiv.org/abs/2405.19412} {arXiv:2405.19412 [quant-ph]} \BibitemShut {NoStop}%
\bibitem [{\citenamefont {Li}\ \emph {et~al.}(2025)\citenamefont {Li}, \citenamefont {O'Dea},\ and\ \citenamefont {Khemani}}]{stability_Khemani24}%
  \BibitemOpen
  \bibfield  {author} {\bibinfo {author} {\bibfnamefont {Yaodong}\ \bibnamefont {Li}}, \bibinfo {author} {\bibfnamefont {Nicholas}\ \bibnamefont {O'Dea}}, \ and\ \bibinfo {author} {\bibfnamefont {Vedika}\ \bibnamefont {Khemani}},\ }\bibfield  {title} {\enquote {\bibinfo {title} {Perturbative stability and error-correction thresholds of quantum codes},}\ }\href {\doibase 10.1103/PRXQuantum.6.010327} {\bibfield  {journal} {\bibinfo  {journal} {PRX Quantum}\ }\textbf {\bibinfo {volume} {6}},\ \bibinfo {pages} {010327} (\bibinfo {year} {2025})}\BibitemShut {NoStop}%
\bibitem [{\citenamefont {Yin}\ \emph {et~al.}(2025)\citenamefont {Yin}, \citenamefont {Surace},\ and\ \citenamefont {Lucas}}]{our_metastable}%
  \BibitemOpen
  \bibfield  {author} {\bibinfo {author} {\bibfnamefont {Chao}\ \bibnamefont {Yin}}, \bibinfo {author} {\bibfnamefont {Federica~M.}\ \bibnamefont {Surace}}, \ and\ \bibinfo {author} {\bibfnamefont {Andrew}\ \bibnamefont {Lucas}},\ }\bibfield  {title} {\enquote {\bibinfo {title} {Theory of metastable states in many-body quantum systems},}\ }\href {\doibase 10.1103/PhysRevX.15.011064} {\bibfield  {journal} {\bibinfo  {journal} {Phys. Rev. X}\ }\textbf {\bibinfo {volume} {15}},\ \bibinfo {pages} {011064} (\bibinfo {year} {2025})}\BibitemShut {NoStop}%
\bibitem [{\citenamefont {Aharonov}\ and\ \citenamefont {Touati}(2018)}]{code_circuit_lowerbound18}%
  \BibitemOpen
  \bibfield  {author} {\bibinfo {author} {\bibfnamefont {Dorit}\ \bibnamefont {Aharonov}}\ and\ \bibinfo {author} {\bibfnamefont {Yonathan}\ \bibnamefont {Touati}},\ }\href@noop {} {\enquote {\bibinfo {title} {Quantum circuit depth lower bounds for homological codes},}\ } (\bibinfo {year} {2018}),\ \Eprint {http://arxiv.org/abs/1810.03912} {arXiv:1810.03912 [quant-ph]} \BibitemShut {NoStop}%
\bibitem [{\citenamefont {Bravyi}\ and\ \citenamefont {Hastings}(2011)}]{bravyi2011short}%
  \BibitemOpen
  \bibfield  {author} {\bibinfo {author} {\bibfnamefont {Sergey}\ \bibnamefont {Bravyi}}\ and\ \bibinfo {author} {\bibfnamefont {Matthew~B}\ \bibnamefont {Hastings}},\ }\bibfield  {title} {\enquote {\bibinfo {title} {A short proof of stability of topological order under local perturbations},}\ }\href {\doibase https://doi.org/10.1007/s00220-011-1346-2} {\bibfield  {journal} {\bibinfo  {journal} {Communications in mathematical physics}\ }\textbf {\bibinfo {volume} {307}},\ \bibinfo {pages} {609--627} (\bibinfo {year} {2011})}\BibitemShut {NoStop}%
\bibitem [{\citenamefont {Hastings}\ and\ \citenamefont {Wen}(2005)}]{quasiadiabatic05}%
  \BibitemOpen
  \bibfield  {author} {\bibinfo {author} {\bibfnamefont {M.~B.}\ \bibnamefont {Hastings}}\ and\ \bibinfo {author} {\bibfnamefont {Xiao-Gang}\ \bibnamefont {Wen}},\ }\bibfield  {title} {\enquote {\bibinfo {title} {Quasiadiabatic continuation of quantum states: The stability of topological ground-state degeneracy and emergent gauge invariance},}\ }\href {\doibase 10.1103/PhysRevB.72.045141} {\bibfield  {journal} {\bibinfo  {journal} {Phys. Rev. B}\ }\textbf {\bibinfo {volume} {72}},\ \bibinfo {pages} {045141} (\bibinfo {year} {2005})}\BibitemShut {NoStop}%
\bibitem [{\citenamefont {Bachmann}\ \emph {et~al.}(2012)\citenamefont {Bachmann}, \citenamefont {Michalakis}, \citenamefont {Nachtergaele},\ and\ \citenamefont {Sims}}]{bachmann2012automorphic}%
  \BibitemOpen
  \bibfield  {author} {\bibinfo {author} {\bibfnamefont {Sven}\ \bibnamefont {Bachmann}}, \bibinfo {author} {\bibfnamefont {Spyridon}\ \bibnamefont {Michalakis}}, \bibinfo {author} {\bibfnamefont {Bruno}\ \bibnamefont {Nachtergaele}}, \ and\ \bibinfo {author} {\bibfnamefont {Robert}\ \bibnamefont {Sims}},\ }\bibfield  {title} {\enquote {\bibinfo {title} {Automorphic equivalence within gapped phases of quantum lattice systems},}\ }\href {\doibase https://doi.org/10.1007/s00220-011-1380-0} {\bibfield  {journal} {\bibinfo  {journal} {Communications in Mathematical Physics}\ }\textbf {\bibinfo {volume} {309}},\ \bibinfo {pages} {835--871} (\bibinfo {year} {2012})}\BibitemShut {NoStop}%
\bibitem [{\citenamefont {McDonough}\ \emph {et~al.}(2025)\citenamefont {McDonough}, \citenamefont {Yin}, \citenamefont {Lucas},\ and\ \citenamefont {Zhang}}]{volumeLaw_disorder24}%
  \BibitemOpen
  \bibfield  {author} {\bibinfo {author} {\bibfnamefont {Ben~T.}\ \bibnamefont {McDonough}}, \bibinfo {author} {\bibfnamefont {Chao}\ \bibnamefont {Yin}}, \bibinfo {author} {\bibfnamefont {Andrew}\ \bibnamefont {Lucas}}, \ and\ \bibinfo {author} {\bibfnamefont {Carolyn}\ \bibnamefont {Zhang}},\ }\bibfield  {title} {\enquote {\bibinfo {title} {{Lieb-Robinson bounds with exponential-in-volume tails}},}\ }\href@noop {} {\  (\bibinfo {year} {2025})},\ \Eprint {http://arxiv.org/abs/2502.02652} {arXiv:2502.02652 [quant-ph]} \BibitemShut {NoStop}%
\bibitem [{\citenamefont {Thomas}(1989)}]{Thomas1989}%
  \BibitemOpen
  \bibfield  {author} {\bibinfo {author} {\bibfnamefont {Lawrence~E.}\ \bibnamefont {Thomas}},\ }\bibfield  {title} {\enquote {\bibinfo {title} {Bound on the mass gap for finite volume stochastic ising models at low temperature},}\ }\href {\doibase 10.1007/BF02124328} {\bibfield  {journal} {\bibinfo  {journal} {Communications in Mathematical Physics}\ }\textbf {\bibinfo {volume} {126}},\ \bibinfo {pages} {1--11} (\bibinfo {year} {1989})}\BibitemShut {NoStop}%
\bibitem [{\citenamefont {Lapa}\ and\ \citenamefont {Levin}(2023)}]{longrange_stab23}%
  \BibitemOpen
  \bibfield  {author} {\bibinfo {author} {\bibfnamefont {Matthew~F}\ \bibnamefont {Lapa}}\ and\ \bibinfo {author} {\bibfnamefont {Michael}\ \bibnamefont {Levin}},\ }\bibfield  {title} {\enquote {\bibinfo {title} {Stability of ground state degeneracy to long-range interactions},}\ }\href {\doibase 10.1088/1742-5468/acaf84} {\bibfield  {journal} {\bibinfo  {journal} {Journal of Statistical Mechanics: Theory and Experiment}\ }\textbf {\bibinfo {volume} {2023}},\ \bibinfo {pages} {013102} (\bibinfo {year} {2023})}\BibitemShut {NoStop}%
\bibitem [{\citenamefont {Sachdev}(2011)}]{sachdev_book}%
  \BibitemOpen
  \bibfield  {author} {\bibinfo {author} {\bibfnamefont {Subir}\ \bibnamefont {Sachdev}},\ }\href {\doibase 10.1017/CBO9780511973765} {\emph {\bibinfo {title} {Quantum Phase Transitions}}},\ \bibinfo {edition} {2nd}\ ed.\ (\bibinfo  {publisher} {Cambridge University Press},\ \bibinfo {year} {2011})\BibitemShut {NoStop}%
\bibitem [{\citenamefont {Bravyi}\ \emph {et~al.}(2010{\natexlab{b}})\citenamefont {Bravyi}, \citenamefont {Poulin},\ and\ \citenamefont {Terhal}}]{BPT_bound10}%
  \BibitemOpen
  \bibfield  {author} {\bibinfo {author} {\bibfnamefont {Sergey}\ \bibnamefont {Bravyi}}, \bibinfo {author} {\bibfnamefont {David}\ \bibnamefont {Poulin}}, \ and\ \bibinfo {author} {\bibfnamefont {Barbara}\ \bibnamefont {Terhal}},\ }\bibfield  {title} {\enquote {\bibinfo {title} {Tradeoffs for reliable quantum information storage in 2d systems},}\ }\href {\doibase 10.1103/PhysRevLett.104.050503} {\bibfield  {journal} {\bibinfo  {journal} {Phys. Rev. Lett.}\ }\textbf {\bibinfo {volume} {104}},\ \bibinfo {pages} {050503} (\bibinfo {year} {2010}{\natexlab{b}})}\BibitemShut {NoStop}%
\bibitem [{\citenamefont {Dumer}\ \emph {et~al.}(2015)\citenamefont {Dumer}, \citenamefont {Kovalev},\ and\ \citenamefont {Pryadko}}]{log_distance_threshold15}%
  \BibitemOpen
  \bibfield  {author} {\bibinfo {author} {\bibfnamefont {Ilya}\ \bibnamefont {Dumer}}, \bibinfo {author} {\bibfnamefont {Alexey~A.}\ \bibnamefont {Kovalev}}, \ and\ \bibinfo {author} {\bibfnamefont {Leonid~P.}\ \bibnamefont {Pryadko}},\ }\bibfield  {title} {\enquote {\bibinfo {title} {Thresholds for correcting errors, erasures, and faulty syndrome measurements in degenerate quantum codes},}\ }\href {\doibase 10.1103/PhysRevLett.115.050502} {\bibfield  {journal} {\bibinfo  {journal} {Phys. Rev. Lett.}\ }\textbf {\bibinfo {volume} {115}},\ \bibinfo {pages} {050502} (\bibinfo {year} {2015})}\BibitemShut {NoStop}%
\bibitem [{\citenamefont {Leverrier}\ \emph {et~al.}(2015)\citenamefont {Leverrier}, \citenamefont {Tillich},\ and\ \citenamefont {Zémor}}]{qExpander15}%
  \BibitemOpen
  \bibfield  {author} {\bibinfo {author} {\bibfnamefont {Anthony}\ \bibnamefont {Leverrier}}, \bibinfo {author} {\bibfnamefont {Jean-Pierre}\ \bibnamefont {Tillich}}, \ and\ \bibinfo {author} {\bibfnamefont {Gilles}\ \bibnamefont {Zémor}},\ }\bibfield  {title} {\enquote {\bibinfo {title} {Quantum expander codes},}\ }in\ \href {\doibase 10.1109/focs.2015.55} {\emph {\bibinfo {booktitle} {2015 IEEE 56th Annual Symposium on Foundations of Computer Science}}}\ (\bibinfo  {publisher} {IEEE},\ \bibinfo {year} {2015})\BibitemShut {NoStop}%
\bibitem [{\citenamefont {Chen}\ \emph {et~al.}(2023)\citenamefont {Chen}, \citenamefont {Lucas},\ and\ \citenamefont {Yin}}]{ourreview}%
  \BibitemOpen
  \bibfield  {author} {\bibinfo {author} {\bibfnamefont {Chi-Fang~(Anthony)}\ \bibnamefont {Chen}}, \bibinfo {author} {\bibfnamefont {Andrew}\ \bibnamefont {Lucas}}, \ and\ \bibinfo {author} {\bibfnamefont {Chao}\ \bibnamefont {Yin}},\ }\bibfield  {title} {\enquote {\bibinfo {title} {Speed limits and locality in many-body quantum dynamics},}\ }\href {\doibase 10.1088/1361-6633/acfaae} {\bibfield  {journal} {\bibinfo  {journal} {Reports on Progress in Physics}\ }\textbf {\bibinfo {volume} {86}},\ \bibinfo {pages} {116001} (\bibinfo {year} {2023})}\BibitemShut {NoStop}%
\bibitem [{\citenamefont {Abanin}\ \emph {et~al.}(2017)\citenamefont {Abanin}, \citenamefont {De~Roeck}, \citenamefont {Ho},\ and\ \citenamefont {Huveneers}}]{abanin2017rigorous}%
  \BibitemOpen
  \bibfield  {author} {\bibinfo {author} {\bibfnamefont {Dmitry}\ \bibnamefont {Abanin}}, \bibinfo {author} {\bibfnamefont {Wojciech}\ \bibnamefont {De~Roeck}}, \bibinfo {author} {\bibfnamefont {Wen~Wei}\ \bibnamefont {Ho}}, \ and\ \bibinfo {author} {\bibfnamefont {Fran{\c{c}}ois}\ \bibnamefont {Huveneers}},\ }\bibfield  {title} {\enquote {\bibinfo {title} {A rigorous theory of many-body prethermalization for periodically driven and closed quantum systems},}\ }\href {\doibase https://doi.org/10.1007/s00220-017-2930-x} {\bibfield  {journal} {\bibinfo  {journal} {Communications in Mathematical Physics}\ }\textbf {\bibinfo {volume} {354}},\ \bibinfo {pages} {809--827} (\bibinfo {year} {2017})}\BibitemShut {NoStop}%
\bibitem [{\citenamefont {Coleman}(1977)}]{coleman}%
  \BibitemOpen
  \bibfield  {author} {\bibinfo {author} {\bibfnamefont {Sidney}\ \bibnamefont {Coleman}},\ }\bibfield  {title} {\enquote {\bibinfo {title} {Fate of the false vacuum: Semiclassical theory},}\ }\href {\doibase 10.1103/PhysRevD.15.2929} {\bibfield  {journal} {\bibinfo  {journal} {Phys. Rev. D}\ }\textbf {\bibinfo {volume} {15}},\ \bibinfo {pages} {2929--2936} (\bibinfo {year} {1977})}\BibitemShut {NoStop}%
\bibitem [{\citenamefont {Yin}\ and\ \citenamefont {Lucas}(2023)}]{our_preth}%
  \BibitemOpen
  \bibfield  {author} {\bibinfo {author} {\bibfnamefont {Chao}\ \bibnamefont {Yin}}\ and\ \bibinfo {author} {\bibfnamefont {Andrew}\ \bibnamefont {Lucas}},\ }\bibfield  {title} {\enquote {\bibinfo {title} {{Prethermalization and the Local Robustness of Gapped Systems}},}\ }\href {\doibase 10.1103/PhysRevLett.131.050402} {\bibfield  {journal} {\bibinfo  {journal} {Phys. Rev. Lett.}\ }\textbf {\bibinfo {volume} {131}},\ \bibinfo {pages} {050402} (\bibinfo {year} {2023})},\ \Eprint {http://arxiv.org/abs/2209.11242} {arXiv:2209.11242 [cond-mat.str-el]} \BibitemShut {NoStop}%
\bibitem [{\citenamefont {Yin}\ \emph {et~al.}(2024)\citenamefont {Yin}, \citenamefont {Nandkishore},\ and\ \citenamefont {Lucas}}]{MBL_LDPC24}%
  \BibitemOpen
  \bibfield  {author} {\bibinfo {author} {\bibfnamefont {Chao}\ \bibnamefont {Yin}}, \bibinfo {author} {\bibfnamefont {Rahul}\ \bibnamefont {Nandkishore}}, \ and\ \bibinfo {author} {\bibfnamefont {Andrew}\ \bibnamefont {Lucas}},\ }\bibfield  {title} {\enquote {\bibinfo {title} {Eigenstate localization in a many-body quantum system},}\ }\href {\doibase 10.1103/PhysRevLett.133.137101} {\bibfield  {journal} {\bibinfo  {journal} {Phys. Rev. Lett.}\ }\textbf {\bibinfo {volume} {133}},\ \bibinfo {pages} {137101} (\bibinfo {year} {2024})}\BibitemShut {NoStop}%
\bibitem [{\citenamefont {Nandkishore}\ and\ \citenamefont {Huse}(2015)}]{MBLrev_Rahul}%
  \BibitemOpen
  \bibfield  {author} {\bibinfo {author} {\bibfnamefont {Rahul}\ \bibnamefont {Nandkishore}}\ and\ \bibinfo {author} {\bibfnamefont {David~A.}\ \bibnamefont {Huse}},\ }\bibfield  {title} {\enquote {\bibinfo {title} {Many-body localization and thermalization in quantum statistical mechanics},}\ }\href {\doibase 10.1146/annurev-conmatphys-031214-014726} {\bibfield  {journal} {\bibinfo  {journal} {Annual Review of Condensed Matter Physics}\ }\textbf {\bibinfo {volume} {6}},\ \bibinfo {pages} {15--38} (\bibinfo {year} {2015})}\BibitemShut {NoStop}%
\bibitem [{\citenamefont {Dani{\"e}ls}\ and\ \citenamefont {Van~Enter}(1980)}]{Ising_unstab80}%
  \BibitemOpen
  \bibfield  {author} {\bibinfo {author} {\bibfnamefont {HAM}\ \bibnamefont {Dani{\"e}ls}}\ and\ \bibinfo {author} {\bibfnamefont {ACD}\ \bibnamefont {Van~Enter}},\ }\bibfield  {title} {\enquote {\bibinfo {title} {Differentiability properties of the pressure in lattice systems},}\ }\href {\doibase https://doi.org/10.1007/BF01230087} {\bibfield  {journal} {\bibinfo  {journal} {Communications in Mathematical Physics}\ }\textbf {\bibinfo {volume} {71}},\ \bibinfo {pages} {65--76} (\bibinfo {year} {1980})}\BibitemShut {NoStop}%
\bibitem [{\citenamefont {van Enter}(1981)}]{Ising_unstab81}%
  \BibitemOpen
  \bibfield  {author} {\bibinfo {author} {\bibfnamefont {ACD}\ \bibnamefont {van Enter}},\ }\bibfield  {title} {\enquote {\bibinfo {title} {A note on the stability of phase diagrams in lattice systems},}\ }\href {\doibase https://doi.org/10.1007/BF01208283} {\bibfield  {journal} {\bibinfo  {journal} {Communications in Mathematical Physics}\ }\textbf {\bibinfo {volume} {79}},\ \bibinfo {pages} {25--32} (\bibinfo {year} {1981})}\BibitemShut {NoStop}%
\bibitem [{\citenamefont {Montanari}\ and\ \citenamefont {Semerjian}(2006)}]{Montanari_2006}%
  \BibitemOpen
  \bibfield  {author} {\bibinfo {author} {\bibfnamefont {Andrea}\ \bibnamefont {Montanari}}\ and\ \bibinfo {author} {\bibfnamefont {Guilhem}\ \bibnamefont {Semerjian}},\ }\bibfield  {title} {\enquote {\bibinfo {title} {On the dynamics of the glass transition on bethe lattices},}\ }\href {\doibase 10.1007/s10955-006-9103-1} {\bibfield  {journal} {\bibinfo  {journal} {Journal of Statistical Physics}\ }\textbf {\bibinfo {volume} {124}},\ \bibinfo {pages} {103–189} (\bibinfo {year} {2006})}\BibitemShut {NoStop}%
\bibitem [{\citenamefont {Roeck}\ \emph {et~al.}(2024)\citenamefont {Roeck}, \citenamefont {Khemani}, \citenamefont {Li}, \citenamefont {O'Dea},\ and\ \citenamefont {Rakovszky}}]{vedikanew}%
  \BibitemOpen
  \bibfield  {author} {\bibinfo {author} {\bibfnamefont {Wojciech~De}\ \bibnamefont {Roeck}}, \bibinfo {author} {\bibfnamefont {Vedika}\ \bibnamefont {Khemani}}, \bibinfo {author} {\bibfnamefont {Yaodong}\ \bibnamefont {Li}}, \bibinfo {author} {\bibfnamefont {Nicholas}\ \bibnamefont {O'Dea}}, \ and\ \bibinfo {author} {\bibfnamefont {Tibor}\ \bibnamefont {Rakovszky}},\ }\href {https://arxiv.org/abs/2411.02384} {\enquote {\bibinfo {title} {Ldpc stabilizer codes as gapped quantum phases: stability under graph-local perturbations},}\ } (\bibinfo {year} {2024}),\ \Eprint {http://arxiv.org/abs/2411.02384} {arXiv:2411.02384 [quant-ph]} \BibitemShut {NoStop}%
\bibitem [{\citenamefont {Dinur}\ \emph {et~al.}(2022{\natexlab{b}})\citenamefont {Dinur}, \citenamefont {Evra}, \citenamefont {Livne}, \citenamefont {Lubotzky},\ and\ \citenamefont {Mozes}}]{dinur2022LTC}%
  \BibitemOpen
  \bibfield  {author} {\bibinfo {author} {\bibfnamefont {Irit}\ \bibnamefont {Dinur}}, \bibinfo {author} {\bibfnamefont {Shai}\ \bibnamefont {Evra}}, \bibinfo {author} {\bibfnamefont {Ron}\ \bibnamefont {Livne}}, \bibinfo {author} {\bibfnamefont {Alexander}\ \bibnamefont {Lubotzky}}, \ and\ \bibinfo {author} {\bibfnamefont {Shahar}\ \bibnamefont {Mozes}},\ }\href@noop {} {\enquote {\bibinfo {title} {Good locally testable codes},}\ } (\bibinfo {year} {2022}{\natexlab{b}}),\ \Eprint {http://arxiv.org/abs/2207.11929} {arXiv:2207.11929 [cs.IT]} \BibitemShut {NoStop}%
\bibitem [{\citenamefont {Hopkins}\ and\ \citenamefont {Lin}(2022)}]{sum_square22}%
  \BibitemOpen
  \bibfield  {author} {\bibinfo {author} {\bibfnamefont {Max}\ \bibnamefont {Hopkins}}\ and\ \bibinfo {author} {\bibfnamefont {Ting-Chun}\ \bibnamefont {Lin}},\ }\href {https://arxiv.org/abs/2204.11469} {\enquote {\bibinfo {title} {Explicit lower bounds against $\omega(n)$-rounds of sum-of-squares},}\ } (\bibinfo {year} {2022}),\ \Eprint {http://arxiv.org/abs/2204.11469} {arXiv:2204.11469 [cs.CC]} \BibitemShut {NoStop}%
\bibitem [{\citenamefont {Campbell}(2019)}]{singleshot_Campbell_2019}%
  \BibitemOpen
  \bibfield  {author} {\bibinfo {author} {\bibfnamefont {Earl~T}\ \bibnamefont {Campbell}},\ }\bibfield  {title} {\enquote {\bibinfo {title} {A theory of single-shot error correction for adversarial noise},}\ }\href {\doibase 10.1088/2058-9565/aafc8f} {\bibfield  {journal} {\bibinfo  {journal} {Quantum Science and Technology}\ }\textbf {\bibinfo {volume} {4}},\ \bibinfo {pages} {025006} (\bibinfo {year} {2019})}\BibitemShut {NoStop}%
\bibitem [{\citenamefont {Richardson}\ and\ \citenamefont {Urbanke}(2008)}]{ModernCodingTheory}%
  \BibitemOpen
  \bibfield  {author} {\bibinfo {author} {\bibfnamefont {Tom}\ \bibnamefont {Richardson}}\ and\ \bibinfo {author} {\bibfnamefont {Ruediger}\ \bibnamefont {Urbanke}},\ }\href@noop {} {\emph {\bibinfo {title} {Modern coding theory}}}\ (\bibinfo  {publisher} {Cambridge University Press},\ \bibinfo {year} {2008})\BibitemShut {NoStop}%
\end{thebibliography}%

\end{document}